\documentclass{article}
\usepackage{graphicx}
\usepackage{caption}
\usepackage{amssymb}
\usepackage{amsmath}
\usepackage{subfigure}
\usepackage{authblk}
\usepackage{amsthm}

\newcommand{\deleted}[1]{}
\newcommand{\kleq}{({{\leq}k})}
\newcommand{\rleq}{({{\leq}r})}
\newtheorem{definition}{Definition}
\newtheorem{theorem}{Theorem}[section]
\newtheorem{lemma}[theorem]{Lemma}
\newtheorem{corollary}[theorem]{Corollary}
\newtheorem{claim}[theorem]{Claim}

\author[1]{Evanthia Papadopoulou}
\author[1]{Maksym Zavershynskyi}
\affil[1]{Faculty of Informatics, Universit\`{a} della Svizzera italiana, Lugano, Switzerland \{evanthia.papadopoulou,maksym.zavershynskyi\}@usi.ch}

\begin{document}
\title{The Higher-Order Voronoi Diagram of\\ Line Segments
}

\date{}

\maketitle

\begin{abstract}
  \let\thefootnote\relax\footnotetext{A preliminary version appeared in \emph{Proc.  $23^{rd}$ International Symposium of Algorithms and Computation} (ISAAC) 2012, LNCS 7676: 177-186.\\ Supported in part by the Swiss National Science Foundation grant 200021-127137  and the ESF EUROCORES program EuroGIGA/VORONOI, SNF 20GG21-134355.}
%  Surprisingly, the order-$k$ Voronoi diagram of line segments had not received any attention in the computational-geometry literature. 
 Surprisingly, the order-$k$ Voronoi diagram of line segments had received no attention in the computational-geometry literature. 
%However, it 
It illustrates properties surprisingly different from its counterpart for points; 
for example, a single order-$k$ Voronoi region may consist of $\Omega(n)$ disjoint faces.
We analyze the structural properties of this diagram and show that its combinatorial complexity for $n$ non-crossing line segments is $O(k(n-k))$, despite the disconnected regions.
  The same bound holds for $n$ intersecting line segments, when $k\geq n/2$.
  We also consider the order-$k$ Voronoi diagram of line segments that form a planar straight-line graph, and augment the definition of an order-$k$ Voronoi diagram to cover non-disjoint sites, addressing the issue of non-uniqueness for $k$-nearest sites. 
 Furthermore, we enhance the iterative approach to construct this diagram.
% indicating complications due to the presence of disconnected regions. 
  All bounds are valid in the general $L_p$ metric, $1\leq p\leq \infty$. 
  For non-crossing segments in the $L_\infty$ and $L_1$ metrics, we show a tighter $O((n-k)^2)$ bound for $k>n/2$.

  \paragraph{Keywords} computational geometry, Voronoi diagram, line segments, planar straight line graph, order-$k$ Voronoi diagram, $k$ nearest neighbors, $L_p$ metric
\end{abstract}

\section{Introduction}
Given a set of $n$ simple geometric objects in the plane, called sites, the \emph{order-$k$ Voronoi diagram} of $S$ is a partitioning of the plane into regions, such that every point within a given order-$k$ region has the same $k$ nearest sites.  
For $k=1$, we derive the classic \emph{nearest-neighbor Voronoi diagram}, and for $k=n-1$, the \emph{farthest-site Voronoi diagram}.  

For point-sites, the order-$k$ Voronoi diagram has been studied extensively in the computational geometry literature, see e.g.,~\cite{Agarwal98,Aurenhammer92,Chan00,Chazelle87,Edelsbrunner87,Lee82,Liu11,Ramos99} and~\cite{Sack00} for a survey.  
Surprisingly,  it has been largely ignored for any other type of site.
This is the case even for simple line segments, which nevertheless, play a fundamental role in applications involving polygonal objects in the plane.  
See e.g.,~\cite{Papadopoulou11} for applications of higher-order line-segment Voronoi diagrams in deriving the \emph{Probability of Fault} in a VLSI design under \emph{random manufacturing defects}. 
Even the farthest line-segment Voronoi diagram ($k=n-1$) has only recently been considered by Aurenhammer et al.~\cite{Aurenhammer06}, showing properties surprisingly different from its counterpart for points.
Only a few additional types of farthest Voronoi diagrams for generalized sites have been considered in the literature, see e.g.,~\cite{Cheong11,Papadopoulou12,Rappaport89}, and the \emph{farthest abstract Voronoi diagram}~\cite{Mehlhorn01}. 
To the best of our knowledge, no prior work exists on order-$k$ Voronoi diagrams, $1<k<n-1$, for generalized sites, other than points and additively weighted points~\cite{Rosenberger91}.

In this paper we investigate combinatorial properties and the basic iterative construction for the order-$k$ Voronoi diagram of line segments.
We first establish  complexity results for disjoint line segments, and then extend our investigation to line segments forming a \emph{planar straight-line graph} (PSLG,  in short) and arbitrary line segments that may  intersect.
Although a single order-$k$ Voronoi region may disconnect into $\Omega(n)$ disjoint faces,  we show that the combinatorial complexity of the diagram for $n$ non-crossing line segments remains $O(k(n-k))$, as in the case of points where no disconnected regions exist.
In addition, the union of all faces affiliated with a segment $s$ forms a connected region, which is \emph{weakly star-shaped} with respect to $s$.
The case of a PSLG is particularly interesting. 
On one hand, this is important for applications involving polygonal objects in the plane or embedded planar graphs (see e.g.,~\cite{Papadopoulou11}), and on the other, it introduces new requirements for the definition of an order-$k$ Voronoi diagram.
Segments in a planar straight-line graph are not disjoint, thus, the standard definition of an order-$k$ Voronoi diagram for disjoint sites is not sufficient, because areas may exist, which are equidistant from multiple sites whose number is independent of $k$.
We augment the standard definition of an order-$k$ Voronoi diagram to include sites that are not disjoint, such as segments forming a planar straight-line graph, resolving the issue of non-uniqueness for $k$-nearest sites. 
For arbitrary line segments that may intersect, we show that intersections only affect the asymptotic complexity of the diagram for small $k, k<n/2$.
For $k\geq n/2$, the asymptotic complexity of the diagram is independent of the number of intersections, i.e., it remains $O(k(n-k))$.
We also extend our results to the general $L_p$ metric, $1\leq p\leq \infty$, and show  a tighter bound of $O((n-k)^2)$ for $k>n/2$ and non-intersecting line segments.
To construct the diagram we revisit the standard iterative construction within the standard time complexity and discuss some interesting problems due to the presence of disconnected regions.
A plane sweep approach is given in~\cite{Zavershynskyi13}.

In a subsequent companion paper~\cite{Bohler13}, we generalize the combinatorial results for disjoint line segments to  \emph{higher-order abstract Voronoi diagrams} and refine their complexity bound to $2k(n-k)$.
Non-disjoint line segments, such as line segments forming a PSLG  and intersecting line segments addressed in this paper, do not fall under the umbrella of  abstract Voronoi diagrams in our companion paper, as their bisectors do not comply with the  axioms of the underlying system of bisectors. 

This paper is organized as follows.
Preliminaries and definitions are given in Section~\ref{section:preliminaries}. 
In Section~\ref{section:disconnected}, we show the presence of disconnected regions, where a single region can disconnect to $\Omega(n)$ faces. 
In Section~\ref{section:structural_complexity}, we establish the structural complexity of the order-$k$ Voronoi diagram of disjoint line segments.
These combinatorial results are extended to intersecting line segments in Section~\ref{section:intersecting_segments}, and to the general $L_p$ metric, $1\leq p\leq \infty$,  in Section~\ref{section:lp}.
In Section~\ref{section:abutting_segments}, we consider the order-$k$ Voronoi diagram of line segments forming a planar straight-line graph and augment the definition of an order-$k$ Voronoi diagram to cope with non-disjoint sites. 
In Section~\ref{section:algorithms}, we enhance the iterative construction to construct the order-$k$ line segment Voronoi diagram and in Section~\ref{section:conclusion} we conclude.

\section{Preliminaries}\label{section:preliminaries}

Let $S=\{s_1,s_2,\ldots,s_n\}$ be a set of $n$ line segments in $\mathbb{R}^2$. 
Line segments are assumed disjoint in Sections~\ref{section:preliminaries}--\ref{section:structural_complexity}, but they may touch at endpoints or intersect in subsequet sections. 
Unless stated otherwise, we make the \emph{general position assumption} (applicable to disjoint line segments) that no more than three sites touch the same circle and no more than two endpoints lie on the same line.

The distance between a point $p$ and a line segment $s$ is measured as the minimum Euclidean distance between $p$ and any point on $s$, $d(p,s)=\min_{q\in s}d(p,q)$, where $d(p,q)$ is the Euclidean distance between two points $p$, $q$.
The bisector of two segments $s_i$ and $s_j$ is the locus of points equidistant from both segments, i.e., $b(s_i,s_j)=\{x~|~d(x,s_i)=d(x,s_j)\}$. 
The bisector of two disjoint line segments is a curve which consists of a constant number of line segments, rays, and parabolic arcs.  

Let $H \subset S$. 
The generalized Voronoi region of $H$, $V(H,S)$, is the locus of points that are closer to all segments in $H$ than to any segment not in $H$:
\begin{equation}
  V(H,S)=\{x~|~\forall s\in H, \forall t\in S\setminus H~d(x,s)<d(x,t)\} \label{eq:region}
\end{equation}
For $|H|=k$, $V(H,S)$ is the order-$k$ Voronoi region of $H$, denoted $V_k(H,S)$.
\begin{equation}
  V_k(H,S)=V(H,S)\mbox{~for $|H|=k$}  \label{eq:kregion}
\end{equation}

The partitioning of the plane into order-$k$ Voronoi regions gives the order-$k$ Voronoi diagram of $S$, $\mathcal{V}_k(S)$.
Note that an order-$k$ Voronoi region is only defined for $|H|=k$.
A maximal interior-connected subset of a region is called a face.
For $k=n-1$ we have the farthest Voronoi diagram of $S$, denoted as $\mathcal{V}_f(S)$.
A farthest Voronoi region of a segment $s \in S$ is $V_f(s,S)=V_{n-1}(S\setminus\{s\},S)$.
Fig.~\ref{fig:2-order_segments_disc} illustrates an example of an order-2 Voronoi diagram of line segments.

The following lemma is a simple generalization of~\cite{Aurenhammer06} for $1\leq k\leq n-1$.

\begin{lemma}\label{lemma:supp}
  Consider a face $F$ of region $V_k(H,S)$.
  $F$ is unbounded (in the direction $r$) iff there exists an open halfplane (normal to $r$) that intersects all segments in $H$ but no segment in $S\setminus H$. 
\end{lemma}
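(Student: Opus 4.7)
The plan is to translate the halfplane hypothesis into an inequality on the support functions of the segments, and then read off both implications from the first-order asymptotics of $d(x,s)$ as $x$ recedes to infinity in direction $r$. For a unit vector $r$ and a segment $s$, set $\pi_r(s)=\max_{p\in s}\langle p,r\rangle$. The open halfplane $\{y:\langle y,r\rangle>c\}$ meets $s$ iff $\pi_r(s)>c$ and avoids $s$ iff $\pi_r(s)\leq c$, so the halfplane hypothesis of the lemma is equivalent to the single inequality
\[
\min_{s\in H}\pi_r(s)\;>\;\max_{t\in S\setminus H}\pi_r(t).
\]
A routine expansion of $\|x-p\|^2$ and minimisation over $p\in s$ yields
\[
d(x,s)\;=\;\|x\|-\pi_{x/\|x\|}(s)+O(1/\|x\|),
\]
so whenever $x_n\to\infty$ with $x_n/\|x_n\|\to r$ one has $d(x_n,s)-d(x_n,t)\to \pi_r(t)-\pi_r(s)$ for every pair of segments.

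Both implications then fall out by comparing signs. For $(\Leftarrow)$, I would fix any base point $p_0$ and examine the ray $\{p_0+Tr:T\geq 0\}$: the strict halfplane inequality together with the asymptotic gives $d(p_0+Tr,s)<d(p_0+Tr,t)$ for every $s\in H$ and $t\in S\setminus H$ once $T$ exceeds the maximum of the (finitely many) crossover thresholds. Thus an infinite tail of the ray lies inside the open region $V_k(H,S)$ and, being connected, is contained in a single face which is consequently unbounded in direction $r$. For $(\Rightarrow)$, I would pick $x_n\in F$ with $\|x_n\|\to\infty$ and $x_n/\|x_n\|\to r$; since $x_n\in V_k(H,S)$ the inequality $d(x_n,s)<d(x_n,t)$ holds for every such pair, and passing to the limit gives $\pi_r(s)\geq\pi_r(t)$.

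The main obstacle is upgrading this weak inequality to the strict one that an \emph{open} halfplane requires. A tie $\pi_r(s)=\pi_r(t)$ would force the two projection-maximising points to lie on a common line normal to $r$, and under the paper's general position assumption (no more than two endpoints collinear) this can occur for only a discrete set of directions $r$. Since $F$ is two-dimensional, its cone of recession directions is a genuine $2$-dimensional cone, so generic $r$ in its relative interior avoid the exceptional tie-directions; at such $r$ the strict inequality holds, and the open halfplane is produced by choosing any threshold $c$ strictly between $\max_{t\in S\setminus H}\pi_r(t)$ and $\min_{s\in H}\pi_r(s)$.
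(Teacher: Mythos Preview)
Your support-function approach is a clean analytic alternative to the paper's geometric argument, which instead follows an open disk centred on a ray inside $F$ and lets it degenerate to a halfplane in the limit; the $(\Leftarrow)$ direction is correct in either framework.

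The gap is in your $(\Rightarrow)$ upgrade from the weak inequality to the strict one. First, the assertion that the recession cone of $F$ is two-dimensional is neither proved nor clearly true: an unbounded face bordered at infinity by two asymptotically parallel Voronoi edges---which occurs whenever the two relevant endpoint pairs span parallel lines, a configuration the paper's general-position hypothesis does \emph{not} exclude---has a one-dimensional recession cone. Second, and more fundamentally, even granting a full-dimensional cone, perturbing to a nearby generic $r'$ yields a halfplane normal to $r'$, not to the given $r$; so the implication for the stated direction remains unestablished. Indeed, your own analysis shows that at a tie direction with $\min_{s\in H}\pi_r(s)=\max_{t\in S\setminus H}\pi_r(t)$ there is \emph{no} open halfplane normal to that $r$ separating $H$ from $S\setminus H$, so the literal ``for this $r$'' reading of the lemma can actually fail there. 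The paper's disk-limiting argument glosses over the same strictness issue, simply asserting that the limit halfplane inherits the separation; what both arguments really deliver is the existential form of the lemma (some direction admits such a halfplane), which is all that is used downstream.
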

\begin{proof}
  ($\Rightarrow$)
  Let $F$ be an unbounded face of region $V_k(H,S)$.
  Let $x\in V_k(H,S)$, and let $r$ be a ray emanating from $x$ to an unbounded direction of the face.
  Since $x\in V_k(H,S)$, $x$ is the center of the open disk that intersects all segments in $H$ and does not intersect segments in $S\setminus H$.
  While we move $x$ along $r$ towards infinity, the disk expands until it becomes an open halfplane that intersects all segments in $H$ but no segment in $S\setminus H$.
  Thus, such a halfplane exists.

  ($\Leftarrow$)
  Let $h$ be an open halfplane that intersects all segments in $H$ but no segment in $S\setminus H$.
  Let $h'$ be the open halfplane $h$ translated parallel to itself until one of the segments $s\in H$ stops intersecting $h$.
  At this moment, $s$ touches the boundary of $h'$ at some point $x$.
  Consider the ray $r$ in $h$ emanating from $x$ orthogonal to the boundary of $h$.
  Let $D$ be a disk centered at an arbitrary point $y$ on $r$, which intersects all segments in $H$.
  Then, $D\subset h$, which means that $D$ does not intersect any segment in $S\setminus H$.
  Therefore, $y\in V_k(H,S)$.
  Since the point $y\in r$ was taken arbitrarily, the ray $r$ is entirely enclosed in $V_k(H,S)$, i.e., the Voronoi region is unbounded in this direction. 
\end{proof}

\begin{definition}
  \label{def:supporting-halfplane}
  A \emph{supporting halfplane} of segments $s_1,s_2 \in S$ and $H \subseteq S$, where $s_1, s_2 \not \in H$, is an open halfplane $h$ whose boundary passes through endpoints of $s_1,s_2$ (at least one endpoint of each segment), with the property that $h$ intersects all segments in $H$ but no segment in $S \setminus H$.
\end{definition}

\begin{corollary} \label{cor:unb_edge}
  (of Lemma~\ref{lemma:supp}) There is an unbounded Voronoi edge separating regions $V_k(H\cup\{s_1\},S)$ and $V_k(H\cup\{s_2\},S)$ if and only if there is a halfplane supporting $s_1$, $s_2$, and $H$.
\end{corollary}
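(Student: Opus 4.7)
The plan is to import the disk-to-halfplane limiting picture underlying Lemma~\ref{lemma:supp}. For any $x$ with $d(x,s_1)=d(x,s_2)=\rho$, let $D_x$ denote the open disk of radius $\rho$ centered at $x$; its closure is tangent to $s_1$ and $s_2$, and $x$ lies on the edge separating $V_k(H\cup\{s_1\},S)$ and $V_k(H\cup\{s_2\},S)$ precisely when $D_x$ meets every segment of $H$ and avoids every segment of $S\setminus(H\cup\{s_1,s_2\})$. Each direction of the corollary will then follow by reading this equivalence in a limit.

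For ($\Rightarrow$) I would let $x$ recede to infinity along the unbounded portion of the edge. Exactly as in Lemma~\ref{lemma:supp}, $D_x$ grows without bound and converges to an open halfplane $h$. The tangency of $D_x$ to the bounded segment $s_i$ forces the contact point to converge to an endpoint of $s_i$ (or else $s_i$ lies on $\partial h$, in which case both endpoints of $s_i$ are on $\partial h$), so $\partial h$ contains an endpoint of each of $s_1,s_2$. The inclusion and avoidance properties of $D_x$ transfer to $h$ in the limit, so $h$ is a supporting halfplane in the sense of Definition~\ref{def:supporting-halfplane}. For ($\Leftarrow$), given a supporting halfplane $h$ with $p_i\in s_i\cap\partial h$, I would send $y$ to infinity along the perpendicular bisector of $p_1p_2$ into $h$ and verify that (i) $p_1,p_2$ remain the closest points of $s_1,s_2$ to $y$, so $y\in b(s_1,s_2)$, and (ii) the disk $D_y$, which approximates $h$ on every compact set, meets every segment of $H$ and avoids every segment of $S\setminus(H\cup\{s_1,s_2\})$. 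These two items place $y$ on the desired edge, and since arbitrarily distant $y$ work, the edge is unbounded.

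The main obstacle is the narrow circular cap of $D_y$ that bulges outside $h$ across the chord $\overline{p_1p_2}$ in the converse direction. As $y$ recedes, this cap shrinks to the chord in Hausdorff distance, so only segments incident to $p_1$ or $p_2$ can persistently meet it; by the general position assumption (at most two endpoints per line) the only such segments are $s_1$ and $s_2$, touching the cap exactly at $p_1,p_2$. This simultaneously confirms that $p_1,p_2$ are the actual nearest points of $s_1,s_2$ to $y$ and that no other segment of $S\setminus H$ intrudes into $D_y$. Once this cap control is in place, both directions reduce to the same halfplane-as-limit-of-disks principle driving Lemma~\ref{lemma:supp}.
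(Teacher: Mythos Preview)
Your proposal is correct and follows the same disk-to-halfplane limiting principle that underlies Lemma~\ref{lemma:supp}; the paper in fact gives no separate proof for this corollary, treating it as immediate. Your $(\Leftarrow)$ direction sends $y$ along the perpendicular bisector of $p_1p_2$ (so as to stay on $b(s_1,s_2)$) rather than along the ray from a tangent point as in Lemma~\ref{lemma:supp}, but this is the natural adaptation for an edge rather than a region, and your cap-shrinking argument together with the general-position assumption correctly handles the one genuine subtlety.
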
 

\section{Disconnected Regions}\label{section:disconnected}
For line segments, a single order-$k$ Voronoi region  may be disconnected and it may consist of multiple disjoint faces, unlike its counterpart for  points.
For example  in Fig.~\ref{fig:2-order_segments_disc}, the  order-2 Voronoi region of the pair of line segments shown in bold, consists of two faces, which are shown shaded. 
This phenomenon was first  pointed out by Aurenhammer et al~\cite{Aurenhammer06} for the farthest line segment Voronoi diagram, where a single Voronoi region was shown possible to disconnect into $\Theta(n)$ faces in the worst case.
\begin{figure}
  \begin{center}
    \includegraphics[width=2.5in, clip, trim = 0mm 10mm 0mm 22mm]{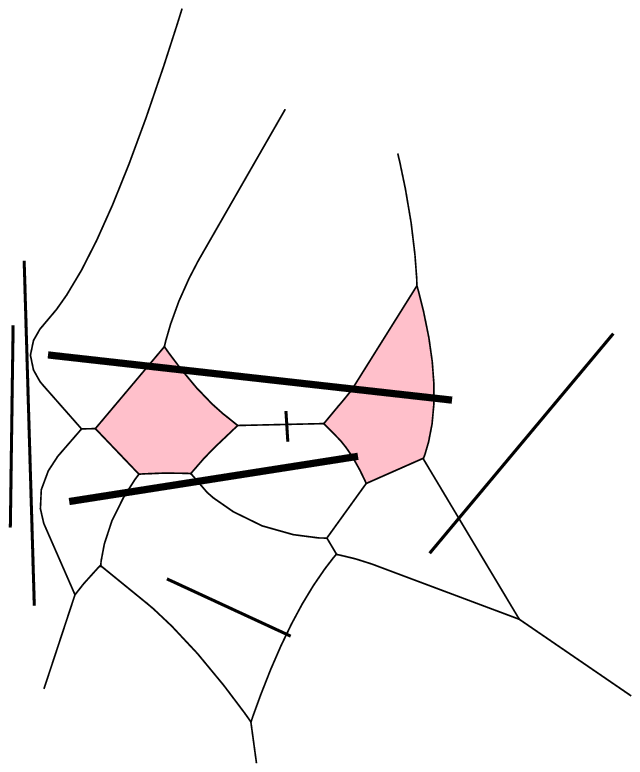}
  \end{center}
  \vspace{-6mm}
  \caption{Order-2 Voronoi diagram with a region that breaks into two disconnected faces, induced by the same pair of sites.}
  \label{fig:2-order_segments_disc}
\end{figure}
\begin{lemma}
  For $k>1$, an order-$k$ region of $\mathcal{V}_k(S)$ can have $\Omega(n)$ disconnected faces in the worst case.
\end{lemma}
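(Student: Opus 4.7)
I would prove the bound via two explicit constructions, handling complementary ranges of $k$.

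For $1 < k \leq n/2$, I would place $k$ long horizontal segments $s_1,\ldots,s_k$ at distinct $y$-coordinates inside a narrow horizontal band, together with $n-k$ short ``obstacle'' segments $t_1,\ldots,t_{n-k}$ at widely spaced $x$-coordinates inside the band. At a point $p$ located midway (in $x$) between two consecutive obstacles $t_i,t_{i+1}$ and at a $y$-level inside the band, every $s_j$ is at bounded distance (the band height), while every obstacle is at distance at least half the obstacle spacing; by choosing the spacing larger than twice the band height, every $s_j$ is strictly closer to $p$ than every $t_i$, so $p \in V_k(\{s_1,\ldots,s_k\},S)$. A case analysis along the lines of Figure~\ref{fig:2-order_segments_disc} shows that the $\Omega(n-k)$ midway regions are genuinely disjoint ``islands'' of $V_k(\{s_1,\ldots,s_k\},S)$: any path between two islands through the band is blocked by an obstacle's influence zone (where the obstacle enters the $k$-nearest set), while any detour above or below the band is blocked once adjacent obstacle zones merge, which happens at a bounded $y$-magnitude controlled by the obstacle spacing. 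Thus $V_k(\{s_1,\ldots,s_k\},S)$ contains $\Omega(n-k) = \Omega(n)$ disjoint faces.

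For $n/2 < k \leq n-1$, I would adapt the Aurenhammer et al.~\cite{Aurenhammer06} construction for the farthest Voronoi diagram. Take an Aurenhammer-type arrangement of $m = k+1$ segments $T = \{s_0,t_1,\ldots,t_{m-1}\}$ realizing $\Omega(m) = \Omega(k)$ disjoint faces in $V_f(s_0,T)$. Augment with $n-k-1$ ``dummy'' segments in a tight cluster placed very far from $T$, with cluster diameter much smaller than its distance to $T$, and this distance much larger than the diameter of $T$. Scale separation ensures that, at every point $p$ in a bounded action area around $T$, the dummies together occupy the $n-k-1$ farthest ranks, and the rank ordering within $T$ is preserved. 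Consequently, setting $H = T \setminus \{s_0\}$, the order-$k$ region $V_k(H,S)$ inherits every face of $V_f(s_0,T)$ as a disjoint face of its own, yielding $\Omega(k) = \Omega(n)$ disjoint faces. The boundary case $k = n-1$ (no dummies) is exactly the original Aurenhammer construction.

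The main technical obstacle in both constructions is to rule out unexpected connecting paths between the intended faces. In the first construction this is a geometric argument about how far up and down the obstacle influence zones extend before merging, which is controlled by the band-height to obstacle-spacing ratio. In the second construction it reduces to verifying that no bisector involving a dummy intrudes into the Aurenhammer action area to split or merge faces; this is handled by the chosen scale separation and can be phrased, if desired, via the supporting-halfplane criterion of Lemma~\ref{lemma:supp} to characterize the unbounded directions of each resulting face.
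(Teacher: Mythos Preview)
Your first construction (long horizontal segments in a band, with $n-k$ obstacles creating $\Omega(n-k)$ islands) is essentially the paper's first construction.

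Your second construction differs from the paper's. The paper places \emph{all} $n-k$ segments of $S\setminus H$ as degenerate points clustered together, with the $k$ segments of $H$ arranged cyclically around them, and argues directly via a rotating supporting-halfplane argument that $V_k(H,S)$ has $k$ unbounded faces. You instead take the Aurenhammer construction on only $k+1$ segments $T$ and pad with $n-k-1$ far-away dummies, reducing to the known farthest-diagram result. This reduction is valid and more modular (it uses~\cite{Aurenhammer06} as a black box); the paper's direct construction avoids the reduction step but re-runs the rotating-halfplane analysis.

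There is, however, a gap in your disjointness argument for the second construction. Inside a large ball $B$ around $T$ you correctly have $V_k(H,S)\cap B = V_f(s_0,T)\cap B$, but the Aurenhammer faces are unbounded and extend beyond any such $B$, so the bounded-action-area argument by itself does not prevent reconnection outside $B$. The supporting-halfplane criterion does not save you either: several of the supporting halfplanes from the Aurenhammer configuration will contain the far-away dummy cluster, so the corresponding faces of $V_k(H,S)$ need not be unbounded at all. The clean fix is the global containment $V_k(H,S)\subseteq V_f(s_0,T)$: if the $k$ nearest sites to $p$ in $S$ are exactly $H=T\setminus\{s_0\}$, then in particular $s_0$ is farther from $p$ than every segment of $H$, hence $p\in V_f(s_0,T)$. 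Since $V_k(H,S)$ lies inside $V_f(s_0,T)$ and (by the action-area argument, with $B$ large enough to meet every face of $V_f(s_0,T)$) meets each of its $\Omega(k)$ components, $V_k(H,S)$ has at least $\Omega(k)$ faces. With this one-line containment your argument goes through.
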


\begin{proof}
  We first describe an example where an order-$k$ Voronoi region is disconnected into $n{-}k{-}1$ bounded and two unbounded faces. 
  Consider a set $H$ of $k$ almost parallel long segments.
  These segments induce a region $V_k(H,S)$. 
  Consider the minimum disk that intersects all segments in $H$, and moves along their length. 
  We place the remaining $n{-}k$ segments of $S\setminus H$ in such a way that they create obstacles for the disk. 
  While the disk moves along the tree of $\mathcal{V}_f(H)$, it intersects the segments of $S\setminus H$ one by one and creates $\Omega(n-k)$ disconnectivities (see Fig.~\ref{fig:disc_segm01}).  
  In particular, $V_k(H,S)$ has $n{-}k{-}1$ bounded and two unbounded faces.
  \begin{figure}
    \begin{center}
      \includegraphics[width=4 in, clip, trim = 0mm 0mm 0mm 0mm]{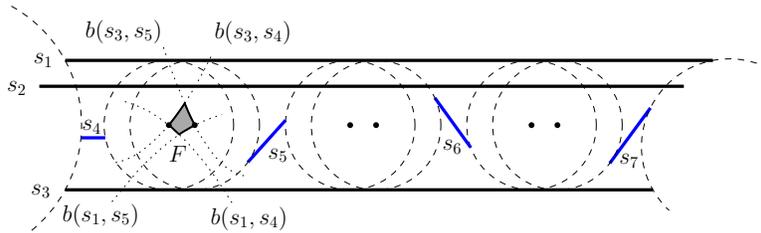}
    \end{center}
    \caption{The obstacles in between the long segments $H$ induce $n-k+1$ disconnectivities in the region of $V_3(H,S)$, $H=\{s_1,s_2,s_3\}$, $k=3$.
  The face $F\subset V_3(H,S)$ is enclosed in between bisectors $b(s_3,s_5)$, $b(s_3,s_4)$, $b(s_1,s_5)$ and $b(s_1,s_4)$.}
  \label{fig:disc_segm01}
\end{figure}

We now follow~\cite{Aurenhammer06} and describe an example in which an order-$k$ Voronoi region is disconnected into $k$ unbounded faces. 
Consider $n-k$ segments in $S\setminus H$, degenerated into points and placed close to each other.
The remaining  $k$ non-degenerate segments in $H$ are organized in a cyclic fashion around them (see Fig.~\ref{fig:disc_segm02}). 
Consider a directed line $g$ through one of the degenerate segments $s$. 
Rotate $g$ around $s$ and consider the open halfplane to the left of $g$. 
During the rotation, the positions of $g$, in which the halfplane intersects all  $k$ segments, alternate with the positions in which it does not (see Fig.~\ref{fig:disc_segm02}(a)).
The positions at which the halfplane touches endpoints of non degenerate segments correspond to unbounded Voronoi edges, such as $g(s_5,s_3)$ and $g(s_5,s_4)$ in Fig.~\ref{fig:disc_segm02}(b), that define an unbounded Voronoi face of $V_k(H,S)$.
Each pair of consecutive unbounded Voronoi edges bounds a distinct unbounded face. 
Each unbounded edge corresponds to a halfplane that touches an endpoint of a line segment in $H$.
Thus, the number of unbounded faces of $V_k(H,S)$ is $|H|=k$. 

\begin{figure}
  \begin{center}
    \subfigure[]{\includegraphics[width=2.3 in]{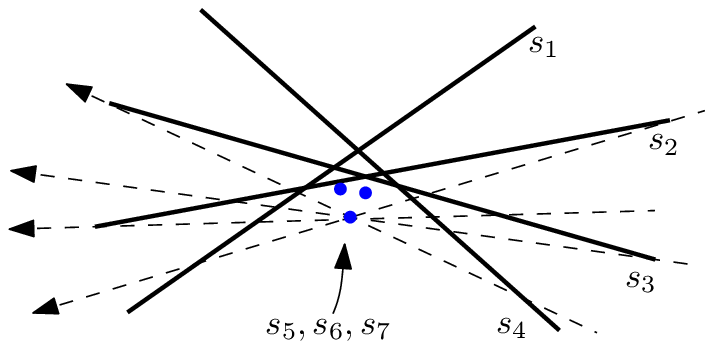}}
    \subfigure[]{\includegraphics[width=3.5 in]{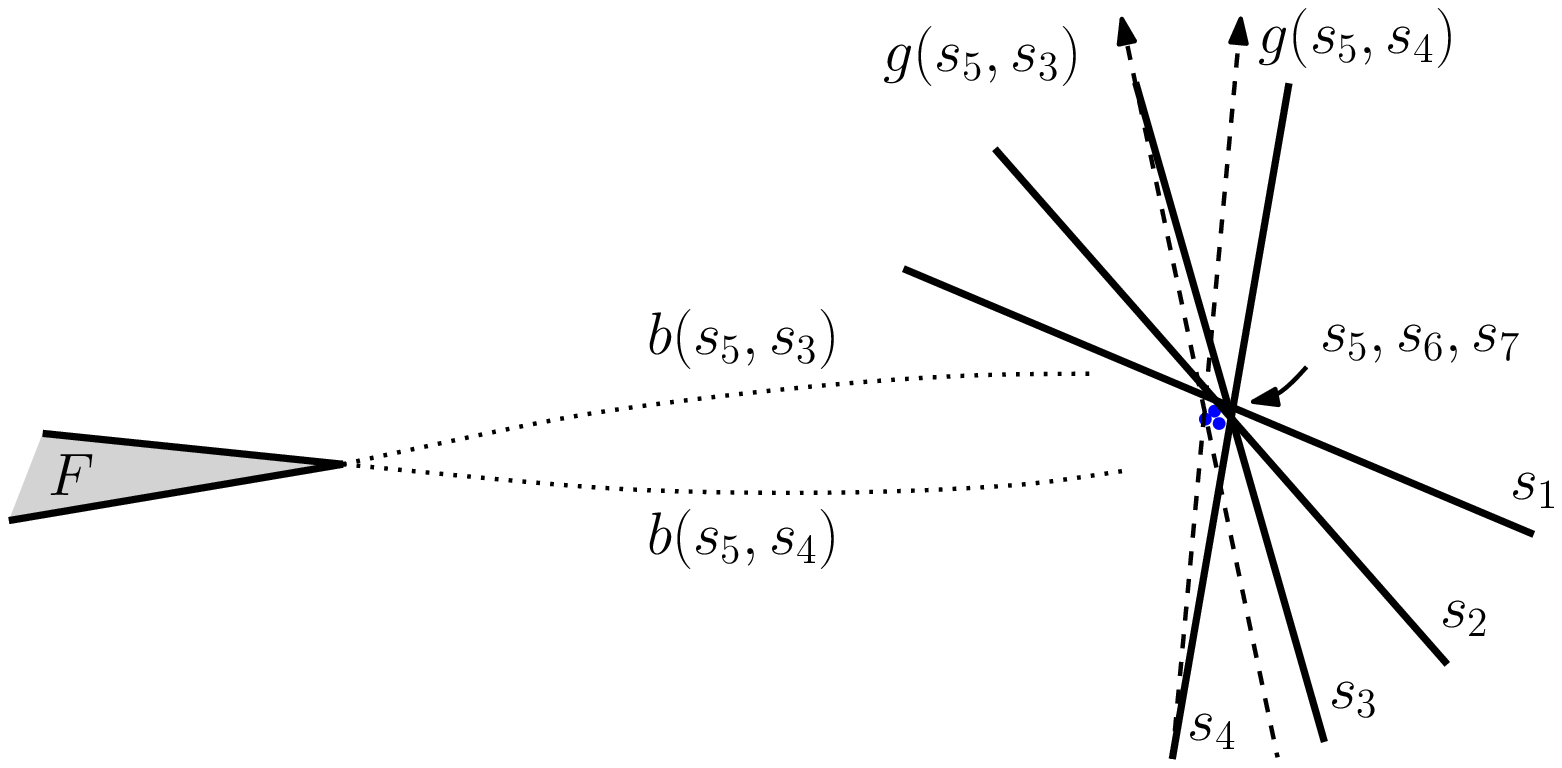}}
  \end{center}
  \caption{(a) During the rotation of the directed line, the positions in which the open halfplane to the left of it intersects all non-degenerate segments, alternate with the positions in which it does not; (b) The alternations produce bisectors that bound distinct unbounded faces of the region $V_4(H,S)$.}
  \label{fig:disc_segm02}
\end{figure}
For small $k$, $1<k<n/2$, the 
number of faces in the first example ($n-k+1$) is $\Omega(n)$, while for  large $k$, $n/2\leq k\leq n-1$, the number of faces in the second example $k$ is also $\Omega(n)$. 
\end{proof}

It may seem as if disconnected regions are present because of the crossings between segments, however, this is not the case.
In the example of Fig.~\ref{fig:disc_segm02}, we can untangle the segments to form a non-crossing configuration, while the same phenomena remain. 
Consider a segment  $s\in H$ whose endpoints define two supporting halfplanes.
We can move the endpoints of $s$ along the boundaries of the halfplanes away from the rest of the line segments in $H$, and untangle all line segments in $H$, while maintaining the same halfplanes that define the corresponding unbounded Voronoi edges.
For $k=n-1$, this was illustrated in~\cite{Aurenhammer06}.

\begin{lemma}
  An order-$k$ region $V_k(H,S)$ has $O(k)$ unbounded disconnected faces.
\end{lemma}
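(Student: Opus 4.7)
The plan is to use Lemma~\ref{lemma:supp} to turn the count of unbounded faces of $V_k(H,S)$ into a combinatorial count on the circle of directions, then argue that only $O(k)$ maximal arcs of ``valid'' directions can arise because validity decomposes into one contribution per segment of $H$.

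First I would observe that different unbounded faces of $V_k(H,S)$ sit over disjoint arcs of $S^1$: a point far enough in a given direction $r$ lies in a uniquely determined face of $V_k(H,S)$. So by Lemma~\ref{lemma:supp} it suffices to bound the number of maximal arcs of $r$ for which there exists an open halfplane normal to $r$ intersecting all of $H$ and avoiding $S\setminus H$. Writing such a halfplane as $\{x : x\cdot r < t\}$, it avoids $S\setminus H$ exactly when $t\le \mu(r):=\min_{p\in E_{\setminus H}}p\cdot r$ (where $E_{\setminus H}$ denotes the set of endpoints of segments in $S\setminus H$), and it intersects every $s\in H$ exactly when each such $s$ has an endpoint $p$ with $p\cdot r<t$. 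Hence a valid $t$ exists iff every $s\in H$ has some endpoint strictly on the far side of the supporting line of $\mathrm{conv}(E_{\setminus H})$ in direction $r$.

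Next, for each endpoint $p$ of a segment in $H$, I would set $V_p:=\{r\in S^1:p\cdot r<\mu(r)\}=\bigcap_{q\in E_{\setminus H}}\{r:(p-q)\cdot r<0\}$. This is an intersection of open semicircles of $S^1$, hence either empty (when $p\in \mathrm{conv}(E_{\setminus H})$) or a single arc, bounded by the directions normal to the two tangent lines from $p$ to $\mathrm{conv}(E_{\setminus H})$. For each segment $s\in H$ with endpoints $p_s,q_s$, the set $V_s:=V_{p_s}\cup V_{q_s}$ is therefore a union of at most two arcs on $S^1$, contributing at most four boundary points. The set of valid directions is $\bigcap_{s\in H}V_s$; since every boundary point of this intersection is a boundary point of some $V_s$, the intersection has at most $4k$ boundary points on $S^1$ and hence at most $2k$ maximal arcs, yielding the desired $O(k)$ bound on unbounded faces.

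The step requiring the most care is establishing that $V_p$ is a single arc (when nonempty), which reduces to the standard convex-geometry fact that a finite intersection of open semicircles on $S^1$ is connected whenever nonempty; the remainder of the argument is then a straightforward boundary-point count.
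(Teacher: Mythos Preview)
Your proof is correct and arrives at the same $O(k)$ bound via the same underlying observation---each of the $2k$ endpoints of segments in $H$ contributes at most two ``extremal'' halfplanes---but the packaging differs from the paper's in a way worth noting.

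The paper works directly with unbounded Voronoi \emph{edges} bordering $V_k(H,S)$, invoking Corollary~\ref{cor:unb_edge}: each such edge corresponds to a supporting halfplane whose boundary line passes through an endpoint $p$ of some $s\in H$ and an endpoint of some $t\in S\setminus H$. It then argues geometrically that for a fixed endpoint $p$, a third supporting halfplane through $p$ would necessarily intersect one of the segments $t_1,t_2$ witnessing the first two, so at most two such halfplanes exist per endpoint. This gives at most $4k$ bordering unbounded edges and hence $O(k)$ unbounded faces.

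You instead parametrize by the circle of directions and invoke Lemma~\ref{lemma:supp} to reduce the face count to the number of connected components of the ``valid'' subset of $S^1$. Your key step---that $V_p=\bigcap_{q\in E_{\setminus H}}\{r:(p-q)\cdot r<0\}$ is a single open arc---is precisely the dual of the paper's ``at most two supporting halfplanes through $p$'': the two arc endpoints of $V_p$ are exactly the directions of the two tangent lines from $p$ to $\mathrm{conv}(E_{\setminus H})$. Your boundary-point count ($4k$ total, hence $\le 2k$ arcs) then replaces the paper's edge count. What your formulation buys is a cleaner reduction to elementary convexity on $S^1$ (no ad-hoc ``a third halfplane must intersect $t_1$ or $t_2$'' argument), at the cost of the extra step linking arcs of directions to unbounded faces; the paper's version is shorter because Corollary~\ref{cor:unb_edge} already gives the discrete objects (edges) to count. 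Both are valid; neither is strictly more general here.
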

\begin{proof} 
  We show that an endpoint $p$ of a segment $s\in H$ may induce at most two unbounded Voronoi edges bordering $V_k(H,S)$ (see Fig.~\ref{fig:disc_unb_faces01}).
  Consider two such unbounded Voronoi edges. 
  By Corollary~\ref{cor:unb_edge}, there are open halfplanes $h_1$, $h_2$, such that the boundary of $h_1$ and $h_2$ pass through point $p$ and the endpoints of the line segments $t_1$ and $t_2$, respectively.
  The open halfplanes $h_1$ and $h_2$  intersect all line segments in $H$ and do not intersect line segments in $S\setminus H$.
  Thus, any other supporting halfplane $h_3$, with boundary passing through point $p$ and an endpoint of some line segment $s_3\in S\setminus H$, must intersect either $t_1$ or $t_2$. 
  Since $|H|=k$ and a segment has two endpoints, the claim follows.
  \begin{figure}
    \begin{center}
      \includegraphics[width=1.6 in]{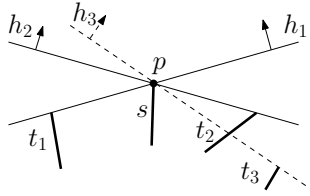}
    \end{center}
    \caption{Every endpoint of a segment $s\in H$ can induce at most two supporting halfplanes.}
    \label{fig:disc_unb_faces01}
  \end{figure}
\end{proof}

Although an order-$k$ Voronoi region may be disconnected, the union of all faces induced by a segment $s$ is a connected region which encloses $s$.
In particular, let $V_k(s,S)=\cup_{H\subset S, s\in H}\overline{V_k(H,S)}$, where $\overline{V_k(H,S)}$ denotes the closure of an order-$k$ region.
A set $X$ is said to be {\em weakly star-shaped}\/ with respect to a line segment $s$ if for every point $x\in X$ there exists a point $y\in s$, such that the line segment $xy$ is entirely enclosed in $X$.

\begin{lemma}\label{lemma:star_shaped}
  Consider the order-$k$ Voronoi diagram $\mathcal{V}_k(S)$.
  The union of all faces in $\mathcal{V}_k(S)$ affiliated with a segment $s$, $V_k(s,S)$, is weakly star-shaped with respect to $s$ ($s\in V_k(s,S)$). 
\end{lemma}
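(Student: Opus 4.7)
The plan is to connect any $x \in V_k(s,S)$ to the closest point $y$ of $s$ via the straight segment $xy$, and show this whole segment lies in $V_k(s,S)$.

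First I would fix a witness for membership: by definition of $V_k(s,S)$, there exists $H\subset S$ with $s\in H$, $|H|=k$, and $x\in\overline{V_k(H,S)}$; let $y\in s$ realize $d(x,s)=|xy|$. The basic geometric observation to establish is that $d(z,s)=|zy|$ for every point $z$ on the segment $xy$: if $y$ is interior to $s$ then $xy\perp s$, so $zy\perp s$ and $y$ remains the closest point; if $y$ is an endpoint of $s$, then the angle between $s$ and $xy$ at $y$ is at least $90^\circ$, and the same inequality holds for $zy$, so again $y$ is the closest point of $s$.

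Next I would compare the open disks $D_x$ and $D_z$ centered at $x,z$ with radii $|xy|,|zy|$ respectively. The key inclusion $D_z\subseteq D_x$ follows from the triangle inequality: for $w\in D_z$, $|wx|\leq |wz|+|zx|<|zy|+|zx|=|xy|$. Since $x\in\overline{V_k(H,S)}$, no site $t\in S\setminus H$ meets $D_x$ (otherwise $d(x,t)<d(x,s)$ would contradict the closure condition), hence none meets $D_z$, yielding $d(z,t)\geq d(z,s)$ for every $t\in S\setminus H$.

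Therefore at $z$ the site $s$ is among the $k$ nearest sites: at most the $k-1$ members of $H\setminus\{s\}$ can be strictly closer than $s$, while every site outside $H$ is at distance at least $d(z,s)$. Taking $H'$ to consist of $\{s\}$ together with all sites strictly closer to $z$ than $s$, padded by sites tied with $s$ up to cardinality $k$, gives $z\in\overline{V_k(H',S)}\subseteq V_k(s,S)$, and since $z$ was arbitrary the whole segment $xy$ sits in $V_k(s,S)$. I expect the main subtlety to be the clean handling of equalities and ties when lifting the pointwise inclusion $s\in H'$ to a concrete witness set of size exactly $k$, together with the small case analysis that confirms $y$ remains the closest point of $s$ from every intermediate $z$ in the endpoint case.
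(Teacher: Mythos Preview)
Your approach is essentially the same as the paper's: both take the nearest point $y$ on $s$, use the disk-containment observation $D_z\subseteq D_x$ along the segment $xy$ to conclude that no site outside the witness set $H$ can beat $s$ at any intermediate point $z$, and hence $s$ remains among the $k$ nearest. The paper phrases this more tersely via $D_s(a)\subseteq D_s(x)\subseteq D_k(x)$ without your explicit case analysis on $y$ or the construction of $H'$, but the argument is the same.
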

\begin{proof}
  Let $x$ be an arbitrary point in $V_k(s,S)$.
  Denote by $D_k(x)$ the minimum disk, centered at $x$, that intersects at least $k$ line segments, and by $D_s(x)$ the minimum disk, centered at $x$, that touches the line segment $s$.
  Since $x\in V_k(s,S)$, $x$ must be in one of the regions $V_k(H,S)$, where $s\in H$.
  Therefore, $D_s(x)\subseteq D_k(x)$.

  Let $y$ be a point on the line segment $s$ that is closest to $x$.
  Consider an arbitrary point $a$ on the line segment $xy$. 
  Then, $D_s(a)\subseteq D_s(x)\subseteq D_k(x)$.
  This implies that the line segment $s$ is the $i$th-closest line segment from point $a$, where $i\leq k$.
  Therefore, $a\in V_k(s,S)$.
  Since $a$ is taken arbitrarily, the entire line segment $xy$ is enclosed in $V_k(s,S)$. 
\end{proof}

\section{Structural Properties and Complexity}\label{section:structural_complexity}
In this section we show structural properties of the order-$k$ Voronoi diagram of $n$ disjoint line segments and  prove that its  combinatorial complexity is $O(k(n-k))$, despite the presence of disconnected regions. 

We first prove Theorem~\ref{lemma:lee}, which is a generalization to line segments of the formula in~\cite[Theorem~2]{Lee82}, which counts the total number of faces of $\mathcal{V}_k(S)$ as a function of $n,k$, and number of unbounded edges.
To this aim, we exploit the fact that the farthest line-segment Voronoi diagram is a tree structure~\cite{Aurenhammer06}. 
Then in Lemma~\ref{lemma:unb}, we analyze the number of unbounded edges in the order-$k$ Voronoi diagram in a dual setting by using results on arrangements of wedges~\cite{Aurenhammer06,Edelsbrunner82} and $\kleq$-level in arrangements of Jordan curves~\cite{Sharir95}. 
We derive the result by combining Theorem~\ref{lemma:lee} and Lemma~\ref{lemma:unb}.

The definition of an order-$k$  Voronoi region implies that two adjacent order-$k$ Voronoi faces must differ in exactly two sites. 
Therefore, any point on a Voronoi edge separating two faces, must be  the center of a disk that intersects $k{+}1$ and touches two line segments. 
Under the general position assumption, an order-$k$ Voronoi vertex $v$ is incident to three Voronoi edges and to three faces. 
Thus, order-$k$ Voronoi regions can be of two types~\cite{Lee82}:
(1) $V_k(H\cup\{a\},S)$, $V_k(H\cup\{b\},S)$, $V_k(H\cup\{c\},S)$; or (2) $V_k(H\cup\{a,b\},S)$, $V_k(H\cup\{b,c\},S)$, $V_k(H\cup\{c,a\},S)$.
In the first case, $|H|=k-1$ and $v$ is called a {\em new}\/ order-$k$ Voronoi vertex.
In the second case, $|H|=k-2$ and $v$ is called an {\em old}\/ order-$k$ Voronoi vertex.
In both cases, $v$ is the center of the disk whose interior intersects all the line segments in $H$, and whose boundary touches the line segments $a,b$ and $c$.
Thus, Voronoi vertices in $\mathcal{V}_k(S)$ are classified into {\em new}\/ and {\em old}. 
A new Voronoi vertex in $\mathcal{V}_k(S)$ is an old Voronoi vertex in $\mathcal{V}_{k+1}(S)$, and it appears for the first time in the  order-$k$ diagram.
Under the general position assumption, an {\em old}\/ Voronoi vertex in $\mathcal{V}_k(S)$ is a {\em new}\/ Voronoi vertex in $\mathcal{V}_{k-1}(S)$.

\begin{lemma}\label{lemma:face_tree}
  Consider a face $F$ of the region $V_{k+1}(H,S)$ ($|H|=k+1$). 
  The portion of $\mathcal{V}_k(S)$ enclosed in $F$ is exactly the portion of the farthest Voronoi diagram $\mathcal{V}_f(H)$ enclosed in $F$.
\end{lemma}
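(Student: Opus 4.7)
The plan is to exploit the observation that, within a face $F$ of the order-$(k{+}1)$ region $V_{k+1}(H,S)$, every point has exactly the same set of $k{+}1$ closest sites, namely $H$. Hence the identity of the $k$ nearest sites of any $x \in F$ is completely determined by which segment of $H$ is the \emph{farthest from $x$ among $H$}: if that farthest segment is $s^\star(x)\in H$, then the $k$ nearest sites are $H\setminus\{s^\star(x)\}$. This is precisely the data recorded by the farthest Voronoi diagram $\mathcal{V}_f(H)$, so the two subdivisions of $F$ must coincide.

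I would carry this out in two short containments. First, take any $x\in F$ lying in the interior of an order-$k$ face, say $V_k(H',S)$ with $|H'|=k$. Because $x\in V_{k+1}(H,S)$, the definition~\eqref{eq:region} forces $H'\subseteq H$ and $H\setminus H'=\{s^\star\}$ for a unique $s^\star\in H$. For every other $s\in H$ we have $d(x,s)<d(x,s^\star)$, since $s$ is among the $k$ nearest while $s^\star$ is only the $(k{+}1)$-th nearest. Hence $x$ lies in the farthest region $V_f(s^\star,H)$ of $\mathcal{V}_f(H)$. Conversely, if $x\in F\cap V_f(s^\star,H)$, then $s^\star$ is strictly farther from $x$ than any other segment of $H$, and combined with $x\in V_{k+1}(H,S)$ this shows that the $k$ nearest sites of $x$ in $S$ are exactly $H\setminus\{s^\star\}$, i.e., $x\in V_k(H\setminus\{s^\star\},S)$.

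This establishes a bijection between the order-$k$ faces of $\mathcal{V}_k(S)$ meeting the interior of $F$ and the farthest regions of $\mathcal{V}_f(H)$ meeting the interior of $F$, given by $V_k(H\setminus\{s^\star\},S)\cap F \leftrightarrow V_f(s^\star,H)\cap F$. The edges and vertices of $\mathcal{V}_k(S)$ inside $F$ are just the loci where this correspondence is ambiguous, i.e., the bisector points where two (or three) segments of $H$ tie for being farthest; these are by definition the edges and vertices of $\mathcal{V}_f(H)$ inside $F$. Therefore the two restrictions to $F$ agree as planar subdivisions.

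The only delicate point is bookkeeping at the boundary of $F$ and along the internal edges of $\mathcal{V}_k(S)\cap F$, since $V_{k+1}(H,S)$ is open and the arguments above use strict inequalities. I would handle this by performing the argument in the interior of $F$ (where strict inequalities hold and the bijection is clean) and then extend by continuity/closure, noting that both subdivisions inside $F$ are determined by the same finite collection of bisector curves $b(s_i,s_j)$ with $s_i,s_j\in H$, so they must coincide as closed cell complexes within $\overline{F}$.
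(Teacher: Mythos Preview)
Your proposal is correct and follows essentially the same approach as the paper: both arguments rest on the observation that for any $x\in F$ the $(k{+}1)$ nearest sites are exactly $H$, so the order-$k$ region containing $x$ is determined by which segment of $H$ is farthest from $x$, i.e., by $\mathcal{V}_f(H)$. Your version is somewhat more careful in spelling out both containments and the treatment of edges, vertices, and boundary closure, whereas the paper gives only the forward direction for face interiors and a one-line remark for edges.
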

\begin{proof}

  Let $x$ be a point in $F$.  
  Suppose $x$ belongs to the region $V_k(H_j,S)$ of $\mathcal{V}_k(S)$. 
  Then $H_j$ is the set consisting of the $k$ line segments closest to $x$.
  Let $\{s_j\}=H\setminus H_j$; then $s_j$ is the $k{+}1$-closest line segment to $x$.
  Therefore, $s_j$ is the line segment farthest from $x$, among all segments in $H$.
  Therefore, $x\in V_f(s_j,H)$.

  Suppose $x$ belongs to the edge separating regions $V_k(H_j,S)$ and $V_k(H_r,S)$ of $\mathcal{V}_k(S)$.
  Then we can show in a similar way that $x$ belongs to the edge separating farthest regions $V_f(s_j,H)$ and $V_f(s_r,H)$, where $\{s_j\}=H\setminus H_j$ and $\{s_r\}=H\setminus H_r$.
\end{proof}

Consider a region $V_f(s,H)$ of the farthest line-segment Voronoi diagram, $\mathcal{V}_f(H)$.
This region has the following {\em visibility property}, 
 see Fig.~\ref{fig:KxKy}.

\begin{lemma}[Visibility 
property in a farthest Voronoi region]\label{lemma:visibility_property}
  Let $x$ be a point in a farthest Voronoi region $V_f(s,H)$ of $\mathcal{V}_f(H)$.
  Let $r(s,x)$ be the ray realizing the distance $d(s,x)$, emanating from point $p\in s$ such that $d(p,x)=d(s,x)$, and extending to infinity.
  The ray $r(s,x)$ must intersect the boundary of $V_f(s,H)$ at a point $a_x$, and the unbounded portion of $r(s,x)$ beyond $a_x$ must lie entirely in $V_f(s,H)$.  
\end{lemma}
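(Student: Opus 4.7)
The plan is to exploit a monotone-containment property of disks tangent to $s$ at $p$. I parametrize the ray $r(s,x)$ as $y(\lambda)=p+\lambda\mathbf{u}$ for $\lambda\geq 0$, with $\mathbf{u}$ the unit vector from $p$ toward $x$, and introduce the closed disks $D_\lambda$ of radius $\lambda$ centered at $y(\lambda)$. Each $D_\lambda$ passes through $p$, and I first argue that $p$ remains the nearest point of $s$ to $y(\lambda)$ for every $\lambda\geq 0$, so $d(s,y(\lambda))=\lambda$ along the entire ray. If $p$ is interior to $s$, then $\mathbf{u}\perp s$ by hypothesis and a Pythagorean expansion identifies $p$ as the nearest point. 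If $p$ is an endpoint of $s$ with other endpoint $p'$, the convex-projection condition at $\lambda_x$ reads $\mathbf{u}\cdot(p'-p)\leq 0$, and this condition is independent of $\lambda$, hence carries over to every $\lambda>0$.

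Next I show that $\lambda\leq\mu$ implies $D_\lambda\subseteq D_\mu$: for $z\in D_\lambda$, the triangle inequality yields $|z-y(\mu)|\leq|z-y(\lambda)|+|y(\lambda)-y(\mu)|\leq\lambda+(\mu-\lambda)=\mu$. Since $x=y(\lambda_x)\in V_f(s,H)$, every $t\in H$ satisfies $d(t,x)\leq d(s,x)=\lambda_x$ and therefore meets $D_{\lambda_x}$. The containment just established propagates this property to every $D_\mu$ with $\mu\geq\lambda_x$, so $d(t,y(\mu))\leq\mu=d(s,y(\mu))$ for every $t\in H$, and consequently $y(\mu)\in V_f(s,H)$.

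To produce the point $a_x$, observe that $y(0)=p\in s$ has $d(s,p)=0$ while some other segment of $H$ lies at positive distance from $p$, so $p\notin V_f(s,H)$. By continuity the ray must cross the boundary of $V_f(s,H)$ at least once before reaching $x$; I define $a_x$ as the last such crossing. The preceding step guarantees that the portion of $r(s,x)$ beyond $a_x$ lies entirely in $V_f(s,H)$, which is exactly the assertion of the lemma.

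The main obstacle I anticipate is not the algebraic core of the argument but rather the careful treatment of the endpoint case: when $p$ is an endpoint of $s$, the ray is not perpendicular to $s$, and one must invoke the convex-projection characterization of nearest points to confirm that $p$ retains its nearest-point status uniformly in $\lambda$. Once this uniformity is secured, the monotone-containment chain $D_{\lambda_x}\subseteq D_\mu$ applies verbatim and delivers the visibility property without further case analysis.
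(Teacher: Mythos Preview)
Your proposal is correct and follows essentially the same approach as the paper: both arguments rest on the monotone nesting of disks tangent to $s$ at $p$ as their center slides along $r(s,x)$. Your version is somewhat more explicit---you parametrize the ray, prove the containment $D_\lambda\subseteq D_\mu$ via the triangle inequality, and carefully verify that $p$ remains the nearest point of $s$ even in the endpoint case---whereas the paper simply asserts the containment of the ``minimum disk intersecting all of $H$'' and its tangency at $p$; but the underlying geometric mechanism is identical.
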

\begin{proof}
  Consider a point $y$ along $r(s,x)$, which is a slight translation of the point $x$ towards $p$.
  Let $D_x$ (resp., $D_y$) be the minimum disk centered at $x$ (resp., $y$), that intersects all segments in $H$.
  Then, $D_y\subset D_x$.
  The disk $D_y$ intersects all segments in $H$ and touches $s$ at point $p$, which implies that $y\in V_f(s,H)$.
  If we continue to move $y$ towards $p$, the disk $D_y$ will eventually touch some segment in $H\setminus\{s\}$, at position $y=a_x$.
  Therefore, the point $a_x$ belongs to an edge of the farthest Voronoi diagram $\mathcal{V}_f(H)$.
  Now, if we move $y$, starting from $x$ and away from $p$, then the disk $D_y$ will continue to contain $D_x$ and touch $s$.
  Therefore, the part of the ray $r(s,x)$ beyond $a_x$ must entirely belong to $V_f(s,H)$.
\end{proof}

Using this property, we derive the following lemma.

\begin{lemma}\label{lemma:tree}

  Let $F$ be a face of a region $V_{k+1}(H,S)$ in $\mathcal{V}_{k+1}(S)$. 
  The graph structure of $\mathcal{V}_k(S)$ enclosed in $F$ is a tree that consists of at least one edge.
  Each leaf of the tree is incident to an old Voronoi vertex on the boundary of $F$ (see Fig.~\ref{fig:KxKy}).
\end{lemma}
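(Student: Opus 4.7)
The plan is to reduce the statement, via Lemma~\ref{lemma:face_tree}, to the analysis of $\mathcal{V}_f(H) \cap F$: the identity $\mathcal{V}_k(S) \cap F = \mathcal{V}_f(H) \cap F$ combined with the fact that $\mathcal{V}_f(H)$ is a tree (Aurenhammer et al.) makes $\mathcal{V}_f(H) \cap F$ automatically acyclic, so what remains is to show it is non-empty, connected, and has all its leaves on $\partial F$ at old Voronoi vertices of $\mathcal{V}_{k+1}(S)$.

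For the leaf characterization I would first analyze the crossings of $\mathcal{V}_f(H)$ with $\partial F$. Under the general-position assumption no vertex of $\mathcal{V}_f(H)$ can lie on $\partial F$, since this would force four equidistant sites — the three segments of $H$ meeting at the $\mathcal{V}_f(H)$-vertex together with the swap witness $s_d \in S \setminus H$ on $\partial F$. Every crossing of $\mathcal{V}_f(H)$ with $\partial F$ is therefore an interior point of some $\mathcal{V}_f(H)$-edge, at which two segments $s_a, s_b \in H$ (tied for farthest in $H$) and one segment $s_d \in S \setminus H$ (swapped across the corresponding $\partial F$-edge) are equidistant, while the remaining $k-1$ segments of $H$ lie strictly inside the disk. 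This matches the defining configuration of an old vertex of $\mathcal{V}_{k+1}(S)$, and conversely each old $\mathcal{V}_{k+1}$-vertex on $\partial F$ sends exactly one $\mathcal{V}_f(H)$-edge into $F$. Interior vertices of $\mathcal{V}_f(H) \cap F$ inherit degree three from $\mathcal{V}_f(H)$, so the degree-one vertices of $\mathcal{V}_f(H) \cap F$ are exactly these boundary crossings.

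For non-emptiness and connectedness I would label each order-$(k+1)$ edge of $\partial F$ by the unique segment of $H$ that is farthest in $H$ on the $F$-side of that edge — equivalently, the segment being swapped out across that edge. Traversing $\partial F$, the label is preserved across new $\mathcal{V}_{k+1}$-vertices and flips across old ones, and each flip corresponds to exactly one $\mathcal{V}_f(H)$-edge entering $F$. Non-emptiness is reduced to ruling out the possibility that $F$ lies entirely in a single farthest region $V_f(s^*, H)$: combining the maximality of $F$ as an interior-connected component of $V_{k+1}(H,S)$ with the visibility property of Lemma~\ref{lemma:visibility_property} forces at least one $\mathcal{V}_f(H)$-edge to enter or traverse $F$. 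Connectedness is then obtained from a pairing argument on the label flips along $\partial F$: a disconnection of $\mathcal{V}_f(H) \cap F$ would force the boundary leaves of two disjoint subtrees to interleave in a crossing pattern along $\partial F$, contradicting the planarity of the embedded subtree.

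The connectedness step is the main obstacle I expect. Although $\mathcal{V}_f(H)$ is globally a tree, its restriction to $F$ could a priori split into several components connected only through tree edges lying outside $F$. The visibility lemma controls each farthest region via outward rays but does not directly describe how those rays interact with $\partial F$; turning the local visibility statement into a global connectivity statement inside $F$ requires the boundary-edge labelling and the planar embedding of the subtree to cooperate, and extracting a clean contradiction from a hypothetical disconnection is the delicate part of the argument.
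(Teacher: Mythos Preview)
Your high-level reduction (via Lemma~\ref{lemma:face_tree} to $\mathcal{V}_f(H)\cap F$, acyclicity for free from the tree structure of $\mathcal{V}_f(H)$) and your leaf characterization are correct and match the paper. Where you diverge is in the non-emptiness and connectedness steps, and there you are missing the key observation that makes the paper's argument much shorter than what you anticipate.

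The observation is the \emph{monotonicity of the order-$(k{+}1)$ disk along the visibility ray}. For a point $y$ in $V_f(s,H)$, moving $y$ along $r(s,y)$ \emph{toward} $s$ shrinks the disk $D_{k+1}(y)$, so if $y\in F$ then every point between $y$ and $a_y$ is still in $F$; moving $y$ \emph{away} from $s$ enlarges the disk, so if $y\notin F$ (the disk already meets some $t\in S\setminus H$) then no point farther out along the ray can be in $F$ either. Non-emptiness follows immediately from the first direction: starting at any $x\in F$ and sliding toward $s$, you reach $a_x\in\mathcal{V}_f(H)$ while remaining in $F$. No appeal to maximality of $F$ or to a separate ``ruling out $F\subseteq V_f(s^*,H)$'' step is needed.

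For connectedness the paper does \emph{not} use your boundary-labelling/interleaving/planarity scheme at all. It argues directly: suppose a subface $F_i\subseteq V_f(s,H)\cap F$ separates two components $T_1,T_2$ of $\mathcal{V}_f(H)\cap F$. Pick a point $v$ on $\partial V_f(s,H)$ lying between $T_1$ and $T_2$; this $v$ is on $\mathcal{V}_f(H)$ but outside $F$. By the visibility property the portion of $r(s,v)$ beyond $v$ lies in $V_f(s,H)$ and hence must cross the separating subface $F_i\subset F$. But by the disk-growth direction of the monotonicity above, no point of $r(s,v)$ beyond $v$ can belong to $F$ --- contradiction. This is a two-line argument once the monotonicity is isolated; your planarity route may be salvageable but is considerably more delicate, and the step you yourself flag as the obstacle (turning local visibility into global connectivity) is exactly what the paper's ray argument sidesteps.
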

\begin{figure}[h!]
  \begin{center}
    \includegraphics[width=2.5in, clip, trim = 0mm 5mm 13mm 16mm]{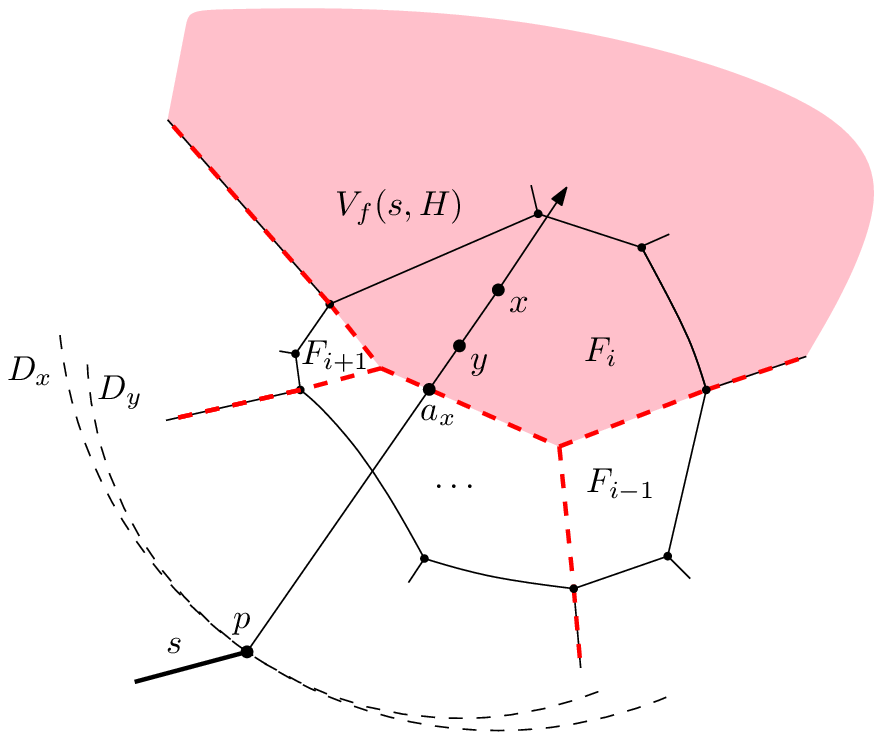}
  \end{center}
  \caption{The part of the ray $r(s,x)$ beyond $a_x$ entirely belongs to $V_f(s,H)$.} 
  \label{fig:KxKy}
\end{figure}

\begin{proof}
  Consider a point $x$ in $F$ (see Fig.~\ref{fig:KxKy}) and let $s$ be the segment in $H$ farthest away from $x$.
  Consider the ray $r(s,x)$, and the point $a_x$ as defined in Lemma~\ref{lemma:visibility_property}.
  Lemma~\ref{lemma:visibility_property} implies that $a_x$ is a point in the interior of $F$, therefore, $F$ must contain a portion of the tree of $\mathcal{V}_f(H)$, and, thus Lemma~\ref{lemma:face_tree} implies that  $F$ must contain at least one edge of $\mathcal{V}_k(S)$.

  Now we prove that the portion of $\mathcal{V}_k(S)$ enclosed in $F$ is connected.  
  Lemma~\ref{lemma:face_tree} implies that this portion is equal to the portion of $\mathcal{V}_f(H)$ enclosed in $F$.
  Assume, to the contrary, that the portion of $\mathcal{V}_f(H)$ enclosed in $F$ is disconnected.  
  Then, there is a subface $F_i$ of $F$ that separates two disconnected subtrees of $\mathcal{V}_f(H)$, say, $T_1$ and $T_2$.  
  Let $F_i\subseteq V_f(s,H)$, $v$ be a point on the boundary  of $V_f(s,H)$ between $T_1$ and $T_2$, and $r(s,v)$ be the ray that realizes the distance from $s$ to $v$ extending to infinity.  
  The {\em visibility property}\/ of  $V_f(s,H)$ in Lemma~\ref{lemma:visibility_property} implies that the portion of $r(s,v)$ beyond $v$ belongs  entirely to $V_f(s,H)$.  
  Since $T_1$ and $T_2$ bound $F_i$, the ray $r(s,v)$ must intersect $F_i$ beyond the point $v$.
  Consider the minimum disk centered at $v$, that intersects all segments in $H$.  
  The disk must also intersect some segments in $S\setminus H$ because $v$ does not belong to $F$.  
  If we move the center of the disk along $r(s,v)$ away from $s$, the new minimum disk will contain the previous disk, and, therefore, it will also intersect the same segments in $S\setminus H$.
  Thus, no portion of $r(s,v)$ can be in $F$, which is a contradiction.
\end{proof}

\begin{corollary}\label{corollary:tree}
  Consider a face $F$ of the Voronoi region $V_{k+1}(H,S)$. 
  Let $m$ be the number of Voronoi vertices in the portion of $\mathcal{V}_k(S)$ enclosed in the interior of $F$. 
  Then, $F$ encloses $2m{+}1$ Voronoi edges of $\mathcal{V}_k(S)$.
\end{corollary}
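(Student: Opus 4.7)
The plan is to invoke Lemma~\ref{lemma:tree} to obtain a tree structure inside $F$, and then do a standard two-equation degree-count argument. By Lemma~\ref{lemma:tree}, the portion $T$ of $\mathcal{V}_k(S)$ enclosed in $F$ is a tree with at least one edge, whose leaves are precisely old Voronoi vertices lying on the boundary of $F$, while all other (interior) nodes of $T$ lie in the interior of $F$ and are therefore among the $m$ vertices counted in the statement.

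Let $m$ denote the number of interior nodes of $T$ (as given) and let $\ell$ denote the number of leaves of $T$ on the boundary of $F$; let $e$ be the total number of edges of $T$, which is the quantity we want to compute. Under the general position assumption, every order-$k$ Voronoi vertex is incident to exactly three Voronoi edges, so every interior node of $T$ has degree three in $T$, while every leaf has degree one by definition. Summing degrees gives
\begin{equation*}
3m + \ell = 2e.
\end{equation*}
Since $T$ is a tree, its number of vertices exceeds its number of edges by one:
\begin{equation*}
m + \ell = e + 1.
\end{equation*}

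Subtracting the second equation from the first eliminates $\ell$ and yields $2m = e - 1$, hence $e = 2m+1$, as claimed. The only mild subtlety is justifying that every interior node of $T$ really has degree exactly three in $T$ (not merely in $\mathcal{V}_k(S)$), i.e., that none of the three incident edges escapes through $\partial F$ from an interior vertex; but this is immediate from Lemma~\ref{lemma:face_tree}, which identifies $T$ with the restriction of $\mathcal{V}_f(H)$ to $F$, together with Lemma~\ref{lemma:tree} stating that only leaves meet $\partial F$. No other case analysis is needed, so there is no real obstacle here beyond correctly setting up the degree count.
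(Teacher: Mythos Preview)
Your proof is correct and is exactly the standard degree-count argument the paper leaves implicit (Corollary~\ref{corollary:tree} is stated without proof, as an immediate consequence of Lemma~\ref{lemma:tree}). One small point you may want to make explicit: if $F$ is unbounded, the tree $T$ can contain unbounded edges of $\mathcal{V}_k(S)$, whose ends at infinity act as additional degree-$1$ leaves not lying on $\partial F$; absorbing these into $\ell$ (equivalently, adding a separate leaf at infinity for each unbounded ray) leaves both equations $3m+\ell=2e$ and $m+\ell=e+1$ intact, so $e=2m+1$ still follows. The paper's Lemma~\ref{lemma:tree} is equally silent on this case, so this is a shared omission rather than a gap in your argument.
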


Let $F_k$, $E_k$, $V_k$, and $U_k$ denote the number of faces, edges, vertices and edges faces in $\mathcal{V}_k(S)$  respectively.
If an edge is unbounded in both directions, then it is counted twice.
By the Euler's formula  we derive the following lemma.
\begin{lemma}\label{lemma:euler}
  \begin{eqnarray}
    E_k=3(F_k-1)-U_k\label{eq:euler1}\\
    V_k=2(F_k-1)-U_k\label{eq:euler2}
  \end{eqnarray}
\end{lemma}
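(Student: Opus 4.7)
The plan is to compactify the diagram by clipping it with a large circle and then invoke Euler's formula together with a degree count, in the style of Lee's classical argument for point sites.

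First I would choose a circle $C$ that encloses every Voronoi vertex of $\mathcal{V}_k(S)$ and is crossed exactly once by every unbounded ray of the diagram. This produces a finite embedded planar graph $G$ whose vertex set consists of the $V_k$ original Voronoi vertices together with $U_k$ new intersection points on $C$; whose edge set consists of the (truncated) $E_k$ Voronoi edges together with $U_k$ circular arcs along $C$; and whose faces are the $F_k$ Voronoi faces plus the unbounded exterior of $C$. The convention that a doubly-unbounded Voronoi edge is counted twice in $U_k$ is exactly what is needed so that such an edge contributes one chord in $G$ but two endpoints on $C$.

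Since $\mathcal{V}_k(S)$ is connected, so is $G$, and Euler's formula gives
\[
(V_k+U_k)-(E_k+U_k)+(F_k+1)=2,\qquad\text{i.e.,}\qquad V_k-E_k+F_k=1.
\]
Under the general-position assumption every Voronoi vertex has degree $3$, and every newly added vertex on $C$ has degree $3$ as well (one truncated Voronoi edge plus two arcs of $C$). Summing degrees,
\[
3(V_k+U_k)=2(E_k+U_k),\qquad\text{i.e.,}\qquad 3V_k+U_k=2E_k.
\]
Solving these two linear equations for $E_k$ and $V_k$ in terms of $F_k$ and $U_k$ gives exactly $E_k=3(F_k-1)-U_k$ and $V_k=2(F_k-1)-U_k$.

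The only delicate point I anticipate is the bookkeeping of doubly-unbounded edges: one needs the convention in the lemma statement (an edge is counted twice in $U_k$ if it is unbounded in both directions) to simultaneously keep $E(G)=E_k+U_k$ correct (a line-like edge contributes a single chord, but two arc endpoints on $C$) and keep the degree count on $C$ consistent. Once this convention is respected, the two identities fall out from the two equations above.
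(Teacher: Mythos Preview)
Your proof is correct and follows essentially the same approach as the paper: both derive the two linear relations $F_k-E_k+V_k=1$ and $3(V_k+U_k)=2(E_k+U_k)$ and then solve. The only cosmetic difference is the compactification device---you clip by a large circle and count degrees in the primal graph, whereas the paper adds a single point at infinity for the Euler relation and passes to the dual graph (where every face is a triangle) for the handshake relation; both routes yield the identical pair of equations.
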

\begin{proof}
  Consider $\mathcal{V}_k(S)$ and connect every unbounded edge with an artificial point at infinity.
  Then Euler's formula implies that $F_k-E_k+V_k=1$.

  Consider the dual graph of $\mathcal{V}_k(S)$. 
  Connect every vertex of the dual graph, representing an unbounded face of $\mathcal{V}_k(S)$, with an artificial point at infinity.
  If an unbounded face is incident to four unbounded edges, then connect the corresponding vertex twice.
  Then, under the general-position assumption, every face in the dual graph must have exactly three edges, and every edge is adjacent to exactly two faces.
  Therefore, $3(V_k+U_k)=2(E_k+U_k)$. 
  The combination of these equations proves the lemma. 
\end{proof}

\begin{lemma}\label{lemma:total_unb}
  The total number of unbounded edges in the order-$k$ Voronoi diagram of all orders is $\sum_{i=1}^{n-1}U_i=n(n-1)$.
\end{lemma}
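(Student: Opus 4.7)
The plan is to set up a bijection between unbounded edge-ends (counted with multiplicity $2$ for edges unbounded in both directions) and supporting halfplanes, and then count the latter pair-by-pair. By Corollary~\ref{cor:unb_edge}, each unbounded direction of an edge of $\mathcal{V}_k(S)$ separating $V_k(H\cup\{s_1\},S)$ from $V_k(H\cup\{s_2\},S)$ corresponds uniquely to a supporting halfplane of $(s_1,s_2,H)$ with $|H|=k-1$, and conversely (the halfplane for a given unbounded direction $\vec r$ is the one normal to $\vec r$ supplied by Lemma~\ref{lemma:supp}). Consequently, $\sum_{k=1}^{n-1} U_k$ equals the total number of supporting halfplanes, taken over all unordered pairs $\{s_1,s_2\}\subset S$ and all admissible $H\subseteq S\setminus\{s_1,s_2\}$.

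Next, I would fix an unordered pair $\{s_i,s_j\}$ and count its supporting halfplanes. By the general position assumption (no three endpoints collinear), the boundary of any such halfplane contains exactly one endpoint of each segment; call them $p\in s_i$ and $q\in s_j$. This yields $4$ candidate lines $\ell(p,q)$. For an open halfplane bounded by $\ell(p,q)$ to avoid both $s_i$ and $s_j$, the remaining endpoints $p'\in s_i$ and $q'\in s_j$ must both lie strictly on its complement side, i.e., on the same side of $\ell(p,q)$. The lines for which this holds are precisely the outer common tangents of $s_i$ and $s_j$; the other $2$ candidate lines (the inner tangents, with $p'$ and $q'$ on opposite sides) yield no supporting halfplane. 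Each outer tangent produces exactly one qualifying open halfplane $h$, and $H$ is then uniquely determined as the subset of $S\setminus\{s_i,s_j\}$ that $h$ crosses.

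It thus suffices to show that any two disjoint segments in general position admit exactly $2$ outer common tangent lines. I would establish this by a brief case analysis on the convex hull of the four endpoints $\{p,p',q,q'\}$: if it is a quadrilateral, then $s_i$ and $s_j$ are two of its four edges and the remaining two hull edges lie on the $2$ outer tangents; if it is a triangle, then exactly one endpoint lies strictly inside the other three (say $p$), and the $2$ outer tangents are the hull edges through the opposite endpoint $p'$, namely $\ell(p',q)$ and $\ell(p',q')$. Summing $2$ halfplanes per pair over all $\binom{n}{2}$ unordered pairs yields
\[
\sum_{k=1}^{n-1} U_k \;=\; 2\binom{n}{2}\;=\;n(n-1).
\]
The main obstacle will be the clean verification of the outer-tangent count in the triangle configuration, where the structure differs qualitatively from the quadrilateral case; the key point is that an endpoint lying strictly inside the hull of the other three can never lie on an outer tangent line.
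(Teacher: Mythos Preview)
Your proposal is correct and follows essentially the same approach as the paper: both count supporting halfplanes pair-by-pair via Corollary~\ref{cor:unb_edge} and arrive at $2\binom{n}{2}=n(n-1)$. The paper's proof simply asserts that each pair of disjoint segments admits exactly two such halfplanes, whereas you supply the convex-hull case analysis to justify this; your triangle-case argument is fine (the line through the interior endpoint and any vertex of the hull necessarily exits through the opposite hull edge, hence separates the remaining two endpoints), so the ``obstacle'' you flag is already handled.
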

\begin{proof}
  Consider an arbitrary  pair of segments $s_1$ and $s_2$.
  There are exactly two open halfplanes $r_1$ and $r_2$ that touch $s_1$ and $s_2$. 
  Corollary~\ref{cor:unb_edge} implies that these open halfplanes define unbounded Voronoi edges for some order-$(k_1{+}1)$ and order-$(k_2{+}1)$ Voronoi diagrams, where $k_1$ and $k_2$ are the numbers of segments that $r_1$ and $r_2$ intersect, respectively. 
  In addition, any unbounded Voronoi edge is induced by such a halfplane.
  Thus, $\sum_{i=1}^{n-1}U_i=2{n\choose 2}=n(n-1)$.       
\end{proof}

\begin{theorem}\label{lemma:lee}
  The number of faces in the order-$k$ Voronoi diagram of $n$ disjoint line segments is
  \begin{eqnarray}
    F_k=2kn-k^2-n+1-\sum_{i=1}^{k-1}U_i \label{eq:app_lee}\\
    \mbox{or, equivalently,}\quad F_k=1-(n-k)^2+\sum_{i=k}^{n-1}U_i \label{eq:app_dual_lee}
  \end{eqnarray}
\end{theorem}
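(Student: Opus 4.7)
The plan is to prove the first displayed equality by induction on $k$ and then obtain the second by simple algebra from Lemma~\ref{lemma:total_unb}. The base case is $k=1$: since $V_1(s_i,S)$ is connected for disjoint segments (Lemma~\ref{lemma:star_shaped} with $k=1$), one has $F_1=n$, which matches the right-hand side $2n-1-n+1-0$. For the inductive step I plan to derive the recurrence $F_{k+1}-F_k = 2n-2k-1-U_k$ and sum telescopically.

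The pivotal identity I will use is obtained by summing Corollary~\ref{corollary:tree} over all faces of $\mathcal{V}_{k+1}(S)$:
\begin{equation}
  E_k = 2\,O_k + F_{k+1}, \label{eq:sum-proposal}
\end{equation}
where $O_k$ denotes the number of old vertices of $\mathcal{V}_k(S)$. The reasoning is that, by general position, (i) every edge of $\mathcal{V}_k(S)$ separating $V_k(H\cup\{a\},S)$ and $V_k(H\cup\{b\},S)$ lies strictly inside $V_{k+1}(H\cup\{a,b\},S)$ and hence, being connected, in a single face of $\mathcal{V}_{k+1}(S)$; and (ii) each old vertex of $\mathcal{V}_k(S)$ is the center of a disk touching three segments and intersecting $k-2$ others, so its $k{+}1$ nearest sites are uniquely determined and it lies in the interior of a unique face of $\mathcal{V}_{k+1}(S)$. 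Thus $\sum_F(2m_F+1)$ counts $E_k$ while $\sum_F m_F = O_k$.

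Next I would prove the auxiliary identity $F_k - O_k = n-k+1$ by a parallel induction. The base holds since $O_1=0$ (the old-vertex definition requires $k\geq 2$). For the step, using $O_{k+1} = N_k = V_k - O_k$, identity \eqref{eq:sum-proposal}, and $E_k - V_k = F_k - 1$ from Lemma~\ref{lemma:euler},
\begin{equation*}
  F_{k+1} - O_{k+1} = (E_k - 2O_k) - (V_k - O_k) = (E_k - V_k) - O_k = (F_k - 1) - O_k = n - k.
\end{equation*}

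To conclude, substituting $O_k = F_k - (n-k+1)$ into \eqref{eq:sum-proposal} and using $E_k = 3(F_k-1) - U_k$ yields $F_{k+1} - F_k = 2n-2k-1-U_k$, and telescoping from $F_1 = n$ delivers the first formula. The second follows by applying Lemma~\ref{lemma:total_unb} to rewrite $\sum_{i=1}^{k-1}U_i = n(n-1) - \sum_{i=k}^{n-1}U_i$ and simplifying $2kn - k^2 - n + 1 - n(n-1) = 1 - (n-k)^2$. The main bookkeeping hurdle is justifying identity \eqref{eq:sum-proposal} cleanly: one must verify that the interior tree-vertices counted by Corollary~\ref{corollary:tree} are exactly the old vertices of $\mathcal{V}_k(S)$ without over- or under-counting, and that every edge of $\mathcal{V}_k(S)$ is captured exactly once across the faces of $\mathcal{V}_{k+1}(S)$; after that, the rest is a routine induction.
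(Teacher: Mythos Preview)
Your argument is correct and rests on the same two pillars as the paper: the tree identity from Corollary~\ref{corollary:tree} (your \eqref{eq:sum-proposal}, which is exactly the paper's Claim~1, $F_{k+2}=E_{k+1}-2V'_k$, after an index shift and the identification $O_k=V'_{k-1}$) and the Euler relations of Lemma~\ref{lemma:euler}. Where you diverge is in the bookkeeping. The paper eliminates the vertex counts by combining two consecutive instances of Claim~1 to obtain the \emph{second-order} recurrence $F_{k+3}=2F_{k+2}-F_{k+1}-2-U_{k+2}+U_{k+1}$ and then solves it with base cases $F_1,F_2$. You instead isolate the auxiliary invariant $F_k-O_k=n-k+1$ and use it to collapse everything to the \emph{first-order} recurrence $F_{k+1}-F_k=2n-2k-1-U_k$, which telescopes in one line. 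Your route is a bit cleaner: it needs only the single base case $F_1=n$, avoids juggling $F_2$ and $F_3$ separately, and the invariant $F_k-O_k=n-k+1$ is a pleasant structural byproduct. The paper's route, on the other hand, keeps the recurrence purely in terms of $F_i$ and $U_i$, which is convenient when the base cases change (as they do in Section~\ref{section:intersecting_segments} for intersecting segments); if you later want to adapt your argument there, you would need to re-derive the value of $F_1-O_1$ and check that your invariant still launches correctly.
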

\begin{proof}
  Let $V_k$, $V'_k$ and $V''_k$ be the number of Voronoi vertices, {\em new}\/ Voronoi vertices, and {\em old}\/ Voronoi vertices in $\mathcal{V}_k(S)$, respectively.  (Notation follows~\cite{Lee82}.)
  Then, $V_k=V'_k+V''_k=V'_k+V'_{k-1}$.

  Following~\cite{Lee82}, we obtain a recursive formula for the number of faces $F_k$ of the order-$k$ Voronoi diagram.
  Assuming that segments do not intersect, $F_1=n$, since each segment induces exactly one face in $\mathcal{V}_1(S)$.
  In  $\mathcal{V}_2(S)$, each face encloses exactly one edge of $\mathcal{V}_1(S)$, thus, $F_2=E_1$. 
  Then by  Lemma~\ref{lemma:euler} we derive $F_2=3(F_1-1)-U_1$, thus, $F_2=3(n-1)-U_1$.

  We now prove that $F_{k+2}=E_{k+1}-2V'_k$ (\emph{Claim 1}).  
  Note that $V'_1=V_1$ and $V_1=2(n-1)-U_1$ (using Eq.~(\ref{eq:euler2}) of Lemma~\ref{lemma:euler}).
  The definition of {\em old}\/ Voronoi vertices implies that {\em old}\/ Voronoi vertices of $\mathcal{V}_{k+1}(S)$ lie in the interior of the faces of  $\mathcal{V}_{k+2}(S)$. 
  Consider a face $F_i$ of  $\mathcal{V}_{k+2}(S)$.  
  Let $m_i$ be the number of {\em old}\/ Voronoi vertices of  $\mathcal{V}_{k+1}(S)$ enclosed in the interior of $F_i$.
  Then, $F_i$ encloses $e_i=2m_i+1$ Voronoi edges of  $\mathcal{V}_{k+1}(S)$  (see Corollary~\ref{corollary:tree}).
  Summing up the numbers of all faces in $\mathcal{V}_{k+2}(S)$, we obtain that $\sum_{i=1}^{F_{k+2}}e_i=2\sum_{i=1}^{F_{k+2}}m_i+F_{k+2}$.
  However, $\sum_{i=1}^{F_{k+2}}m_i=V''_{k+1}=V'_k$ and $\sum_{i=1}^{F_{k+2}}e_i=E_{k+1}$.
  Therefore, $F_{k+2}=E_{k+1}-2V'_k$, and, Claim 1 follows.

  We now use Claim 1 to obtain a recursive formula for $F_k$.
  Summing up $F_{k+2}$ and $F_{k+3}$, we obtain $F_{k+3}=E_{k+2}+E_{k+1}-F_{k+2}-2V'_{k+1}-2V'_k=E_{k+2}+E_{k+1}-F_{k+2}-2V_{k+1}$.
  We then substitute Eqs.~(\ref{eq:euler1}) and (\ref{eq:euler2}) in the last formula and obtain 
  \begin{equation}
    F_{k+3}=2F_{k+2}-F_{k+1}-2-U_{k+2}+U_{k+1}.\label{eq:iterative}
  \end{equation}
  where, $F_1=n$ and  $F_2=3(n-1)-U_1$. 
  Because $F_2=E_1$, Eq.~(\ref{eq:iterative}) can also be derived for $F_3$, i.e.\ the formula applies to $k\geq 0$. 

  By induction, using Eq.~(\ref{eq:iterative}) and the above base cases, we derive Eq.~(\ref{eq:app_lee}). 
  Lemma~\ref{lemma:total_unb} implies that $\sum_{i=1}^{k-1}U_i+\sum_{i=k}^{n-1}U_i=\sum_{i=1}^{n-1}U_i=n(n-1)$.
  Combining this result with Eq.~(\ref{eq:app_lee}), we derive Eq.~(\ref{eq:app_dual_lee}).
\end{proof}

\begin{lemma}\label{lemma:unb}
  Given a set $S$ of $n$  line segments, $\sum_{i=k}^{n-1}U_i=O(n(n-k))$.
\end{lemma}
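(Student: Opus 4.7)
My plan is to reduce the count $\sum_{i=k}^{n-1} U_i$, via Corollary~\ref{cor:unb_edge}, to a count of supporting halfplanes at high level in a dual arrangement of arcs, then invoke $(\leq j)$-level complexity bounds for Jordan-curve arrangements~\cite{Sharir95} together with the wedge-arrangement argument of~\cite{Aurenhammer06,Edelsbrunner82}.

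By Corollary~\ref{cor:unb_edge}, each unbounded edge of $\mathcal{V}_i$ corresponds to an open supporting halfplane $h$ whose boundary passes through endpoints of two distinct segments $s_1, s_2 \notin H$ with $|H| = i-1$. A preliminary observation, immediate from the convexity of the closed complement of $h$ and the general-position assumption, is that $h$ intersects a segment $s$ if and only if at least one endpoint of $s$ lies strictly in $h$; hence the ``level'' $|H|$ is determined entirely by which endpoints of $S$ fall inside. Fixing an endpoint $p$, I would parameterize oriented lines through $p$ by direction $\theta \in [0, 2\pi)$ and, for each segment $s \neq s(p)$, define the arc $W_p(s) \subseteq [0, 2\pi)$ of directions for which the open left halfplane contains at least one endpoint of $s$. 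Each arc has length $\pi + \alpha_s(p)$, where $\alpha_s(p)$ is the angle subtended by $s$ at $p$; any two arcs cross in $O(1)$ points; and the level $N_p(\theta)$ of the arrangement $\mathcal{A}_p = \{W_p(s) : s \neq s(p)\}$ equals the number of other segments intersecting the corresponding left halfplane.

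The directions $\theta_q = \phi_p(q)$ through other endpoints $q$ are the discrete breakpoints of $\mathcal{A}_p$, and, modulo a validity filter that only discards events, unbounded edges of $\mathcal{V}_i$ through $p$ at level $i - 1 \geq k - 1$ correspond to breakpoints of $\mathcal{A}_p$ at level $\geq k-1$. Applying Sharir's $(\leq j)$-level complexity bound~\cite{Sharir95} for arrangements of Jordan curves with $O(1)$ pairwise crossings to the complementary arrangement (each $W_p(s)$ complement is again an arc, preserving the pairwise bound), the $(\geq k-1)$-level of $\mathcal{A}_p$ is controlled by the $(\leq n-k)$-level of the complement. Combined with the global wedge-arrangement dualization of~\cite{Aurenhammer06,Edelsbrunner82}, which treats all supporting halfplanes jointly rather than separately per endpoint, this yields the desired $O(n(n-k))$ bound.

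The main obstacle is the aggregation step: a direct per-endpoint invocation of Sharir's bound sums to $O(n^2(n-k))$, a factor of $n$ too large. The remedy exploits that each supporting halfplane is uniquely determined by an ordered pair of endpoints and appears as a single event in the global wedge arrangement, not once per endpoint on its boundary, so careful amortization of breakpoint events across the per-endpoint arrangements $\mathcal{A}_p$, or equivalently passing to the unified wedge dualization, eliminates the excess factor and gives $\sum_{i=k}^{n-1}U_i = O(n(n-k))$.
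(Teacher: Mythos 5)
There is a genuine gap, and you name it yourself: your per-endpoint construction gives a bound that is a factor of $n$ too large, and the ``remedy'' is only asserted, not proved. The phrase ``careful amortization of breakpoint events \ldots or equivalently passing to the unified wedge dualization'' is precisely the missing content, because the entire difficulty of the lemma lies in that step. Your setup of the arrangements $\mathcal{A}_p$ of arcs on the circle of directions through a fixed endpoint $p$ does not, by itself, connect to the Jordan-curve $\kleq$-level machinery of~\cite{Sharir95} in a useful way: each $\mathcal{A}_p$ is a one-dimensional arrangement with only $O(n)$ breakpoints in total, so per-endpoint ``level'' bounds either are trivial or, summed over the $2n$ endpoints, can only reproduce the $O(n^2)$ count of all candidate supporting halfplanes; nothing in the proposal shows that the number of breakpoints at the relevant high levels, aggregated over all $p$, is $O(n(n-k))$.

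The paper avoids this by working with a single global dual structure rather than per-endpoint ones: under the point-line duality, each segment maps to a wedge, and a supporting halfplane of $s_1,s_2$ that intersects exactly $i-1$ segments corresponds to a vertex shared by the $(n-i-1)$-level and the $(n-i)$-level of the arrangement of the $n$ wedges. Hence $\sum_{i=k}^{n-1}U_i$ is bounded by the complexity of the $(\leq n-k-1)$-level of that arrangement, and the bound $g_{\leq r}(n)=O\left((r+1)^2\,g_0\!\left(\lfloor n/(r+1)\rfloor\right)\right)$ of~\cite{Sharir95} is then combined with the crucial fact that the lower envelope of $n$ such wedges has complexity $g_0(n)=O(n)$~\cite{Aurenhammer06,Edelsbrunner82}, giving $O(n(n-k))$. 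Your proposal never establishes this correspondence between unbounded order-$i$ edges and vertices at a specific level of one global arrangement, and never invokes the linear lower-envelope (weight-zero) bound, which is the decisive ingredient; without these, the reduction via Corollary~\ref{cor:unb_edge} and the appeal to~\cite{Sharir95} do not yield the claimed $O(n(n-k))$. To repair the argument you would essentially have to carry out the wedge-arrangement proof you allude to, or give a genuine Clarkson--Shor-style argument in which the weight-zero configurations are the $O(n)$ unbounded edges of the farthest line-segment Voronoi diagram.
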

\begin{proof}
  We use the well-known point-line duality transformation $T$ that maps a point $p=(a,b)$ in the primal plane to a line $T(p): y=ax-b$ in the dual plane, and vice versa (see~\cite{Aurenhammer06}). 
  We call the set of points above both lines $T(p)$ and $T(q)$ the {\em wedge}\/ of $s=(p,q)$. 
  Consider a line $\ell$ and a segment $s=(p,q)$. 
  The segment $s$ is above the line $\ell$ if and only if the point $T(\ell)$ is strictly above lines $T(p)$ and $T(q)$~\cite{Aurenhammer06}.  

  Consider the arrangement $W$ of the wedges $w_i$, $i=1,\ldots,n$, corresponding to the segments in $S=\{s_1,\ldots,s_n\}$. 
  For our analysis we need the notions of $r$-level and $\rleq$-level. 
  The $r$-level of $W$ is the set of edges such that every point along an edge lies above $r$ wedges. 
  The $r$-level shares its vertices with the $(r{-}1)$-level and the $(r{+}1)$-level. 
  The $\rleq$-level of $W$ is the  set of edges such that every point on it is above at most $r$ wedges. 
  For our purposes in this paper, the complexity of the $r$-level and the $\rleq$-level is the number of their vertices, excluding the wedge apices. 
  We denote the maximum complexity of the $r$-level and the $\rleq$-level of $n$ wedges by $g_r(n)$ and $g_{\leq r}(n)$, respectively. 
  We first prove the following claim.
  \begin{claim}
    \topsep=0pt
    The number of unbounded Voronoi edges of $V_k(S)$, unbounded in direction $\phi\in[\pi,2\pi]$, is exactly the number of vertices shared by the $(n{-}k{-}1)$-level and the $(n{-}k)$-level of $W$.
    Thus, $U_k=O(g_{n-k-1}(n))$.
  \end{claim}
  \noindent
  {\emph Proof of Claim.}
  Let $s_i, s_j$ be two line segments that define an unbounded bisector in a direction $\phi\in[\pi,2\pi]$.
  Then, there is a line $\ell$ passing through their endpoints, such that the open halfplane $\ell^-$ below $\ell$ intersects $k{-}1$ line segments and does not intersect $s_i$ nor $s_j$.
  Then, $\ell$ passes strictly below $n-(k-1)-2=n-k-1$ line segments.
  Thus, $\ell$ corresponds to a point $p$ in the arrangement of wedges shared by the $(n{-}k{-}1)$-level and $(n{-}k)$-level (see Fig.~\ref{fig:wedges01}).
  By the above claim
  \begin{equation}
    \sum_{i=k}^{n-1}U_i=O(g_{\leq n-k-1}(n)).\label{f03}
  \end{equation}
  Since the arrangement of wedges is a special case of arrangements of Jordan curves, we  use a formula from~\cite{Sharir95} to bound the complexity of the $\rleq$-level in such an arrangement:
  \begin{equation}
    g_{\leq r}(n)=O\left((r+1)^2g_0\left(\left\lfloor \frac{n}{r+1}\right\rfloor\right)\right)\label{f04}
  \end{equation}
  The complexity of the lower envelope of such wedges $g_0(n)$ is $O(n)$~\cite{Aurenhammer06,Edelsbrunner82}.
  (In~\cite{Sharir95} one can find the weaker bound $g_0(n)=O(n\log{n})$).
  Therefore, $g_{\leq r}(n)=O(n(r+1))$.
  By substituting this into Eq.~(\ref{f03}) we obtain that $\sum_{i=k}^{n-1}U_i=O(n(n-k))$.

  \begin{figure}[!h]
    \begin{center}
      \includegraphics[width=3.0 in, clip, trim = 0mm 0mm 0mm 0mm]{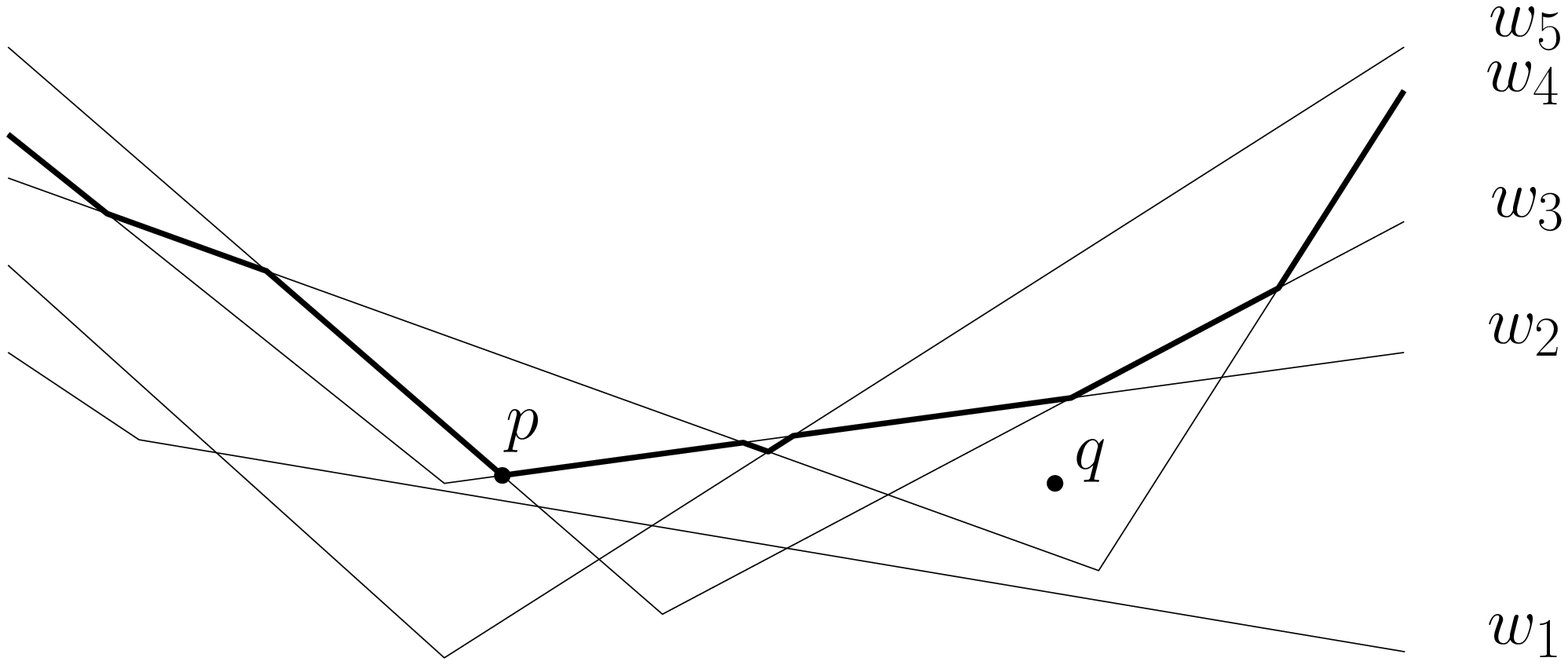}
      \\\vspace{2mm}
      \includegraphics[width=3.0 in, clip, trim = 0mm 0mm 0mm 0mm]{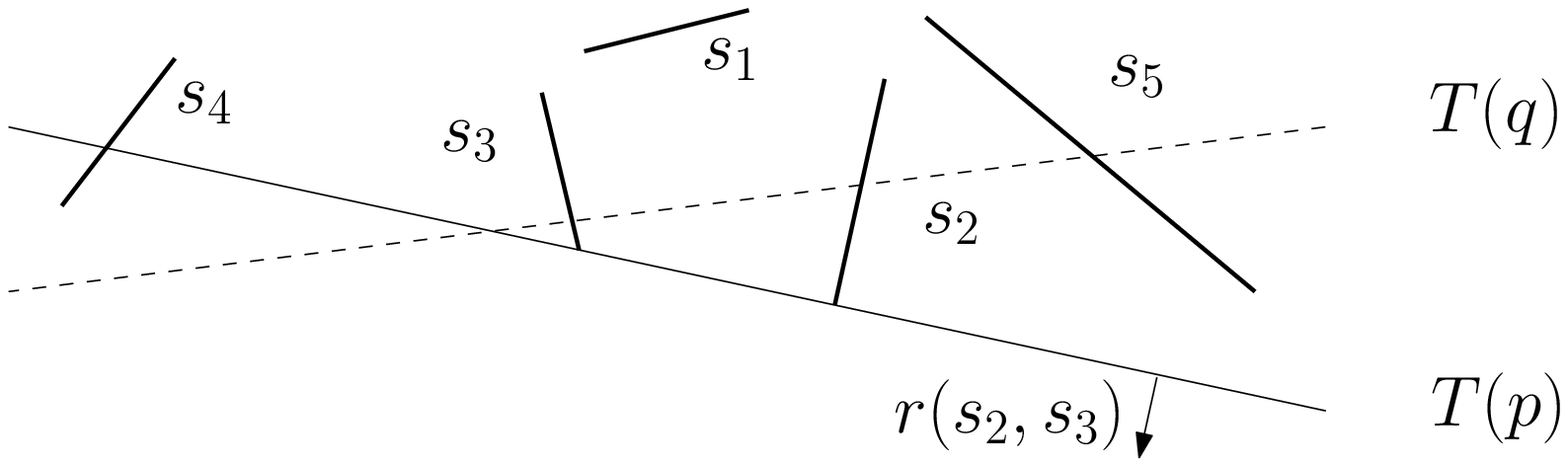}
    \end{center}
    \caption{(a) In the dual plane, the point $p$ belongs to the 2-level and the 3-level of the arrangement $W$; 
      (b) In the primal plane, the halfplane $r(s_2,s_3)$ below $T(p)$ defines the unbounded Voronoi edge that separates $V_2(\{s_2,s_4\},S)$ and $V_2(\{s_3,s_4\},S)$.}
      \label{fig:wedges01}
    \end{figure}
  \end{proof}
  By combining  Lemma~\ref{lemma:unb} and Theorem~\ref{lemma:lee}, we obtain the following result. 
  \begin{theorem} \label{theorem:bound} 
    The combinatorial complexity of the order-$k$ Voronoi diagram of $n$ disjoint line segments is $F_k=O(k(n-k))$
  \end{theorem}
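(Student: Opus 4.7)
The plan is to derive the $O(k(n-k))$ bound by combining Theorem~\ref{lemma:lee} with Lemma~\ref{lemma:unb}, treating two regimes of $k$ separately and selecting the appropriate form of the face-count identity in each.

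For the regime $k \geq n/2$, I would start from Eq.~(\ref{eq:app_dual_lee}),
\[
F_k = 1 - (n-k)^2 + \sum_{i=k}^{n-1} U_i.
\]
Since $-(n-k)^2 \leq 0$, substituting Lemma~\ref{lemma:unb} gives $F_k \leq 1 + O(n(n-k))$. Because $n \leq 2k$ in this regime, $n(n-k) \leq 2k(n-k)$, so $F_k = O(k(n-k))$.

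For the regime $k \leq n/2$, I would switch to Eq.~(\ref{eq:app_lee}),
\[
F_k = 2kn - k^2 - n + 1 - \sum_{i=1}^{k-1} U_i.
\]
Here I do not even need the full strength of Lemma~\ref{lemma:unb}: the sum $\sum_{i=1}^{k-1} U_i$ is non-negative and can simply be dropped, yielding $F_k \leq 2kn - k^2 - n + 1 = O(kn)$. Since $n-k \geq n/2$ in this regime, $kn \leq 2k(n-k)$, so again $F_k = O(k(n-k))$. The two regimes together establish the uniform bound.

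The substantive content has already been invested in the two ingredients. The main conceptual point — and the only place one could go wrong — is to notice that the one-sided bound of Lemma~\ref{lemma:unb} must be paired with \emph{different} sides of the dual identity depending on the size of $k$: plugging Lemma~\ref{lemma:unb} directly into Eq.~(\ref{eq:app_dual_lee}) alone only yields the weaker $O(n(n-k))$ bound, which is not tight when $k$ is small. Using Eq.~(\ref{eq:app_lee}) in the small-$k$ regime avoids this issue, since one only needs to discard a non-negative term. Beyond this observation, the argument is a short algebraic combination.
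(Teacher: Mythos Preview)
Your proposal is correct and matches the paper's own proof essentially line for line: the paper also splits at $k=n/2$, uses Eq.~(\ref{eq:app_lee}) with the nonnegative sum dropped for small $k$, and Eq.~(\ref{eq:app_dual_lee}) together with Lemma~\ref{lemma:unb} for large $k$. Your write-up is in fact slightly more explicit than the paper's in spelling out why $O(kn)$ and $O(n(n-k))$ each become $O(k(n-k))$ in the respective regimes.
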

  \begin{proof}
    For $1\leq k<n/2$, Eq.~(\ref{eq:app_lee}) implies that $F_k=O(k(n-k))$.

    For $n/2\leq k\leq n-1$, Lemma~\ref{lemma:unb} implies that $\sum_{i=k}^{n-1}U_i=O(n(n-k))=O(k(n-k))$.
    The dual formula~(\ref{eq:app_dual_lee}) implies that $F_k=1-(n-k)^2+\sum_{i=k}^{n-1}U_i\leq \sum_{i=k}^{n-1}U_i$, which is $O(k(n-k))$.
  \end{proof}

  \section{Segments forming a Planar Straight-Line Graph}\label{section:abutting_segments}
  In this section we consider line segments that may touch at endpoints, such as line segments forming a simple polygon, more generally, line segments forming a \emph{planar straight-line graph} (\emph{PSLG}, in short).
  This is important for applications that involve polygonal objects in the plane, for an example  see~\cite{Papadopoulou11}.

  Line segments forming a PSLG are inherently degenerate because of vertices in the PSLG of degree greater than one.
  These vertices induce areas on the plane that are equidistant from multiple segments, whose number is independent of $k$.  
  The problem remains, even under a {\em weak general position assumption}\/ that no more than three {\em elementary sites}\/ touch the same circle. 
  A segment consists of three elementary sites: two endpoints and an open line segment.\footnote{In case the line through a segment $s$ is tangent to a circle $C$ at one of the segment endpoints, both elementary sites, the endpoint and the open portion of $s$, touch $C$.  
  Otherwise, only one elementary site can touch $C$.}
  Note that a PSLG cannot satisfy the standard general position assumption that no more than three sites can touch the same circle. 
  In terms of bisectors, degeneracies involving a PSLG manifest themselves in two ways:
  (1) bisectors that contain two-dimensional regions, such as the shaded area in Fig.~\ref{fig:pslg_degen}(a); 
  and (2) bisectors which intersect non-transversely, such as those illustrated in Fig.~\ref{fig:pslg_degen}(b); 
  as well as the combination of (1) and (2). 

  \begin{figure}[h]
    \includegraphics{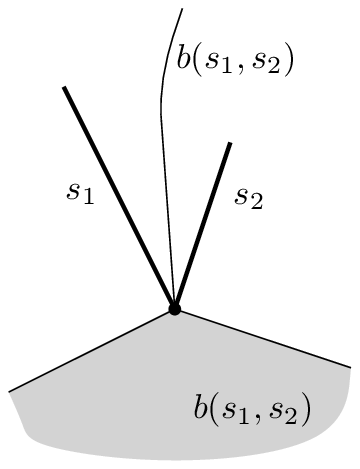}
    \includegraphics{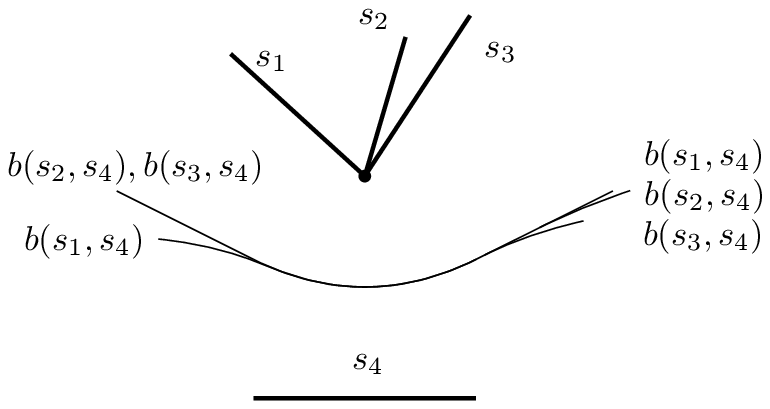}
    \\
    \put(45,0){(a)}
    \put(200,0){(b)}
    \caption{(a) A bisector containing a 2-dimensional portion; (b) Bisectors intersecting non-transversely.} 
    \label{fig:pslg_degen}
  \end{figure}

  For $k=1$, a standard convention to cope with the high-degree vertices of a PSLG, is to consider elementary sites as distinct entities, see e.g.,~\cite{Kirkpatrick79}.
  For $k>1$, this standard convention is not adequate because it alters the essence of an order-$k$ Voronoi diagram, especially in the case of disjoint line segments.
  For example, for  $k=n-1$, the farthest Voronoi diagram of the elementary sites is the farthest-point Voronoi diagram of the segment endpoints, and not the farthest line-segment Voronoi diagram as defined in~\cite{Aurenhammer06}.
  In addition, considering elementary sites as distinct, it does not resolve the issue of multiple equidistant elementary sites, whose number is independent of $k$.
  Similarly, this issue is not addressed by  other  standard techniques that deal with two-dimensional bisectors, such as assigning a priority to sites while offering an entire equidistant region to the segment of higher priority~\cite{Klein09}, or using an angular bisector to split equidistant regions~\cite{Aurenhammer06}. 
  Perturbation techniques (see e.g.,~\cite{Seidel98}) to transform the PSLG into a set of disjoint line segment, on the other hand, may create artificial faces and tedious decompositions that are unrelated to the problem under consideration,  see e.g., Fig.~\ref{fig:pert}(b).

  In the following, we augment the definition of an order-$k$ Voronoi diagram to address the phenomenon of areas with multiple equidistant sites, whose number is independent of $k$.

  \subsection{Augmenting the definition of an order-$k$ Voronoi region}

  \begin{definition}
    Let $D_k(x)$ be the disk of minimum radius, centered at point $x$, that intersects (or touches) at least $k$ line segments.
    $D_k(x)$ is called an order-$k$ disk.
    The set of line segments in $S$ that have a non-empty intersection with $D_k(x)$ is denoted as $S_k(x)$.
    If $D_k(x)$ touches exactly one elementary site $p$ then it is called a {\em proper order-$k$ disk}\/ and it is denoted as $D_k^p(x)$.
  \end{definition}

  \begin{figure}
    \includegraphics[width=2.3 in, clip, trim = 18mm 11mm 30mm 70mm]{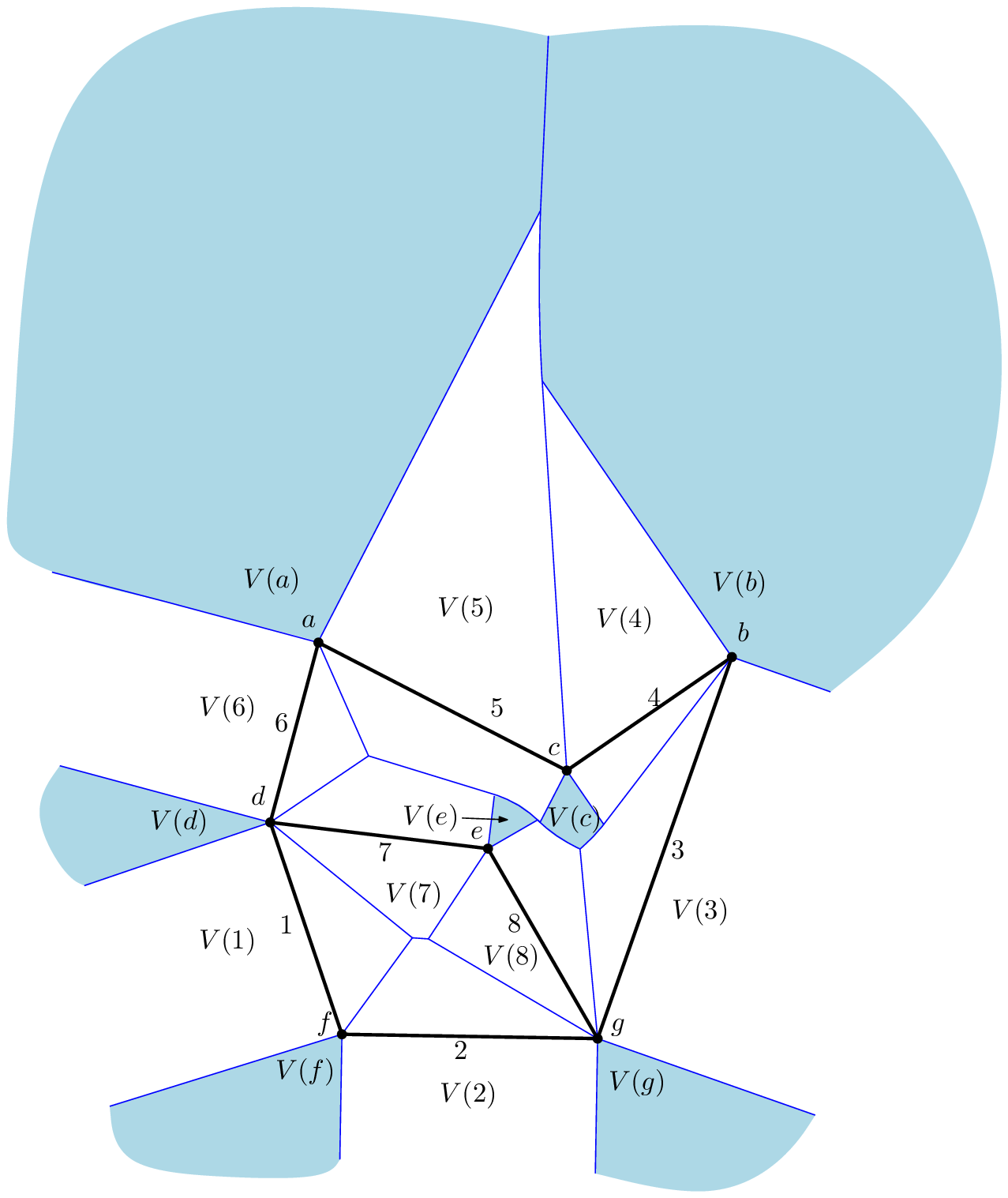}
    \includegraphics[width=2.3 in, clip, trim = 6mm 22mm 28mm 70mm]{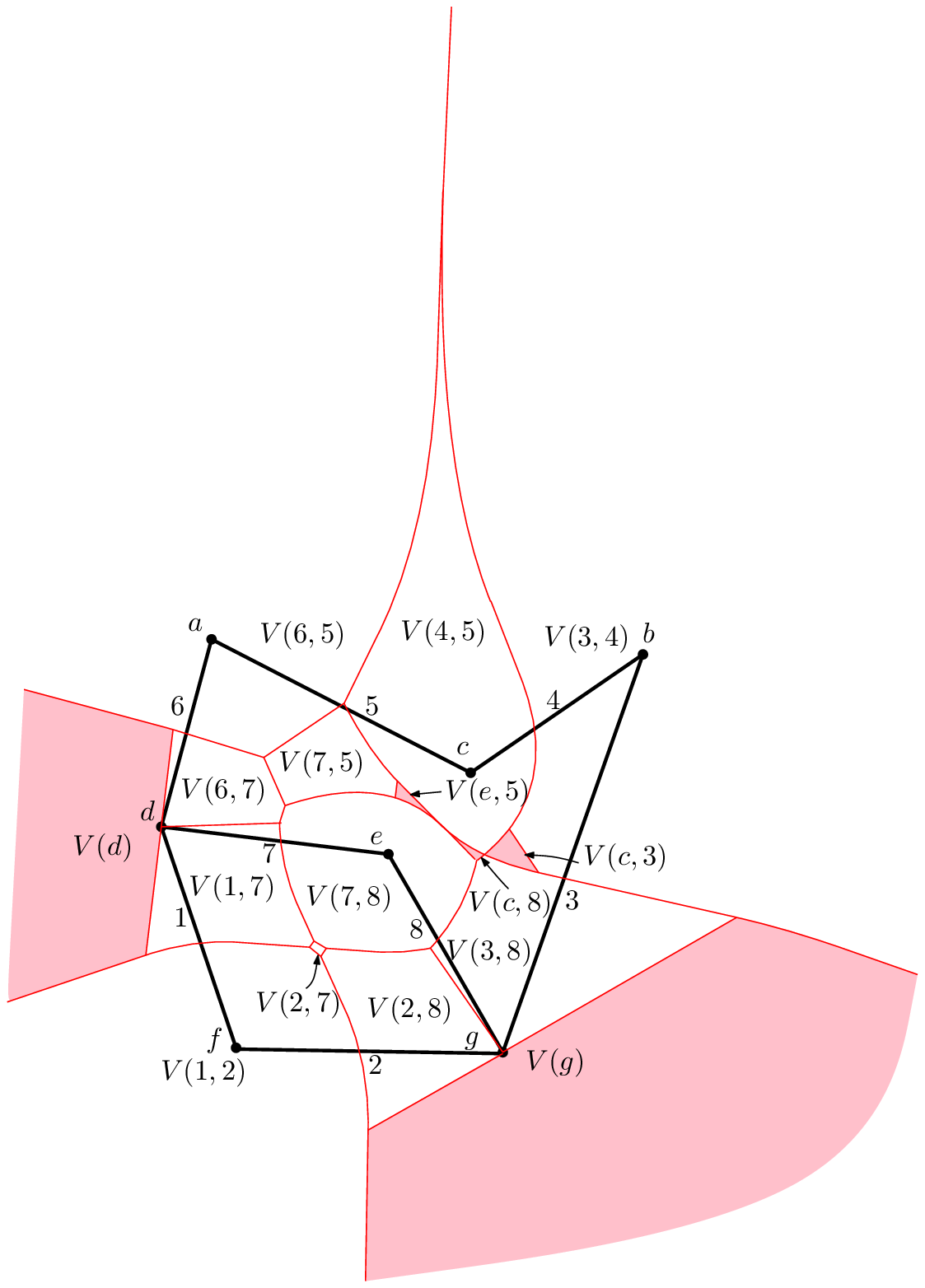}
    \\
    \put(80,0){(a)}
    \put(250,0){(b)} 
    \caption{$\mathcal{V}_1(S)$ for a PSLG. (a) $\mathcal{V}_1(S)$; (b) $\mathcal{V}_2(S)$.
Type-2 regions are shown shaded.
$V(e_1,\cdots,e_m)$ stands for Type-1 $V_k(H,S)$, $H=\{e_1,\dots,e_m\}$. $V(p,e_1,\cdots,e_m)$, stands for Type-2 $V_k(H_p,S)$, $\{e_1,\dots,e_m\}=H_p\setminus I(p)$.
}
    \label{fig:hovd_pslg}
  \end{figure}

  Clearly, $D_k(x)$, and thus, $S_k(x)$, are  unique for every point $x$ in the plane.
  Furthermore, if $D_k(x)$ is non-proper then $x$ must be a point along the bisector of  two elementary sites.
  %We now extend the notion of a subset of $S$ of cardinality $k$. 

  \begin{definition}
    A set  $H\subseteq S$ is called an order-$k$ subset if
\begin{enumerate}
\item $|H|=k$ (Type-1); or
\item $|H|>k$ (Type-2), and there exists a proper order-$k$ disk $D_k^p(x)$, whose boundary passes through the common endpoint $p$ of at least two segments, and $S_k(x)=H$.
	Point $p$ is called a \emph{representative} of $H$. 
	An order-$k$ subset of representative $p$ is denoted as $H_p$. 
	The set of segments incident to $p$ is denoted as $I(p)$.
\end{enumerate}
  \end{definition}

  \paragraph{Remark.} A set of segments $H$ may have two (or more) representatives $p$, $q$, resulting in two distinct order-$k$ subsets $H_p$ and  $H_q$, 
where each has a distinct region in $\mathcal{V}_k(S)$.

  An order-$k$ Voronoi region can now be defined in terms of order-$k$ subsets of $S$ instead of cardinality-$k$ subsets.
  For a Type-1 subset $H$, we derive a Type-1 order-$k$ Voronoi region $V_k(H,S)$ as defined by Eq.~(\ref{eq:kregion}), which is equivalent to 
  $ V_k(H,S)=\{x~|~S_k(x) = H \}$. 
  For a Type-2 order-$k$ subset $H_p$ of representative $p$, we derive a Type-2 order-$k$ Voronoi region $V_k(H,S)$ defined as follows
  \begin{equation}\label{eq:def2}
    V_k(H_p,S)=\{x~|~S_k(x) = H_p \land D_k(x)=D_k^p(x)\} 
  \end{equation}

  Figure~\ref{fig:hovd_pslg} illustrates an example  of the 1st and 2nd order Voronoi diagram of a PSLG. 
  Type-2 Voronoi regions are illustrated shaded.

  The following lemma on Type-2 Voronoi regions is easy to derive following the definitions.

  \begin{lemma}\label{lemma:disk}
    Let  $V_k(H_p,S)$ be a Type-2 order-$k$ Voronoi region.
    Then    $\forall s\in H_p~\forall t\in S\setminus H_p$, $d(x,s)\leq d(x,p)<d(x,t)$, for any point $x$ in  $V_k(H_p,S)$. 
    Furthermore, $S_k(x)=S_{k+1}(x)$.
    $V_k(H_p,S)$ contains no graph elements of $\mathcal{V}_{k-1}(S)$ nor of $\mathcal{V}_{k+1}(S)$. 
  \end{lemma}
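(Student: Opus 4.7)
The plan is to prove the three claims in turn, each time leaning on the defining properties of the Type-2 region: that $D_k(x) = D_k^p(x)$ is proper (only $p$ lies on its boundary) and that $S_k(x) = H_p$ with $|H_p| > k$. For the distance inequalities I would set $r = d(x,p)$, the radius of $D_k^p(x)$. Every $s \in H_p$ intersects $D_k^p(x)$, so $d(x,s) \leq r = d(x,p)$; every $t \in S \setminus H_p$ is disjoint from the closed disk, and properness forbids $t$ from being tangent to the boundary, so $d(x,t) > r = d(x,p)$.

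For $S_k(x) = S_{k+1}(x)$, I would squeeze the order-$(k+1)$ radius. Since $|H_p| \geq k+1$, the disk $D_k^p(x)$ already intersects at least $k+1$ segments, so the minimum order-$(k+1)$ radius $r_{k+1}(x)$ is at most $r$; conversely $r_{k+1}(x) \geq r_k(x) = r$ by the monotonicity of the minimum-radius function in the number of required segments. Hence $D_{k+1}(x) = D_k(x)$, and in particular $S_{k+1}(x) = S_k(x) = H_p$.

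For the last claim I would partition $H_p$ into $A = H_p \setminus I(p)$, the non-incident segments; $C \subseteq I(p) \cap H_p$, the incident segments whose open portion enters the interior of $D_k^p(x)$; and $B$, the remaining incident segments, which touch $D_k^p(x)$ only at $p$. Under weak general position every segment of $H_p$ lies in exactly one class, and on a connected component $F$ of $V_k(H_p,S)$ the three classes are constant, since a transition would force an open portion of some incident segment to become tangent to $D_k^p(x)$, contradicting properness. A crucial observation is that, because $D_k^p(x)$ is the minimum-radius disk containing at least $k$ segments, the number of segments strictly inside, which is exactly $|A|+|C|$, must satisfy $|A|+|C| \leq k-1$. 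For order $k+1$, the previous step already gives $D_{k+1}(x) = D_{k+1}^p(x)$ and $S_{k+1}(x) = H_p$ uniformly on $V_k(H_p,S)$, so the whole region sits in a single order-$(k+1)$ region (Type-1 if $|H_p|=k+1$, Type-2 with representative $p$ otherwise), ruling out graph elements of $\mathcal{V}_{k+1}(S)$ in its interior.

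For order $k-1$ I would split on the inequality above. If $|A|+|C| \leq k-2$, no disk of radius below $r$ captures $k-1$ segments, so $D_{k-1}(x) = D_k^p(x)$, yielding $S_{k-1}(x) = H_p$ and placing $F$ inside the Type-2 order-$(k-1)$ region with the same representative $p$. If $|A|+|C| = k-1$, then $D_{k-1}(x)$ shrinks strictly inside $D_k^p(x)$ and intersects exactly $A \cup C$, so $S_{k-1}(x) = A \cup C$ has size $k-1$ and $F$ lies in the Type-1 order-$(k-1)$ region $V_{k-1}(A \cup C, S)$. The hard point is precisely this last sub-case: although the specific elementary site realizing the boundary touch of $D_{k-1}(x)$ may vary as $x$ moves within $F$, the set $S_{k-1}(x)$ remains $A \cup C$, so such internal variation does not produce an edge of $\mathcal{V}_{k-1}(S)$; the constancy of $A$, $B$, $C$ on $F$ together with $|A|+|C| \leq k-1$ is what prevents any actual change in $S_{k-1}(x)$ throughout $F$.
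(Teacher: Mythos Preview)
Your argument is correct. The paper does not actually supply a proof of this lemma; it only remarks that it ``is easy to derive following the definitions,'' so there is nothing to compare approaches against. Your unpacking of the definitions is sound: the distance chain follows immediately from $D_k(x)=D_k^p(x)$ and $S_k(x)=H_p$, the equality $S_k(x)=S_{k+1}(x)$ follows from $|H_p|\ge k+1$, and the order-$(k{+}1)$ part of the third claim follows at once.

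The order-$(k{-}1)$ part is where real care is needed, and your $A/B/C$ decomposition with the bound $|A|+|C|\le k-1$ handles it correctly. One small remark: the constancy of $B$ and $C$ on a face does not come from the weak general-position assumption but from properness together with the paper's footnote convention that when the line through a segment is tangent at an endpoint, \emph{both} the endpoint and the open portion count as touching the circle. Under that convention, a $B\leftrightarrow C$ transition would make $D_k(x)$ non-proper, so the transition locus is excluded from $V_k(H_p,S)$, exactly as you say. You could drop the appeal to weak general position there and cite properness plus that convention directly.
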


  \begin{figure}[h]
    \includegraphics[width=1.2 in, clip, trim = 0mm 0mm 0mm 3mm]{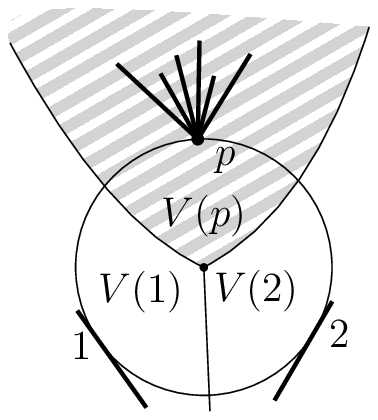}
    \includegraphics[width=1.0 in, clip, trim = 8mm -2mm 7mm 9mm]{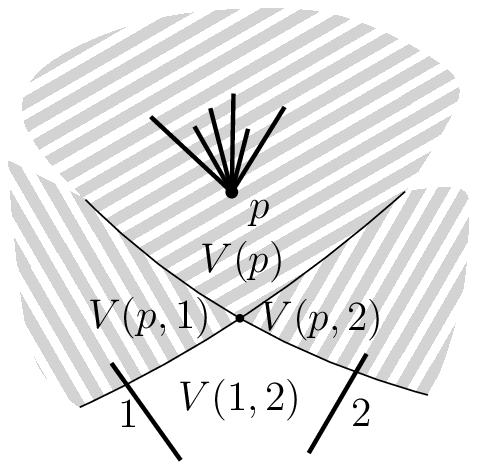}
    \includegraphics[width=1.0 in, clip, trim = 4mm 3mm 6mm 5mm]{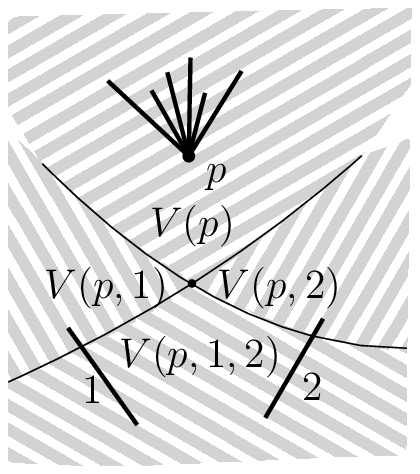}
    \includegraphics[width=1.0 in, clip, trim = 4mm 3mm 6mm 5mm]{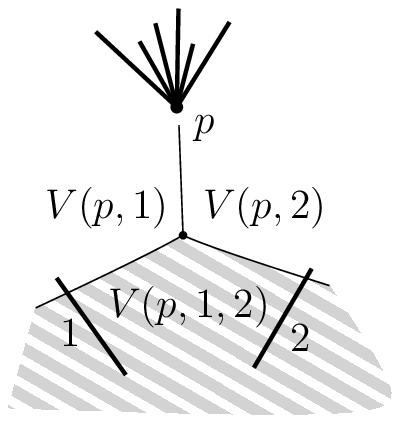}
    \\
    \put(45,0){(a)}
    \put(120,0){(b)}  
    \put(195,0){(c)}  
    \put(270,0){(d)}  
    \caption{A Type-2 Voronoi region of representative $p$, denoted as $V(p)$, and an incident Voronoi vertex for various orders $k$.
(a) $k=1$; (b) $k=2$; (c) $2<k\leq |I(p)|$ (for $k=|I(p)|$, 
$V(p)$ is Type-1); 
    (d) $k=|I(p)|+1$.
($V(p,e_1,\cdots,e_m)$ stands for $V_k(H_p,S)$, where $\{e_1,\dots,e_m\}=H_p\setminus I(p)$.)
} 
    \label{fig:pslg_vertices1}
  \end{figure}
  By Lemma~\ref{lemma:disk},  a Type-2 order-$k$ Voronoi region $V_k(H_p,S)$ can only enlarge in the order-$(k+1)$ diagram, spreading its influence into neighboring Type-1 regions. 
  At order $k=|H_p|$, $V_k(H_p,S)$ becomes Type-1.
  Figures~\ref{fig:pslg_vertices1} and~\ref{fig:pslg_vertices2} illustrate the evolution of a Type-2 region as the order of the diagram increases, where Type-2 regions are illustrated shaded.
  Figure~\ref{fig:pslg_vertices1}(a) depicts a vertex $v$ incident to a Type-2 region $V_1(H_p,S)$ (denoted for brevity in the figure as $V(p)$) and two Type-1 regions. 
  Figure~\ref{fig:pslg_vertices1}(b) shows how $p$ spreads into its neighboring regions and transforms them into Type-2 in $\mathcal{V}_{2}(S)$. 
  Figure~\ref{fig:pslg_vertices1}(c) shows the diagram for several orders $k$, $3\leq k\leq |I(p)|$.
  At $k=|I(p)|$, $V_k(H_p,S)$ becomes Type-1.
  Figure~\ref{fig:pslg_vertices1}(d) illustrates the diagram for order $k=|I(p)|+1$, when $V_k(H_p,S)$ has been absorbed by its Type-2 neighbors.
  Note that during this process, the degree of vertex $v$ is higher than three, despite the (weak) general-position assumption.

  Figure~\ref{fig:pslg_vertices2} illustrates an example of a vertex initially incident to three Type-2 Voronoi regions with representatives $p$, $r$, and $q$, respectively as shown in Fig.~\ref{fig:pslg_vertices2}(a).
  As the order increases, the Voronoi region of $q$ ($q$ has the smallest degree) becomes Type-1; 
in the next order it is split between two Type-2 regions of representatives $r$ and $p$ respectively, as shown in Fig.~\ref{fig:pslg_vertices2}(b).
  In Fig.~\ref{fig:pslg_vertices2}(c), after the region of $r$ ($V(r)$) becomes Type-1 for $k=|I(r)|$, it is split by the representatives of the neighboring Type-2 regions at order $k=|I(r)|+1$. 
  This creates a Voronoi vertex of degree five incident to portions of three bisectors.
  Later, the Voronoi region $V(p)$ will be split by its two neighbors and the incident  Voronoi vertex will obtain degree six.
  Under the weak general position assumption, six is the highest degree such a Voronoi vertex can obtain.

The order-$k$ subsets of two neighboring Type-2 Voronoi regions need not differ in exactly one element as in the ordinary case of Type-1 regions. In fact,  $V_k(H_p,S)$ and $V_k(J_p,S)$ may be neighboring and  $H_p\subset J_p$, as shown in Figures~\ref{fig:pslg_vertices2}(b),(c). 
  In this case, the Voronoi edge bounding the two regions is portion of the bisector  $b(p,y)$ for  $y\in J_p\setminus H_p$, but for any point $t\in V_k(J_p,S)$, $d(t,J_p)=d(t,p)$.
  Type-2 Voronoi regions illustrate the peculiarities listed above, however, they pose no difficulty in the construction of the diagram.
  The complexity of the diagram remains $O(k(n-k))$ as shown in the following subsection.

  \begin{figure}[h]
    \includegraphics[width=1.5 in, clip, trim = 15mm 15mm 29mm 15mm]{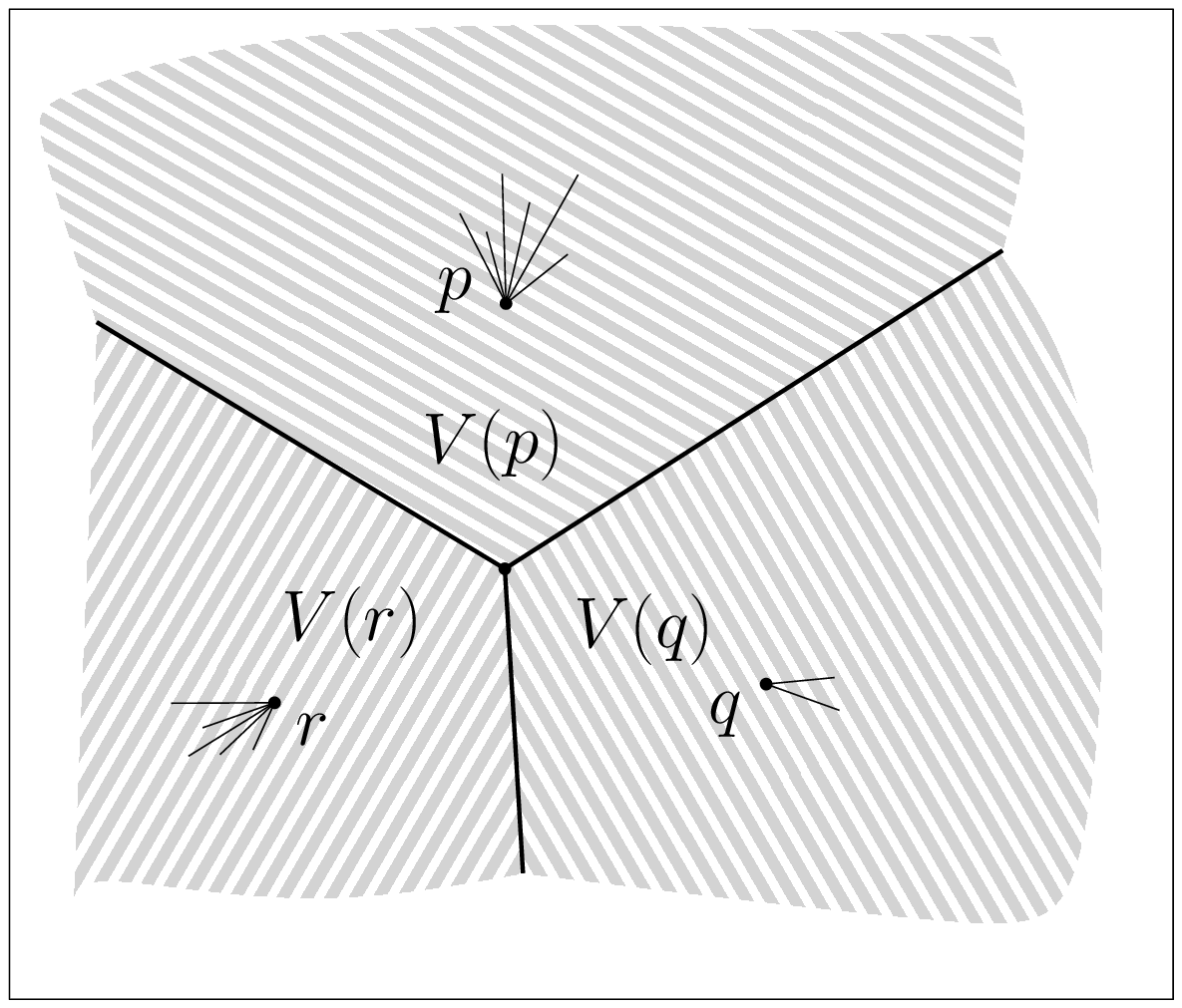}
    \includegraphics[width=1.5 in, clip, trim = 15mm 15mm 29mm 15mm]{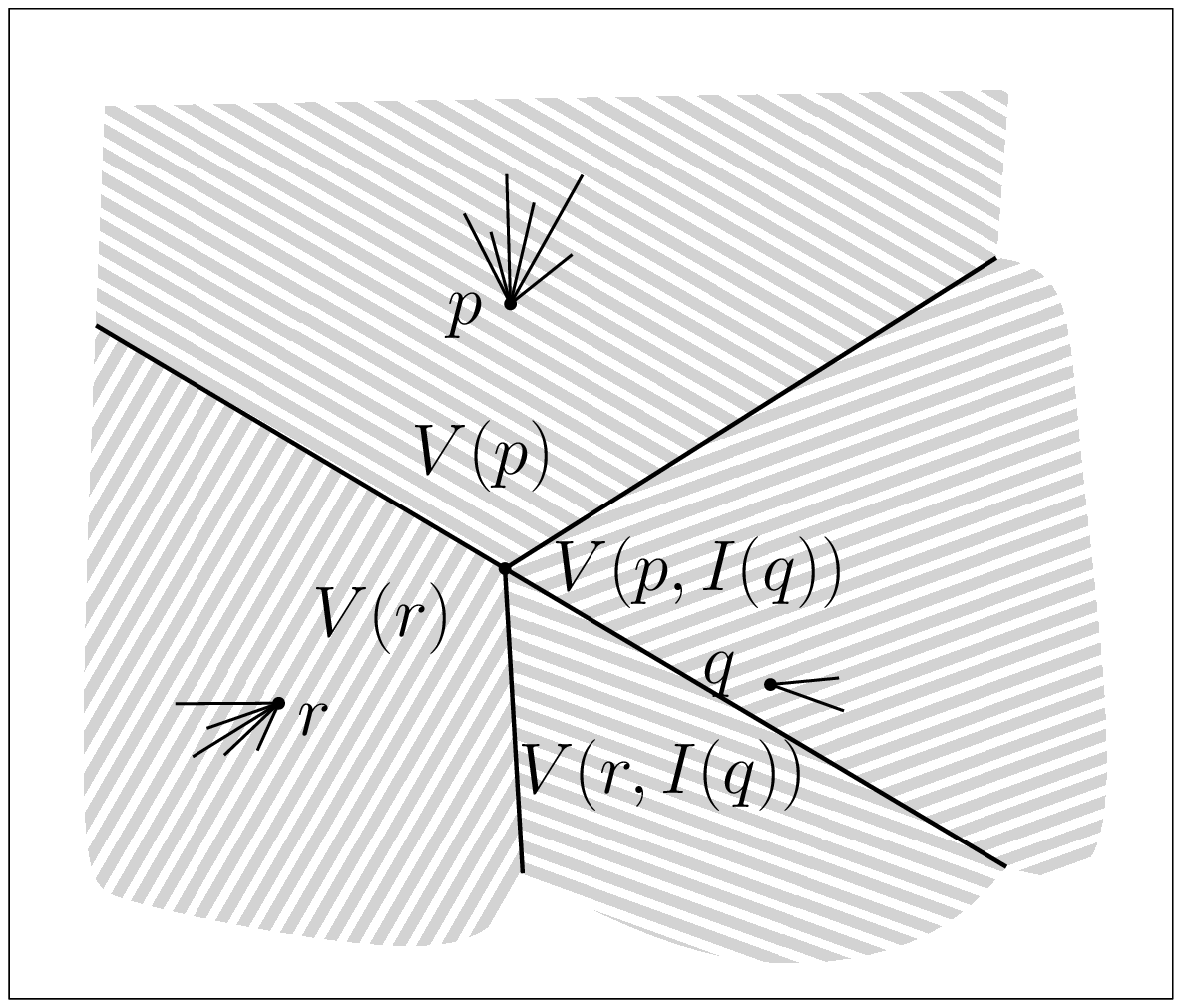}
    \includegraphics[width=1.5 in, clip, trim = 15mm 15mm 29mm 15mm]{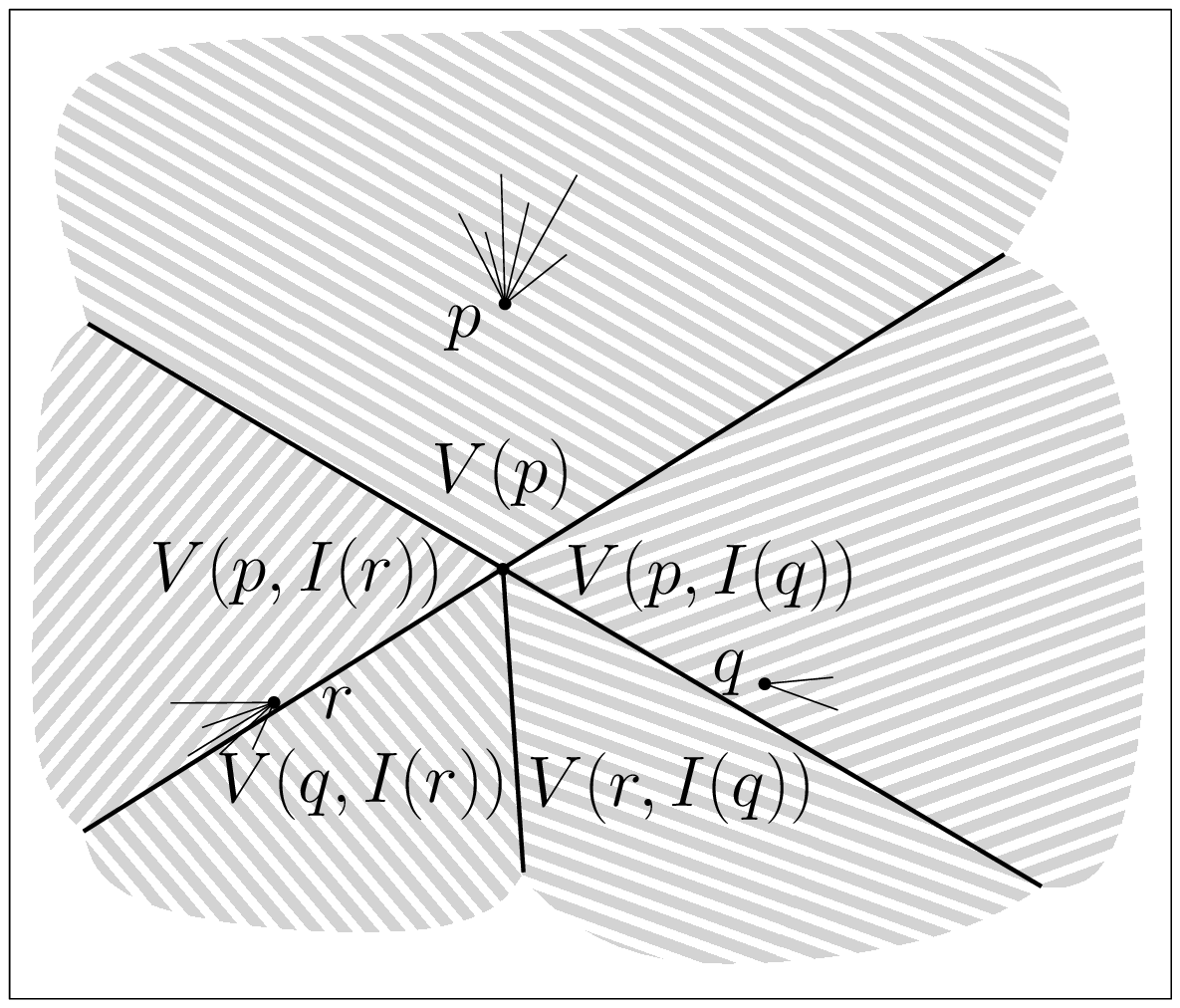}
    \\
    \put(50,0){(a)}
    \put(160,0){(b)}  
    \put(275,0){(c)}  
    \caption{(a) A Voronoi vertex in $\mathcal{V}_1(S)$ incident to three Type-2 regions; 
    (b) In $\mathcal{V}_{j+1}(S)$,  $j=|I(q)|$, region $V(q)$ is split by the representatives of the neighboring Type-2 regions;
    (c) In $\mathcal{V}_{k+1}(S)$,  $k=|I(r)|$, region $V(r)$ is split by the representatives of the neighboring Type-2 regions.
} \label{fig:pslg_vertices2}
  \end{figure}

  \subsection{Structural Complexity and Perturbation}\label{subsection:pslg_complexity}
  Let $S(\varepsilon)$ be the set of disjoint line segments obtained from $S$ by a small perturbation $\varepsilon>0$ of their common endpoints, as shown in Fig.~\ref{fig:pert}(a). 
  In particular, for every endpoint $p$ with $|I(p)|>1$, and for every line segment $s\in I(p)$, move the endpoint  of $s$  incident to $p$ along the line through $s$ and $p$ by a small amount $\delta_s<\varepsilon$ within $n(p,\varepsilon)$, $n(p,\varepsilon)=\{x~|~d(x,p)<\varepsilon\}$.
  By using variable amounts for $\delta_s$, for each segment $s$, and given the weak general-position assumption, the general-position assumption can be easily maintained.
Consider $\mathcal{V}_k(S(\varepsilon))$, see Figure~\ref{fig:pert}. It contains many artificial faces, however, its structural complexity  is $O(k(n-k))$.

In the remaining of this section, we show that the number of faces in $\mathcal{V}_k(S)$  cannot exceed the one of $\mathcal{V}_k(S(\varepsilon))$ for certain $\varepsilon$, and thus, the complexity of $\mathcal{V}_k(S)$ is also $O(k(n-k))$.
  To this aim, we use the refined version of $\mathcal{V}_k(S)$ and $\mathcal{V}_k(S(\varepsilon))$, where all regions are subdivided into the finest subfaces as obtained by superimposing the corresponding order-$(k{-}1)$ diagrams.
  Additionally, the faces of $\mathcal{V}_k(S(\varepsilon))$ are further subdivided by their elementary sites, such that for every point $x$ in a fine face, $D_k(x)=D_k^p(x)$ for exactly one elementary site $p$.

  \begin{lemma}\label{lemma:pslg_eps}
    There is an injection from the (fine) faces of $\mathcal{V}_k(S)$ to the (fine) faces of $\mathcal{V}_k(S(\varepsilon))$, for some  $\varepsilon>0$.
  \end{lemma}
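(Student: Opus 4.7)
The plan is to construct the injection by a perturbation-continuity argument: for each fine face $F$ of $\mathcal{V}_k(S)$, pick a canonical interior point $x_F$ and send $F$ to the fine face of $\mathcal{V}_k(S(\varepsilon))$ that contains $x_F$, for $\varepsilon$ chosen small enough to work uniformly for all the finitely many $x_F$'s. The argument splits into a stability step (the map is well-defined) and an injectivity step (different fine faces of $\mathcal{V}_k(S)$ land in different fine faces of $\mathcal{V}_k(S(\varepsilon))$).

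First I would fix, for every fine face $F$ of $\mathcal{V}_k(S)$, an interior point $x_F$ that stays at positive distance from the boundary of $F$ and from every common endpoint of $S$. For a Type-1 face with $S_k(x_F)=H$ and $S_{k-1}(x_F)=H'$, all distances $d(x_F,s)$ are strictly ordered; for a Type-2 face $V_k(H_p,S)$, the disk $D_k^p(x_F)$ touches $p$ while all segments in $S\setminus H_p$ are strictly farther. Since there are finitely many fine faces, I can choose a single $\varepsilon>0$ so small that replacing each $s\in S$ by its perturbed counterpart $s(\varepsilon)$ changes distances from any $x_F$ by less than the gaps identified above. This guarantees that $x_F$ lies in a unique fine face $\phi(F)$ of the refined $\mathcal{V}_k(S(\varepsilon))$, characterised by its $k$-nearest set, its $(k{-}1)$-nearest set, and the unique elementary site of $S(\varepsilon)$ touched by the order-$k$ disk at $x_F$.

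Injectivity follows by case analysis on why two fine faces $F_1\neq F_2$ differ. If $F_1,F_2$ are both Type-1, they differ in $S_k$ or in $S_{k-1}$; the perturbation is a bijection on segments, so the corresponding perturbed sets at $x_{F_1}$ and $x_{F_2}$ also differ, giving $\phi(F_1)\neq\phi(F_2)$. If $F_2$ is Type-2 with representative $p$, then $D_k(x_{F_2})$ touches $p$, so $D_k(x_{F_2})$ in the perturbed diagram touches one of the perturbed endpoints lying in the neighbourhood $n(p,\varepsilon)$ of $p$. If $F_1$ is Type-1 or Type-2 with representative $q\neq p$, the touched elementary site of $\phi(F_1)$ lies either outside every $n(p',\varepsilon)$ or inside $n(q,\varepsilon)$, and in either case it differs from the one distinguishing $\phi(F_2)$. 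If $F_1$ and $F_2$ are Type-2 with the same set $H_p=H_q$ but distinct representatives (as in the \emph{Remark} following the definition of order-$k$ subset), the two images are separated by the elementary-site subdivision exactly because $n(p,\varepsilon)\cap n(q,\varepsilon)=\emptyset$.

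The main obstacle is handling Type-2 fine faces that share the same representative $p$ but correspond to distinct neighbouring subfaces around $p$, as in Figures~\ref{fig:pslg_vertices1} and~\ref{fig:pslg_vertices2}: at such a vertex several bisectors meet non-transversely, and we must verify that after perturbation the particular perturbed segment of $I(p)$ which becomes the $k$-th nearest at $x_F$ is the one pointing towards $x_F$. This is ensured by the construction of $S(\varepsilon)$, where each endpoint $p\in I(s)$ is shifted along the line through $p$ and $s$: for $x_F$ in the subface adjacent to segment $s\in I(p)$, the perturbed copy of $s$ strictly dominates the other perturbed copies of $I(p)$ in distance to $x_F$. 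By choosing the magnitudes $\delta_s$ generically and then shrinking $\varepsilon$ further if necessary, the assignment $F\mapsto\phi(F)$ is well-defined and injective, completing the proof.
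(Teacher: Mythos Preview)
Your overall strategy---pick a witness point $x_F$ in each fine face and map $F$ to the fine face of $\mathcal{V}_k(S(\varepsilon))$ containing $x_F$---matches the paper's. The stability step is fine. The gap is in your injectivity argument: you assert that two Type-1 fine faces $F_1\neq F_2$ ``differ in $S_k$ or in $S_{k-1}$'', i.e.\ that a fine face is determined by its pair of combinatorial labels. This is exactly what fails for line segments. Section~\ref{section:disconnected} shows that a single region $V_k(H,S)$ can split into $\Omega(n)$ disjoint faces; refining by $\mathcal{V}_{k-1}(S)$ does not fix this, since two distinct faces of $V_k(H,S)$ can each contain a subface lying in the same $V_{k-1}(H\setminus\{s\},S)$. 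Thus two distinct Type-1 fine faces can carry identical labels $(H,H')$, and your case analysis gives no reason their witness points should land in different fine faces of the perturbed diagram. The same objection applies on the target side: fine faces of $\mathcal{V}_k(S(\varepsilon))$ are not determined by their labels either, so ``same labels $\Rightarrow$ same image face'' is not available, and ``different labels $\Rightarrow$ different image faces'' does not cover all cases.

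The paper sidesteps labels entirely and proves injectivity by a separation argument. For each witness point $a_j$ it takes a closed curve $\gamma(a_j)$ just outside $F_j$ that meets every neighbouring fine face, and then shrinks $\varepsilon$ so that at every point $y\in\gamma(a_j)$ the perturbed fine face at $y$ differs from the perturbed fine face at $a_j$ (because the elementary site defining $F_j$ is strictly off the boundary of $D_k(y)$). Consequently any path inside a single perturbed fine face cannot cross $\gamma(a_j)$, so no other witness $a_i$ (which lies outside $\gamma(a_j)$) can share the perturbed face of $a_j$. This is a connectivity argument, not a label argument, and it is precisely what is needed to handle disconnected regions. To repair your proof you would need either this topological step or an independent reason why witness points of equally-labelled fine faces land in distinct connected components after perturbation.
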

  \begin{proof}
    For every (fine) face $F_j$ of $\mathcal{V}_k(S)$, consider an arbitrary point $a_j$ in its interior.
    Point $a_j$ corresponds to a proper order-$k$ disk $D_k(a_j)$.
    For every elementary site $x$ that does not touch $D_k(a_j)$, let $d(a_j,x)$ be the distance from $x$ to the boundary of disk $D_k(a_j)$.
    If $x$ is an open segment, then $d(a_j,x)$ is defined as the difference by which we have to shrink or expand the disk to make it touch $x$.
    Let $\varepsilon_1=\min_{x,a_j}d(a_j,x)/2$.
    Let $D_k(a_j,\varepsilon)$ denote the order-$k$ disk in $\mathcal{V}_k(S(\varepsilon))$ and let $S_k(a_j,\varepsilon)$ be the set of line segments intersected by it. 
    For $\varepsilon=\varepsilon_1$, and for every point $a_j$, we have $S_k(a_j,\varepsilon)\subseteq S_k(a_j)$.
    We map every point $a_j$ to the face in $\mathcal{V}_k(S(\varepsilon))$ to which it belongs.

    To avoid having more than one point mapped to the same face in $\mathcal{V}_k(S(\varepsilon))$, we do the following.
    Consider point $a_j$ and its corresponding face $F_j$ in $\mathcal{V}_k(S)$.
    Let $\gamma(a_j)$ be a curve passing around $F_j$ without touching $F_j$, but intersecting all the faces adjacent to $F_j$.
    Consider an arbitrary point $y$ on that curve.
    Let $p$ be the elementary site that defines $F_j$, and let $d(y,a_j)$ be the minimum distance from $p$ to the boundary of disk $D_k(y)$. 
    In case $p$ touches $D_k(y)$ (i.e. the face of $y$ is a Type-2 face of representative $p$) let $d(y,a_j)$ be the minimum distance from any other elementary site to the boundary of $D_k(y)$.
    Thus, $d(y,a_j)>0$ for any point $y\in \gamma(a_j)$.
    Let us set $\varepsilon_2=\min_{a_j, y\in \gamma(a_j)}d(y,a_j)/2$ and $\varepsilon=\min\{\varepsilon_1,\varepsilon_2\}$.
    The choice of $\varepsilon_2$ guarantees that the faces of $\mathcal{V}_k(S(\varepsilon))$ intersected by $\gamma(a_j)$ differ from the face in $\mathcal{V}_k(S(\varepsilon))$ assigned to $a_j$.
    This implies that no two points are mapped to the same face in $\mathcal{V}_k(S(\varepsilon))$.
  \end{proof}

  \begin{figure}[h]
    \centering
    \includegraphics[width=1.5 in]{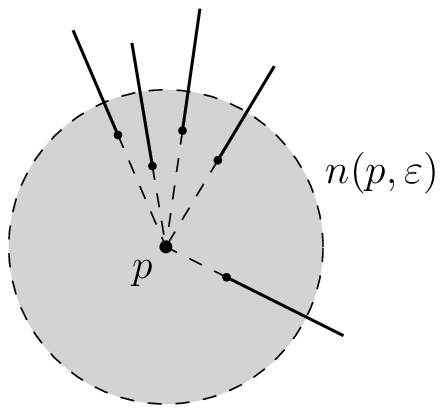}
    \includegraphics[width=3 in, clip, trim = 0mm 0mm 0mm 0mm]{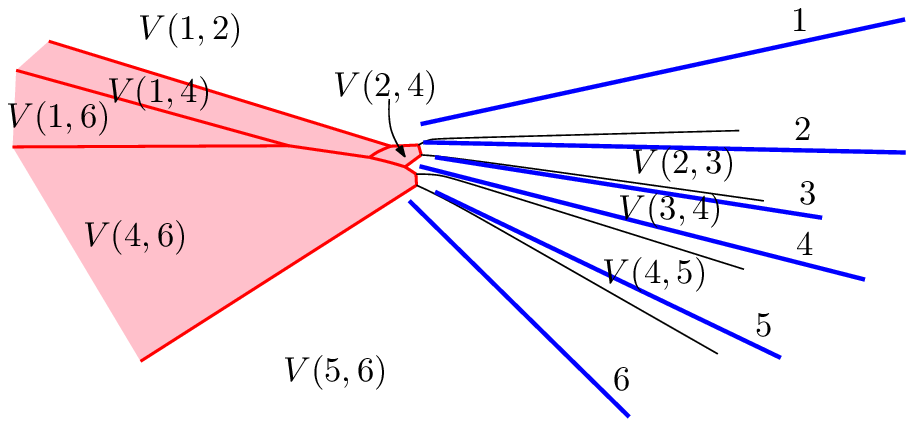}
    \caption{(a) Untangling abutting line segments at endpoint $p$; (b) Order-2 Voronoi diagram of untangled line segments. 
    Artificial edges and faces are shown bold. 
$V(e_1,\cdots,e_m)$ stands for $V_k(H,S)$, where  $\{e_1,\dots,e_m\}=H$.
  }
    \label{fig:pert}
  \end{figure}

  By Lemma~\ref{lemma:pslg_eps}, we conclude.

  \begin{theorem}\label{theorem:bound_pslg}
    The structural complexity of the order-$k$ Voronoi diagram of $n$ line segments forming a planar straight-line graph is $O(k(n-k))$.
  \end{theorem}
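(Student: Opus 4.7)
The plan is to combine Lemma~\ref{lemma:pslg_eps} with Theorem~\ref{theorem:bound}, using the perturbed set $S(\varepsilon)$ as an intermediary. Because the endpoint perturbation produces $n$ \emph{disjoint} line segments under the general-position assumption, Theorem~\ref{theorem:bound} directly applies to $\mathcal{V}_k(S(\varepsilon))$, giving it combinatorial complexity $O(k(n-k))$. In particular, its face count is $O(k(n-k))$, and so is the face count of its fine refinement (obtained by overlaying $\mathcal{V}_{k-1}(S(\varepsilon))$ and further subdividing by elementary sites), since overlaying two planar subdivisions of size $O(k(n-k))$ and $O((k-1)(n-k+1))$ increases the count by only a constant factor per face.

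With the bound on the perturbed diagram in hand, I would invoke the injection from Lemma~\ref{lemma:pslg_eps}: the number of fine faces of $\mathcal{V}_k(S)$ is at most the number of fine faces of $\mathcal{V}_k(S(\varepsilon))$, which is $O(k(n-k))$. The (unrefined) face count of $\mathcal{V}_k(S)$ is obviously no larger, so $\mathcal{V}_k(S)$ has $O(k(n-k))$ faces, including both Type-1 and Type-2 regions (each Type-2 face of $\mathcal{V}_k(S)$ is itself a face and is accounted for by the injection).

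To upgrade the face bound to a bound on the total combinatorial complexity (edges and vertices), I would appeal to a local degree argument rather than the clean Euler identity of Lemma~\ref{lemma:euler}, since the weak general-position assumption only guarantees that each Voronoi vertex is incident to at most six edges and at most six faces (as observed in the discussion surrounding Fig.~\ref{fig:pslg_vertices2}). A standard handshake/Euler computation on the planar graph of $\mathcal{V}_k(S)$ then gives $V_k, E_k = O(F_k) = O(k(n-k))$, completing the proof.

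The main technical obstacle is really hidden inside Lemma~\ref{lemma:pslg_eps}, which we are allowed to use; the only subtlety that remains at this stage is to be careful that Type-2 regions of $\mathcal{V}_k(S)$ — which have no analogue in the disjoint-segment diagram $\mathcal{V}_k(S(\varepsilon))$ — are nevertheless captured by the injection, because under the chosen $\varepsilon$ an interior point of a Type-2 face of $\mathcal{V}_k(S)$ lands in a well-defined (Type-1) face of $\mathcal{V}_k(S(\varepsilon))$, namely the one induced by the perturbed neighbors of the representative endpoint $p$.
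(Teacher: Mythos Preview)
Your approach is essentially identical to the paper's: the paper's entire proof is the single line ``By Lemma~\ref{lemma:pslg_eps}, we conclude,'' relying on the injection into the fine faces of $\mathcal{V}_k(S(\varepsilon))$ together with Theorem~\ref{theorem:bound} for the disjoint perturbed segments. One small caution: your justification that the fine refinement of $\mathcal{V}_k(S(\varepsilon))$ has $O(k(n-k))$ faces (``overlaying two planar subdivisions \ldots\ increases the count by only a constant factor per face'') is not true for generic overlays; here it holds specifically because of Lemma~\ref{lemma:tree} (the portion of $\mathcal{V}_{k-1}$ inside each order-$k$ face is a tree, so no new edge crossings arise), and the elementary-site refinement adds only a constant factor per subface --- you should cite that structure rather than a generic overlay argument.
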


  \section{Intersecting Line Segments}\label{section:intersecting_segments}

  In this section we extend our complexity results of Section~\ref{section:structural_complexity} to intersecting line segments with a total of $I$ intersection points, $I=O(n^2)$.
  We show that  segment-intersections influence the Voronoi diagram for small $k$ and  the influence grows weaker as $k$ increases.
  For  $k\geq n/2$, intersections no longer affect the asymptotic complexity of the order-$k$ Voronoi diagram.

In the following, we extend Lemma~\ref{lemma:total_unb}, Theorem~\ref{lemma:lee}, and Theorem~\ref{theorem:bound}  to intersecting line segments as Lemma~\ref{lemma:total_unbI}, Theorem~\ref{theorem:leeI}, and Theorem~\ref{theorem:int_bound}, respectively.
  To simplify the analysis, we assume that no two segments share a common endpoint and that no more than two segments intersect at the same point.
  Recall that the numbers of faces, edges, vertices, and unbounded edges of $\mathcal{V}_k(S)$ are denoted as $F_k$, $E_k$, $V_k$, and $U_k$, respectively.

  \begin{lemma}\label{lemma:total_unbI}
    The total number of unbounded edges in the order-$k$ Voronoi diagram for all orders is $\sum_{i=1}^{n-1}U_i=n(n-1)+2I$
  \end{lemma}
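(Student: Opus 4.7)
My plan is to extend the counting argument from Lemma~\ref{lemma:total_unb} by re-examining, pair by pair, how many common supporting halfplanes two segments admit; the only change is that an intersecting pair admits four rather than two.

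First, I would restate the bijective correspondence given by Corollary~\ref{cor:unb_edge}: every unbounded edge of $\mathcal{V}_k(S)$ separates two regions $V_k(H\cup\{s_1\},S)$ and $V_k(H\cup\{s_2\},S)$ for some pair $\{s_1,s_2\}$ and some set $H\subseteq S\setminus\{s_1,s_2\}$ of size $k-1$, and this data is recovered uniquely from a supporting halfplane of $s_1$, $s_2$, and $H$ (namely, $H$ is the set of segments in $S\setminus\{s_1,s_2\}$ that the halfplane meets). Hence
\begin{equation*}
\sum_{k=1}^{n-1} U_k \;=\; \sum_{\{s_1,s_2\}\subseteq S} \nu(s_1,s_2),
\end{equation*}
where $\nu(s_1,s_2)$ is the number of open halfplanes whose bounding line passes through one endpoint of each of $s_1,s_2$ and which contains neither segment in its interior.

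Next I would fix a pair $(s_1,s_2)$ and analyze $\nu(s_1,s_2)$ via the convex hull of the four endpoints. Under the paper's assumptions (no two segments share an endpoint, no three segments concurrent; plus the underlying general position excluding three collinear endpoints), the four endpoints are in convex position and form a quadrilateral $Q$. A supporting halfplane is exactly the outer halfplane of an edge of $Q$ that joins an endpoint of $s_1$ to an endpoint of $s_2$. If $s_1\cap s_2=\emptyset$, the endpoints appear around $Q$ with the two endpoints of $s_1$ consecutive and likewise for $s_2$, so $s_1$ and $s_2$ themselves are two opposite edges of $Q$, and only the other two edges count: $\nu(s_1,s_2)=2$. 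If $s_1$ and $s_2$ cross in their interiors, the endpoints appear around $Q$ in alternating order (this is the combinatorial characterization of proper crossing), so every edge of $Q$ joins an endpoint of $s_1$ to an endpoint of $s_2$, none coinciding with $s_1$ or $s_2$; hence $\nu(s_1,s_2)=4$.

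Finally I would sum. Since two line segments intersect in at most one point and each intersection involves exactly two segments, the number of intersecting pairs equals $I$, so
\begin{equation*}
\sum_{k=1}^{n-1} U_k \;=\; 2\!\left(\binom{n}{2}-I\right) + 4I \;=\; n(n-1)+2I,
\end{equation*}
as claimed. The only real sticking point is the alternation claim in the crossing case, which is a short direct argument: the segment $s_1$ is contained in the convex hull of its two endpoints, and since $s_2$ meets $s_1$ transversely, the endpoints of $s_2$ lie strictly on opposite sides of the line through $s_1$; the symmetric statement for $s_1$ relative to $s_2$ then forces the cyclic order on $Q$ to alternate.
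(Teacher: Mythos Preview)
Your approach is essentially the paper's: count supporting halfplanes pair by pair, getting two per disjoint pair and four per crossing pair, then sum. The paper simply asserts these counts; you supply a convex-hull justification, which is a nice addition but contains one inaccuracy.

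The claim that the four endpoints of a disjoint pair are always in convex position is false under the stated assumptions. For instance, take $s_1$ with endpoints $(0,0),(1,0)$ and $s_2$ with endpoints $(0.5,0.5),(0.5,2)$: these are disjoint, no three endpoints are collinear, yet $(0.5,0.5)$ lies strictly inside the triangle spanned by the other three. Your quadrilateral $Q$ does not exist here. The repair is short: a supporting halfplane is exactly a line supporting $\operatorname{conv}\{p_1,q_1,p_2,q_2\}$ through one endpoint of each segment. When one endpoint, say $p_2$, is interior, the hull is the triangle $p_1q_1q_2$; its three supporting edges are $p_1q_1$ (which is $s_1$, so excluded) and $q_1q_2$, $p_1q_2$, giving $\nu=2$ as needed. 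Moreover, this interior-point configuration cannot occur for a crossing pair (the segment from the opposite vertex $q_2$ to the interior point $p_2$ stays inside the triangle and cannot meet the edge $p_1q_1=s_1$), so your alternating-order argument for the crossing case is sound as written. With this small patch the proof is complete and matches the paper's.
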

  \begin{proof}
    Consider a pair of line segments.
    If the pair does not intersect, then it defines exactly two open halfplanes, such that  each halfplane induces exactly one unbounded Voronoi edge in $\mathcal{V}_k(S)$ for some order $k$ (see Lemma~\ref{lemma:total_unb}). 
    If the pair intersects, then it induces exactly four  such unbounded Voronoi edges. 
    Thus, each pair of intersecting segments induces exactly two additional unbounded Voronoi edges,  in addition to those counted  in Lemma~\ref{lemma:total_unb}.
    Therefore, the total number of unbounded faces in all orders is $\sum_{i=1}^{n-1}U_i=2{n\choose 2}+2I=n(n-1)+2I$. 
  \end{proof}

  \begin{theorem}\label{theorem:leeI}
    The number of faces in the order-$k$ Voronoi diagram of a set $S$ of $n$ line segments with $I$ intersections is:  
    \begin{eqnarray}
      F_k=2kn-k^2-n+1-\sum_{i=1}^{k-1}U_i+2I \label{eq:app_leeI}\\
      \mbox{or equivalently~} F_k=1-(n-k)^2+\sum_{i=k}^{n-1}U_i \label{eq:app_dual_leeI}
    \end{eqnarray}
  \end{theorem}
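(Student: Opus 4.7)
The plan is to mirror the inductive structure of Theorem~\ref{lemma:lee}, adjusting the base cases and Euler identities to account for the $I$ intersection points of $S$, and to show that the resulting recursion retains the form of Eq.~(\ref{eq:iterative}). The key observation is that an intersection vertex appears only in $\mathcal{V}_1(S)$: at an intersection point $p$ of segments $s_i, s_j \in S$, we have $d(p, s_i) = d(p, s_j) = 0$, so $p$ lies in the interior of $V_k(H, S)$ for every $H \supseteq \{s_i, s_j\}$ with $|H| = k \geq 2$. Consequently, $\mathcal{V}_k(S)$ for $k \geq 2$ retains the degree-$3$ vertex structure of the disjoint case, while $\mathcal{V}_1(S)$ gains exactly $I$ Voronoi vertices of degree four, located at the crossings.

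I would first establish the two base cases. For $F_1 = n + 2I$: each segment $s \in S$ induces $1 + d_s$ faces in $\mathcal{V}_1(S)$, where $d_s$ is the number of segments crossing $s$, since each crossing locally pinches the region around $s$ into one additional component; summing over $S$ gives $n + \sum_s d_s = n + 2I$. For $F_2$: the dual-graph identity of Lemma~\ref{lemma:euler} must be revised for $\mathcal{V}_1(S)$ to $3V_1 + I + U_1 = 2E_1$ (each degree-$4$ vertex contributes one extra above the degree-$3$ baseline), which combined with Euler gives $E_1 = 3(F_1 - 1) - U_1 - I$. Corollary~\ref{corollary:tree} must also be adapted: a face $F$ of $\mathcal{V}_2(S)$ encloses a tree portion of $\mathcal{V}_1(S)$ whose interior vertices are exactly the intersection vertices contained in $F$ (triple vertices of $\mathcal{V}_1$ coincide with vertices of $\mathcal{V}_2$ and lie on $\partial F$). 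A tree with $i$ interior degree-$4$ vertices and all remaining vertices being leaves has $3i + 1$ edges, so summing over all faces of $\mathcal{V}_2(S)$ gives $E_1 = 3I + F_2$, yielding $F_2 = 3(n-1) - U_1 + 2I$. For $k \geq 1$, Claim~1 of Theorem~\ref{lemma:lee} ($F_{k+2} = E_{k+1} - 2V'_k$) carries over verbatim, and Lemma~\ref{lemma:euler} is unchanged for $k \geq 2$.

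Substituting the modified Claim~1 for $k = 0$ ($F_2 = E_1 - 3I$) together with the modified Euler relation for $\mathcal{V}_1$ into the derivation of Eq.~(\ref{eq:iterative}), the extra $-3I$ and $-I$ terms cancel, so the recursion $F_{k+3} = 2F_{k+2} - F_{k+1} - 2 - U_{k+2} + U_{k+1}$ holds for all $k \geq 0$. Induction on $k$ then yields Eq.~(\ref{eq:app_leeI}), and substituting Lemma~\ref{lemma:total_unbI} converts this to the dual form Eq.~(\ref{eq:app_dual_leeI}). The main obstacle I anticipate lies in rigorously verifying that the portion of $\mathcal{V}_1(S)$ enclosed in a face of $\mathcal{V}_2(S)$ remains a tree in the presence of intersections --- this requires extending Lemmas~\ref{lemma:face_tree}, \ref{lemma:visibility_property}, and~\ref{lemma:tree} to farthest diagrams of subsets $H$ whose segments may cross, and confirming that the additional degree-$4$ vertices introduce no cycles.
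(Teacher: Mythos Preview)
Your approach is correct and essentially the same as the paper's: modified Euler identities for $\mathcal{V}_1(S)$ accounting for the $I$ degree-four intersection vertices, direct computation of $F_1$ and $F_2$, Claim~1 for $k\geq 1$, and induction via Eq.~(\ref{eq:iterative}); the paper additionally computes $F_3$ directly and uses $F_2,F_3$ as base cases, remarking that $F_3$ is \emph{not} obtained from the recursion, whereas your cancellation argument shows that in fact it is. Your anticipated obstacle is mild: a face of $\mathcal{V}_2(S)$ is $V_2(\{s_i,s_j\},S)$ for some pair, so it contains at most the single crossing $s_i\cap s_j$ (if nonempty), and the bisector of two segments---intersecting or not---is always a tree, so the required extension of Lemma~\ref{lemma:tree} is immediate.
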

  \begin{proof}
    Consider the partitioning of segments into pieces as obtained by their intersection points.
    Every component of a segment induces exactly one face in $\mathcal{V}_1(S)$, thus, $\mathcal{V}_1(S)$ has two types of vertices: 
    (1) $I$ {\em intersection}\/ points, which are incident to exactly four Voronoi edges each; and 
    (2)  $V_1-I$ {\em regular}\/ Voronoi vertices, which are incident to three Voronoi edges each (under the general position assumption).
    Regular Voronoi vertices are the \emph{new vertices}\/ of $\mathcal{V}_1(S)$, thus, $V_1'=V_1-I$.

    Consider the dual graph of $\mathcal{V}_1(S)$, augmented with a vertex at infinity to connect the dual of unbounded faces.
    Using standard arguments,  $2E_1=4I+3(V_1-I)+U_1$ (see also the proof of Theorem~\ref{theorem:bound}). 
    Note that the dual graph consists of faces of four edges each that correspond to {\em intersections}, and faces of three edges each that correspond to {\em regular}\/ Voronoi vertices of $\mathcal{V}_1(S)$. 
    Euler's formula  and the latter equation imply  $E_1=3(F_1-1)-U_1-I$.  
    Thus, $E_1=3n-3-U_1+5I$. 
    By Euler's formula, $V_1=1+E_1-F_1=1+E_1-n-2I$, thus, $V_1=2n-2-U_1+3I$. 

    Consider now $\mathcal{V}_2(S)$, which has two types of faces: 
    faces that contain exactly one edge of $\mathcal{V}_1(S)$ and faces that contain an {\em intersection}\/ point of $\mathcal{V}_1(S)$. 
    As a result, the total number of faces in $\mathcal{V}_2(S)$ is $F_2=(E_1-4I)+I=E_1-3I$. 
    Therefore, $F_2=3(F_1-1)-U_1-4I=3(n-1)-U_1+2I$. 
    Since all Voronoi vertices of $\mathcal{V}_2(S)$ have degree three, Lemma~\ref{lemma:euler} implies that $E_2=3F_2-3-U_2$.
    Plugging in  the formula for $F_2$, we obtain $E_2=9n-12-3U_1-U_2+6I$. 

    For an order $i$-diagram, $i\geq 3$, every vertex of the diagram and every vertex of the farthest subdivision is incident to exactly 3 edges, and thus, Claim~1 in the proof of Theorem~\ref{lemma:lee} and its proof remain identical. 
    Thus, the recursive formula of Eq.~(\ref{eq:iterative}) remains valid for any $k\geq 1$.

    Using Claim~1 of Theorem~\ref{lemma:lee}, $F_3=E_2-2V_1'=E_2-2(V_1-I)$. 
    Plugging in the formulas obtained for $E_2$ and $V_1$, we obtain $F_3=5n-8-U_1-U_2+2I$. 

    Since the recursive formula in Eq.~(\ref{eq:iterative}) remains valid for any $k\geq 1$, we can use induction, with bases cases the above formulas for $F_2$ and $F_3$, and derive Eq.~(\ref{eq:app_leeI}).
    Note that the main difference with the derivation of Theorem~\ref{lemma:lee} are the base cases $F_1,F_2$, and $F_3$, where $F_3$ is no longer obtained by Eq.~(\ref{eq:iterative}). 
    Then Eq.~(\ref{eq:app_dual_leeI}) can be derived from Eq.~(\ref{eq:app_leeI}) using Lemma~\ref{lemma:total_unbI}.
  \end{proof}

  \begin{theorem}\label{theorem:int_bound} 
    The combinatorial complexity of the order-$k$ Voronoi diagram of $n$  properly intersecting line segments with $I$ intersections is  $O(k(n-k)+I)$ for $1\leq k< n/2$, and $O(k(n-k))$ for $n/2\leq k\leq n-1$.
  \end{theorem}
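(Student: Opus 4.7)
The strategy is to apply the two equivalent formulas of Theorem~\ref{theorem:leeI} to the two ranges of $k$ separately, and then convert the bound on $F_k$ to a bound on the full graph complexity using Euler's formula.

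For $1 \leq k < n/2$, the plan is to use Eq.~(\ref{eq:app_leeI}) directly. Since $-\sum_{i=1}^{k-1} U_i \leq 0$, we immediately obtain $F_k \leq 2kn - k^2 - n + 1 + 2I$. Because $n - k \geq n/2 \geq k$ in this range, one has $2n - k \leq 2n \leq 4(n-k)$, and therefore $2kn - k^2 = k(2n-k) \leq 4k(n-k)$. Discarding the non-positive constant $-n+1$ yields $F_k = O(k(n-k) + I)$, which is the desired bound for small $k$.

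For $n/2 \leq k \leq n-1$, the plan is to use the dual formula Eq.~(\ref{eq:app_dual_leeI}), which gives $F_k \leq \sum_{i=k}^{n-1} U_i$. The key input is Lemma~\ref{lemma:unb}, whose conclusion $\sum_{i=k}^{n-1} U_i = O(n(n-k))$ I would argue extends without asymptotic change to intersecting segments (see below). Since $k \geq n/2$ implies $n = O(k)$, this immediately yields $F_k = O(k(n-k))$, independent of $I$.

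To upgrade the face count to a bound on the total combinatorial complexity, I would observe that for $k \geq 2$ every Voronoi vertex still has degree three (the degree-four intersection vertices of $S$ occur only in $\mathcal{V}_1(S)$), so the Euler relations derived in Lemma~\ref{lemma:euler} continue to apply and give $E_k, V_k = O(F_k)$. The base case $k = 1$ is handled by the explicit formulas $F_1 = n + 2I$, $E_1 = 3n - 3 - U_1 + 5I$, $V_1 = 2n - 2 - U_1 + 3I$ already extracted in the proof of Theorem~\ref{theorem:leeI}, all of which are $O(n + I) = O(k(n-k) + I)$.

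The main obstacle I expect is justifying that Lemma~\ref{lemma:unb} survives the passage to intersecting segments, since it was originally stated for disjoint ones. My plan is to re-examine its proof and note that the arrangement of wedges $W$ is built from the $2n$ segment endpoints alone, so its $(\leq r)$-level complexity $g_{\leq r}(n) = O(n(r+1))$ is unaffected by whether the primal segments cross. The only change is in the correspondence between unbounded Voronoi edges and the shared vertices of consecutive levels: an intersecting pair $(s_i,s_j)$ produces four supporting halfplanes instead of two (consistent with Lemma~\ref{lemma:total_unbI}), each still corresponding to a vertex where the dual lines through an endpoint of $s_i$ and an endpoint of $s_j$ meet on the boundary of $w_i$ and $w_j$. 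Hence $U_k = O(g_{n-k-1}(n))$ still holds up to a constant factor, and the summation bound $\sum_{i=k}^{n-1} U_i = O(n(n-k))$ goes through unchanged.
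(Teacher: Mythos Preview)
Your proposal is correct and follows essentially the same route as the paper: apply Eq.~(\ref{eq:app_leeI}) for $k<n/2$, apply Eq.~(\ref{eq:app_dual_leeI}) together with Lemma~\ref{lemma:unb} for $k\geq n/2$, and note that the proof of Lemma~\ref{lemma:unb} goes through unchanged for intersecting segments because the wedge arrangement depends only on the $2n$ endpoints. The paper states all of this more tersely, while you supply the explicit inequality $2kn-k^2\leq 4k(n-k)$, the Euler-formula upgrade from $F_k$ to $E_k,V_k$, and the $k=1$ base case; these are welcome clarifications but not a different argument.
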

  \begin{proof}
    For $1\leq k<n/2$, Eq.~(\ref{eq:app_leeI}) of Theorem~\ref{theorem:leeI} directly implies $F_k=O(k(n-k)+I)$.
    The proof of Lemma~\ref{lemma:unb} remains valid for any set of arbitrary line segments, including intersecting ones.
    Thus, for $n/2\leq k\leq n-1$, Eq.~(\ref{eq:app_dual_leeI}) of Theorem~\ref{theorem:leeI} and Lemma~\ref{lemma:unb} imply $F_k=O(k(n-k))$. 
  \end{proof} 

  \section{The Iterative Construction}\label{section:algorithms}
  To compute the diagram, we can use the standard iterative approach to construct higher-order Voronoi diagrams (see e.g.,~\cite{Lee82}) and enhance it with the ability to  deal with disconnected  regions that are present in the case of line segments.
  Although not very efficient for arbitrary $k$, the iterative construction is basic and it is also valuable to applications, where lower order diagrams  are required to be computed in any case, such as in~\cite{Papadopoulou11}.

  Given $\mathcal{V}_i(S)$, the iterative construction considers every face $F$ of every region  $V_i(H,S)$ and computes  $\mathcal{V}_1(S\setminus H)$ within the interior of $F$; this gives the order-$(i{+}1)$ subdivision within $F$.
  For a PSLG, only faces of Type-1 need to be considered because faces of Type-2 contain no portions of the order-$(i{+}1)$ diagram. 
  Then, the iterative construction merges any two neighboring order-$(i{+}1)$ faces that belong to the same $(i{+}1)$-subset and removes the corresponding portion of  the boundary of $F$. 
  In case of disjoint sites, all edges of $\mathcal{V}_i(S)$ are removed to obtain, $\mathcal{V}_{i+1}(S)$.
  In case of a PSLG, edges incident to Type-2 Voronoi regions remain for several orders.

  Given a face $F$ of region $V_i(H,S)$, let $S_F$ denote the collection of segments in $S\setminus H$ that define a Voronoi edge along the boundary of $F$, $\partial F$. 
  In case of a PSLG, let $F$ be a face of Type-1.
  Let $\mathcal{V}_1(F)$ denote the portion of $\mathcal{V}_1(S_F)$ in the interior of $F$.
  By the definition of an order-$(i{+}1)$ region, $\mathcal{V}_1(F)$ corresponds exactly to $\mathcal{V}_1(S\setminus H)$ within $F$.
  The main operation of the iterative construction is to compute $\mathcal{V}_1(F)$.
  Figure~\ref{fig:iterative_disconnected} illustrates an unbounded  face $F$ shaded and $\mathcal{V}_1(F)$ in dashed lines. 
  Because $F$ is  unbounded, $\mathcal{V}_1(F)$  is augmented with an artificial point at infinity, which is assumed to be incident to all unbounded Voronoi edges.

  Because order-$k$ Voronoi regions may be disconnected, a segment $s\in S_F$ may appear multiple times along $\partial F$.
  In fact, it may appear $\Theta(|S_F|)$ times as illustrated  in Figure~\ref{fig:iterative_disconnected}.
  Nevertheless, $\mathcal{V}_1(F)$ remains a tree structure as illustrated in the following lemma.
  Given a point $x$ in a face $P$ of the region of $s\in S_F$  in $\mathcal{V}_1(F)$, 
  let $r(s,x)$ denote the ray that realizes the distance from segment $s$ to $x$, emanating away from $s$ (see Figure~\ref{fig:iterative_disconnected}). 

  \begin{lemma} 
    \label{lemma:boundary_visibility_property}
    The graph structure of $\mathcal{V}_1(F)$ is a tree\footnote{In case of an unbounded face $F$, we assume an artificial  vertex at infinity incident to all unbounded edges}. 
    Furthermore, every face $P$ of $\mathcal{V}_1(F)$ has the following visibility property: for every point $x$ in $P$, open segment $xa_x$,  where $a_x$ is the point along the boundary of $F$ first intersected by $r(s,x)$, lies entirely in $P$.
  \end{lemma}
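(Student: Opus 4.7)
The plan is to prove the visibility property first, by a shrinking-disk argument in the spirit of Lemma~\ref{lemma:visibility_property}, and then derive the tree property from it.

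We fix $x\in P$, where $P$ is the face of $\mathcal{V}_1(F)$ defined by $s\in S_F$, and let $p\in s$ be the foot realizing $d(x,s)=|xp|$. We take $a_x$ to be the crossing of $r(s,x)$ with $\partial F$ closest to $x$ in the direction of $p$, so that the open segment $(x,a_x)$ lies in $F^\circ$ by construction. For any $y\in(x,a_x)$ we compare the disks $D_s(x)$ and $D_s(y)$, both tangent to $s$ at the common point $p$ and on the same side. Since $|yp|<|xp|$, the smaller disk $D_s(y)$ is contained in $D_s(x)$. Because $x\in V_1(s,S\setminus H)$, no segment of $S\setminus H$ other than $s$ meets the interior of $D_s(x)$; the same therefore holds for $D_s(y)$, so $s$ is the unique closest segment of $S\setminus H$ to $y$, and $y\in V_1(s,S\setminus H)\cap F=P$. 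The whole segment $[x,y]$ stays in $F^\circ$ and in $V_1(s,S\setminus H)$, hence in the same connected component as $x$, giving $(x,a_x)\subseteq P$.

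For the tree property, the visibility property immediately implies that every face of $\mathcal{V}_1(F)$ has a point of $\partial F$ in its closure. Suppose for contradiction that the graph of $\mathcal{V}_1(F)$ has a cycle interior to $F$. Since $F$ is simply connected (as a face of a Voronoi region), the cycle encloses a subregion $R\subseteq F^\circ$ made up of one or more faces of $\mathcal{V}_1(F)$; each such face would be disjoint from $\partial F$, contradicting what was just established. So $\mathcal{V}_1(F)$ is acyclic, and with the artificial infinity vertex attached to all unbounded edges (for unbounded $F$), or with $\partial F$ viewed as the root (for bounded $F$), the graph is a tree. It contains at least one edge because $F$ is bounded by at least two edges of $\mathcal{V}_i(S)$, forcing $|S_F|\ge 2$ and at least one bisector of $\mathcal{V}_1(S\setminus H)$ to cross $F$.

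The main obstacle will be pinning down $a_x$ so that the open segment $(x,a_x)$ is guaranteed to lie in $F^\circ$: a naive reading of ``first $\partial F$-crossing starting from $p$'' could fail in the pathological case where $r(s,x)$ dips in and out of $F$ between $p$ and $x$, placing part of $(x,a_x)$ outside $F$ and breaking the shrinking-disk step. Selecting $a_x$ as the crossing closest to $x$ removes this hazard cleanly, and the rest of the argument then parallels Lemma~\ref{lemma:visibility_property} verbatim. A secondary subtlety is the passage from ``acyclic'' to ``tree'', resolved by the artificial infinity vertex in the unbounded case and by rooting at $\partial F$ in the bounded case, as flagged in the footnote.
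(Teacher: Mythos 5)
Your proof of the visibility property is essentially the paper's own argument: you compare the nested tangent disks $D_s(y)\subseteq D_s(x)$ along $r(s,x)$ and conclude that $y$ stays in the region of $s$ inside $F$; pinning $a_x$ down as the boundary crossing nearest to $x$ is a reasonable way to guarantee $(x,a_x)\subset F$, a point the paper leaves implicit. One local slip: $V_1(s,S\setminus H)\cap F=P$ is false in general -- that set may consist of many faces, which is exactly the difficulty of this lemma -- but your follow-up remark that the whole segment $[x,y]$ lies in $F^\circ\cap V_1(s,S\setminus H)$ and hence in the same connected component as $x$ repairs it.

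The tree claim, however, has a genuine gap: you prove acyclicity but not connectedness. A forest whose several components each attach to $\partial F$ is consistent with everything you establish (every face touches $\partial F$, no interior cycle), and neither the artificial vertex at infinity (irrelevant when $F$ is bounded, since then there are no unbounded edges) nor ``viewing $\partial F$ as the root'' turns a forest into a tree, because $\partial F$ is not part of the graph structure of $\mathcal{V}_1(F)$. Disconnectedness would correspond exactly to a face of some $s\in S_F$ meeting $\partial F$ along two separate arcs, i.e.\ to two occurrences of $s$ on $\partial F$ bounding the same face; ruling this out is the substantive half of the lemma, since $s$ may occur $\Theta(|S_F|)$ times along $\partial F$. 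The paper closes this with a separation argument that you need in some form: take a point $y\in\partial F$ between two consecutive occurrences of $s$; by the visibility property the ray $r(s,y)$ cannot meet any face of $s$ (a point $x$ of such a face lying on $r(s,y)$ would have its segment $xa_x$ leave the face), so the two occurrences bound faces of $s$ on opposite sides of $r(s,y)$ and are therefore distinct. This yields a bijective correspondence between boundary occurrences and faces, which together with the forest property forces the graph to be a single tree.
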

  \begin{proof}
    Let $D_{i+1}(x)$ be the order-$(i{+}1)$ disk centered at point $x$ in $P$.
    $D_{i+1}(x)$ touches segment $s$ and intersects all segments in $H$. 
    Let $y$ be an arbitrary point along segment $xa_x$.
    Since $y\in F$, disk $D_{i+1}(y)$ must intersect all line segments in $H$.
    Furthermore, since $y$ is closer to $s$ than $x$ and $D_{i+1}(x)$ touches $s$, $D_{i+1}(y)$ must also touch $s$.
    Thus, $y\in P$.
    Since $y$ is taken arbitrarily, the segment $xa_x$ must lie entirely in $P$.

    Since every face of  $\mathcal{V}_1(F)$ must touch $\partial F$, the graph structure $T$ of $\mathcal{V}_1(F)$  must be a tree or a forest.
    To prove that $T$ is a tree it is enough to show that every occurrence of a segment $s\in S_F$ along $\partial F$ corresponds to a distinct face of  $\mathcal{V}_1(F)$.
    To this aim, consider a point $y$ on $\partial F$ between two consecutive occurrences of segment $s$ on $\partial F$.
    Ray $r(s,y)$ cannot intersect any face $P$ of $s$ because for any point $x$ along the portion of $r(s,y)$ in $P$ segment $xa_x$ is not entirely contained in $P$. 
    Thus, if $x$ is in a face of $s$ the above visibility property would not hold for $x$.
    Thus, the two distinct occurrences of $s$ along $\partial F$ must correspond to distinct faces of $s$ at opposite sides of $r(s,y)$.
    Therefore, $T$ must be a tree.
  \end{proof}

  \begin{figure}
    \begin{center}
      \includegraphics[width=4 in, clip, trim = 7mm 21mm 15mm 0mm]{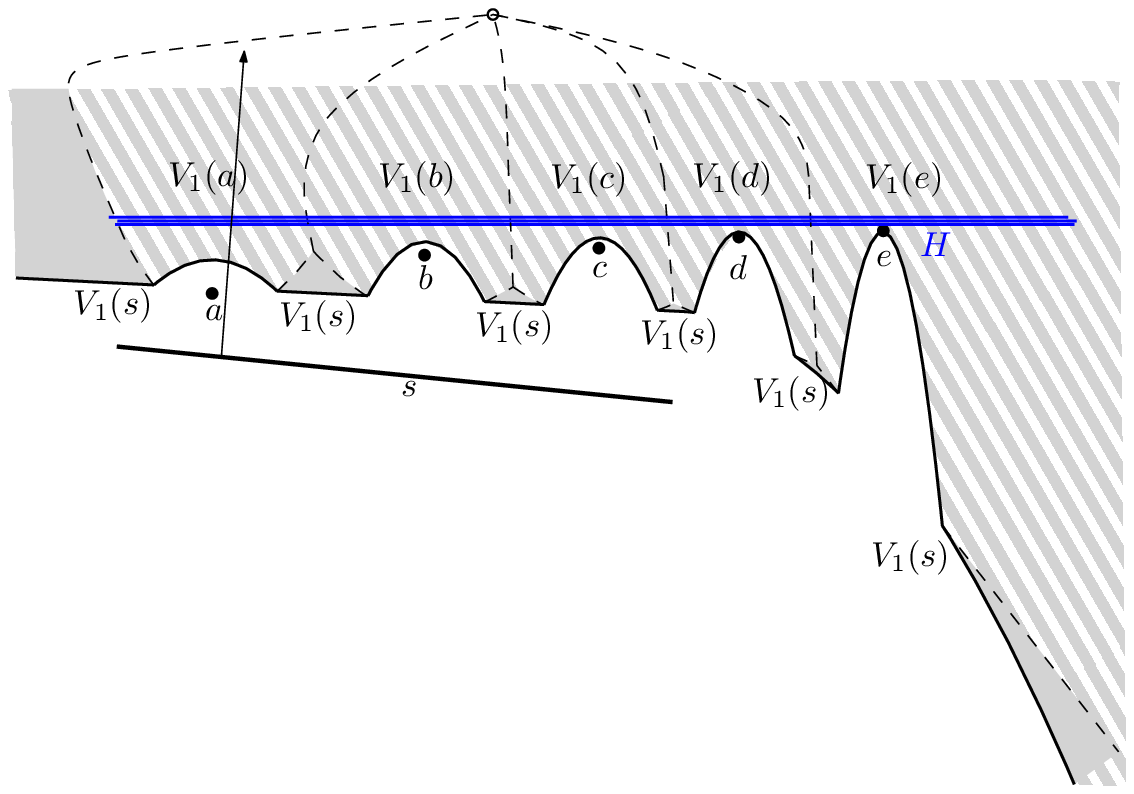}
    \end{center}
    \caption{A face of an order-$i$ Voronoi region induced by set $H$ of line segments, for $i=3$.
    The segment $s$ contributes linear number of subfaces. } 
    \label{fig:iterative_disconnected}
  \end{figure}

  $\mathcal{V}_1(F)$ can be easily computed in $O(|S_F|\log |S_F|+ |\partial F|)$ time  by first computing $\mathcal{V}_1(S_F)$  and then truncating it within the interior of $F$.
  This results in the standard $O(k^2n\log{n})$-time iterative construction.
  The space complexity is proportional to the size of the largest order-$i$ Voronoi diagram, for $i=1,\ldots,k$.
  We conjecture that $\mathcal{V}_1(F)$ can also be computed in linear  time, linear on the complexity of $\partial F$.
  However, this goes beyond the scope of this paper and we leave it as a topic of our 
future  research. 
  For points, 
 $\mathcal{V}_1(S_F)$ can be computed in linear time as claimed in \cite{Aggarwal89}.
  To adapt \cite{Aggarwal89} for the case of line segments, the issue of the multiplicity of sites along $\partial F$ must be resolved efficiently.
  This issue for the farthest line-segment Voronoi diagram, in a randomized linear construction after the cyclic sequence of faces at infinity was computed, was recently resolved in~\cite[journal version]{Papadopoulou12}. 
Simiar techniques are likely to be applicable to this problem as well.

  \section{Extending to the $L_p$ Metric}\label{section:lp}

  The results of Sections~\ref{section:structural_complexity}, \ref{section:abutting_segments}, and~\ref{section:intersecting_segments}, extend naturally to the general $L_p$, $1\leq p\leq \infty$,  metric.

  An $L_p$ disk of infinite radius, $1<p<\infty$, is an ordinary halfplane~\cite{Lee80}, thus, Lemma~\ref{lemma:supp}, Def.~\ref{def:supporting-halfplane} and Corollary~\ref{cor:unb_edge} remain identical in all these metrics.
  As a result, Lemmas~\ref{lemma:total_unb} and~\ref{lemma:total_unbI} also remain identical.  
  On the other hand, Lemmas~\ref{lemma:face_tree}-\ref{lemma:tree} in Section~\ref{section:structural_complexity} never make explicit use of the Euclidean metric, and  they can be easily extended to  $L_p$ for  all $p$, $ 1\leq p \leq \infty$.
  Thus, the formulas of Theorem~\ref{lemma:lee} and the $O(k(n-k))$ complexity bound of Theorem~\ref{theorem:bound} remain the  same in $L_p$ for  $1< p <\infty$.
  Similarly for Lemma~\ref{lemma:total_unbI}, and Theorems~\ref{theorem:leeI},~\ref{theorem:int_bound}, in case of intersecting line segments.
  Thus, all structural properties of the order-$k$ Voronoi diagram in the Euclidean metric remain the same in $L_p$, for  $ 1< p <\infty$.
  
In the remaining of this section, we extend our results to the $L_\infty$ metric (equiv. $L_1$).
  In the $L_\infty$ metric, the equivalent of a supporting halfplane (see Def.~\ref{def:supporting-halfplane}) is a \emph{supporting quadrant}.
  A \emph{quadrant} is the common intersection of two halfplanes, which are defined by axis parallel perpendicular lines.
  Thus, Corollary~\ref{cor:unb_edge} is adapted as follows: 
  There is un unbounded Voronoi edge separating the $L_\infty$ unbounded regions $V_k(H\cup\{s_1\},S)$ and $V_k(H\cup\{s_2\},S)$ if and only if there is an open quadrant that touches $s_1$ and $s_2$, intersects all line segments in $H$, but no line segment in $S\setminus H$. 
  Such a quadrant is called a {\em supporting quadrant}\/ (see e.g., Fig.~\ref{fig:quadrants}).

  In $L_\infty$, a pair of disjoint line segments admits  two supporting quadrants and a pair of intersecting line segments admits four supporting quadrants. 
  Thus,  Lemmas~\ref{lemma:total_unb} and~\ref{lemma:total_unbI} remain valid.
  We now extend Lemma~\ref{lemma:unb} to the $L_\infty$  metric. 

  \begin{figure}
    \begin{center}
      \includegraphics[width=3 in, clip, trim = 0mm 0mm 0mm 0mm]{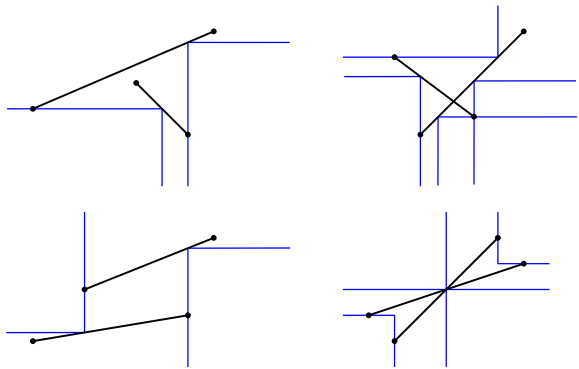}
    \end{center}
    \caption{Examples of supporting quadrants of pairs of line segments in the $L_\infty$ metric.}
    \label{fig:quadrants}
  \end{figure}

  \begin{lemma}\label{lemma:unbLp}
    In $L_\infty$ (resp. $L_1$), for a given set of $n$ line segments,  $\sum_{i=k}^{n-1}U_i=O(n(n-k))$. 
    If segments are disjoint then $\sum_{i=k}^{n-1}U_i=O\left( (n-k)^2 \right)$.
  \end{lemma}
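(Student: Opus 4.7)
The plan is to mirror the proof of Lemma~\ref{lemma:unb}, replacing supporting halfplanes with supporting quadrants and the dual-wedge arrangement with an arrangement of monotone ``staircase'' curves in the space of quadrant corners. By the symmetry of the four quadrant orientations it suffices to run the argument for one type (say NE) and multiply by $4$; the $L_1$ case then follows from the $L_\infty$ case by the standard $45^\circ$ rotation that converts one metric into the other.

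For the NE orientation I would parameterize an open quadrant by its corner $(b,a)\in\mathbb{R}^2$ and, for each segment $s_i$, set $\phi_i(b)=\sup\{\,y:(x,y)\in s_i,\ x>b\,\}$, with the convention $\sup\emptyset=-\infty$. Then the NE-quadrant at $(b,a)$ intersects $s_i$ iff $a<\phi_i(b)$, and $\phi_i$ is monotone non-increasing, consisting of at most three pieces of constant descriptive complexity (a plateau, a linear portion along $s_i$, and a drop to $-\infty$). The quadrant version of Corollary~\ref{cor:unb_edge} then puts the unbounded edges of $\mathcal{V}_j(S)$ supported by NE-quadrants that touch $s_1,s_2$ into bijection with vertices of the arrangement of $\{\phi_i\}$ where $\phi_{s_1}$ and $\phi_{s_2}$ cross and exactly $j-1$ other curves lie strictly above. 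Summed over $j\geq k$, these are precisely the arrangement vertices at the $\leq(n-k-1)$-level counted from below, contributing $O\bigl(g_{\leq n-k-1}(n)\bigr)$.

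For the first bound, I would invoke Sharir's formula
\[
g_{\leq r}(n)=O\!\bigl((r+1)^2\, g_0\bigl(n/(r+1)\bigr)\bigr)
\]
from~\cite{Sharir95}, together with the observation that the lower envelope of $n$ monotone staircase curves of constant complexity has $O(n)$ breakpoints, so $g_0(m)=O(m)$. This yields $g_{\leq r}(n)=O(n(r+1))$; substituting $r=n-k-1$ and summing over the four orientations gives $\sum_{i=k}^{n-1}U_i=O(n(n-k))$.

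The main obstacle is the sharper $O((n-k)^2)$ bound under disjointness. Here the plan is to argue that only $O(n-k)$ of the staircase curves $\phi_i$ can host a supporting NE-quadrant of order at least $k$; since any two constant-complexity $\phi_i$ meet $O(1)$ times, the induced subarrangement contributes only $O((n-k)^2)$ vertices on the relevant levels. This step is $L_\infty$-specific and would rest on a rectilinear dominance argument exploiting that disjoint line segments admit, in each of the four quadrant directions, a partial order under which only $O(n-k)$ elements can realize ``depth'' at least $k$: a segment whose staircase is rectilinearly dominated by more than $k$ others in the NE-direction cannot carry any NE-quadrant of that order. Making this dominance argument precise, and combining the resulting counts across the four quadrant orientations (then invoking the $45^\circ$ rotation for $L_1$), completes the proof.
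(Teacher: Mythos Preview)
Your route for the first bound is sound and essentially parallel to the paper's, just phrased differently: the paper abandons the dual-arrangement picture and applies the Clarkson--Shor framework directly to ``configurations'' (supporting quadrants) to get $N_{\le i}(n)=O\bigl(i^2 N_0(n/i)\bigr)$, then uses $N_0(m)=O(m)$. Your staircase arrangement plus Sharir's $g_{\le r}(n)=O\bigl((r+1)^2 g_0(n/(r+1))\bigr)$ is the same inequality in arrangement language, and your $g_0(m)=O(m)$ plays exactly the role of $N_0(m)=O(m)$. So for the $O(n(n-k))$ bound the two arguments are interchangeable; your version has the small virtue of giving an explicit geometric model (the $(b,a)$-plane of quadrant corners) where the levels are visible.

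The gap is in the disjoint case. The paper does \emph{not} argue via dominance or by restricting to $O(n-k)$ relevant curves; it simply cites the known fact that for non-crossing segments in $L_\infty$ the farthest Voronoi diagram has $O(1)$ unbounded features, i.e., $N_0(m)=O(1)$ (equivalently, in your picture, the lower envelope of the staircase curves has $O(1)$ vertices). Plugging this into the same Clarkson--Shor/Sharir inequality gives $O((n-k)^2)$ immediately. Your proposed dominance step---that only $O(n-k)$ segments can host an NE-quadrant of order $\ge k$---is not obviously true: in a general arrangement of monotone curves every curve can dip into the bottom $n-k$ levels somewhere, and disjointness of the underlying segments does not translate into a pointwise ordering of the $\phi_i$. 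You yourself flag this as the ``main obstacle'' needing to be ``made precise,'' and I do not see a clean way to do so that does not amount to reproving $g_0=O(1)$. The fix, entirely within your own framework, is to replace the dominance argument by the citation $g_0(m)=O(1)$ for non-crossing segments in $L_\infty$ (from \cite{Papadopoulou12,Dey12}) and then apply your Sharir formula; that closes the proof in one line.
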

  \begin{proof}
    The duality transformation in the proof of Lemma~\ref{lemma:unb} is not extendible to the $L_\infty$ metric.
    Instead, we use the abstract framework presented in~\cite{Clarkson87,Clarkson89,Sharir03}.

    Let a supporting quadrant be called  a {\em configuration}.
    A configuration is defined by two line segments $s_1$ and $s_2$ if there is a quadrant whose boundary touches $s_1, s_2$ and its interior does not intersect $s_1, s_2$.
    A configuration is said to be {\em in conflict}\/ with line segment $s'$ if its supporting quadrant does not intersect $s'$.
    The \emph{weight} of a configuration is the number of its conflicts.
    The maximum number of configurations of weight $i$ in a set of $n$ line segments is denoted as $N_i(n)$, and the maximum number of configurations of weight at most $i$ is denoted as $N_{\leq i}(n)$. 
    The configurations with weight $i$ correspond to unbounded Voronoi edges in the order-$(n{-}i{-}1)$ Voronoi diagram, thus $U_{n-i-1}\leq N_i(n)$.
    The configurations with weight 0 correspond to unbounded edges in the farthest Voronoi diagram.
    The Clarkson-Shor abstract framework implies $N_{\leq i}(n)=O\left( i^2N_0(n/i) \right)$.
    Substituting $i=n-k-1$, we derive 
    \begin{equation}
      \sum_{i=k}^{n-1}U_i\leq N_{\leq n-k-1}(n)=O\left( (n-k-1)^2N_0\left( \frac{n}{n-k-1} \right) \right)\label{eq:f05}
    \end{equation}

    In $L_\infty$, $N_0(n)$ is $O(n)$ for arbitrary line segments, and $O(1)$ for non-crossing line segments~
    \cite{Papadopoulou12,Dey12}.
    Substituting these values in Eq.~(\ref{eq:f05}), we derive $\sum_{i=k}^{n-1}U_i=O(n(n-k))$ for arbitrary line segments, and $\sum_{i=k}^{n-1}U_i=O\left( (n-k)^2 \right)$ for non-crossing line segments.
  \end{proof}

  Using  Lemma~\ref{lemma:unbLp} in place of Lemma~\ref{lemma:unb}, we can extend the proofs of Theorems~\ref{theorem:bound} and~\ref{theorem:int_bound} to the $L_\infty$ metric in a straightforward way.   
  For  non-crossing line segments, Lemma~\ref{lemma:unbLp} directly implies a tighter bound.
  The same tighter bound was shown for points in~\cite{Liu11} by a different derivation based on a \emph{Hanan grid}, which is not applicable to line segments.
  We summarize in the following theorem.

  \begin{theorem}\label{theorem:Linfinity}
    The structural complexity of order-$k$ Voronoi diagram of $n$ arbitrary  line segments, with  $I$ intersections, in the $L_p$ metric, $1\leq p\leq \infty$, is:
    \begin{alignat*}{1}
	O\left(k(n-k)+I\right) \text{,}&\quad\text{for $1\leq k< n/2$;}\\
	O\left(k(n-k)\right) \text{,}&\quad\text{for $n/2\leq k\leq n-1$;}\\
	O\left((n-k)^2\right) \text{,}&\quad\text{for $n/2\leq k\leq n-1$, non-crossing segments 
	and $p=1,\infty$.}
     \end{alignat*}
  \end{theorem}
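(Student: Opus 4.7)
The plan is to reduce the theorem to a direct substitution into the closed-form face count already established in Theorem~\ref{theorem:leeI}, which holds in every $L_p$ metric, combined with the appropriate bound on $\sum_{i=k}^{n-1} U_i$ furnished by either Lemma~\ref{lemma:unb} or Lemma~\ref{lemma:unbLp}. First I would verify metric-invariance of the recursive face count: the derivations of Theorems~\ref{lemma:lee} and~\ref{theorem:leeI} invoke only the halfplane/quadrant characterization of unbounded Voronoi edges (Corollary~\ref{cor:unb_edge}, which holds in $L_p$ with supporting halfplanes replaced by supporting quadrants for $p=1,\infty$), Euler's formula on the planar subdivision, and the visibility and tree properties of Lemmas~\ref{lemma:visibility_property}--\ref{lemma:tree}. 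None of these ingredients uses anything specific about the Euclidean metric beyond the fact that $L_p$ disks are convex and grow monotonically with their radius; the same visibility argument therefore goes through with $L_p$ disks, so Eq.~(\ref{eq:app_leeI}) and its dual Eq.~(\ref{eq:app_dual_leeI}) are valid in every $L_p$, $1\leq p\leq\infty$.

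I would then handle the three regimes separately. For $1\leq k<n/2$ I drop the non-negative term $\sum_{i=1}^{k-1}U_i$ from Eq.~(\ref{eq:app_leeI}), obtaining $F_k\leq 2kn-k^2-n+1+2I = k(2n-k)-n+1+2I$. Since $k\leq n/2$ implies $k(2n-k)\leq 4k(n-k)$, this yields $F_k = O(k(n-k)+I)$ in every $L_p$.

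For $n/2\leq k\leq n-1$ I use the dual form Eq.~(\ref{eq:app_dual_leeI}), which immediately gives $F_k\leq \sum_{i=k}^{n-1} U_i$. When $1<p<\infty$ the proof of Lemma~\ref{lemma:unb} extends verbatim, because the point-line duality and the wedge-arrangement analysis depend only on the supporting-halfplane certificate of Corollary~\ref{cor:unb_edge} and not on how Voronoi distances are measured; thus $\sum_{i=k}^{n-1}U_i = O(n(n-k)) = O(k(n-k))$, since $n=O(k)$ in this range. For $p=1,\infty$ I substitute Lemma~\ref{lemma:unbLp}, which delivers the same $O(n(n-k))$ bound for arbitrary segments and the sharper $O((n-k)^2)$ bound for non-crossing segments; plugging these into the dual formula produces the second and third bullets of the statement respectively.

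The only conceptual obstacle is the unbounded-edge count in $L_1$ and $L_\infty$, where the point-line duality underlying Lemma~\ref{lemma:unb} is no longer available; this is exactly what Lemma~\ref{lemma:unbLp} has already resolved via the Clarkson--Shor framework applied to supporting quadrants, so the remaining work in proving Theorem~\ref{theorem:Linfinity} is essentially bookkeeping: feed the correct bound on $\sum_{i=k}^{n-1} U_i$ into Eq.~(\ref{eq:app_dual_leeI}) case by case.
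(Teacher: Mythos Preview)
Your proposal is correct and follows essentially the same approach as the paper: verify that the face-count formulas of Theorem~\ref{theorem:leeI} (and its dual) are metric-independent via the visibility/tree lemmas and Euler's formula, then feed in the appropriate bound on $\sum_{i=k}^{n-1}U_i$ --- Lemma~\ref{lemma:unb} for $1<p<\infty$ and Lemma~\ref{lemma:unbLp} for $p\in\{1,\infty\}$ --- to obtain each of the three regimes. The paper's argument is exactly this substitution, and your identification of the $L_1/L_\infty$ unbounded-edge count as the only non-trivial ingredient (handled by Lemma~\ref{lemma:unbLp} via Clarkson--Shor) matches the paper's own emphasis.
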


  \section{Concluding Remarks}\label{section:conclusion}
  The higher-order Voronoi diagram of line segments had been surprisingly ignored in the computational geometry literature. 
  In this paper, we analyzed its structural properties and showed that despite the presence of disconnected Voronoi regions, the combinatorial complexity remains $O(k(n-k))$ (assuming non-crossing line segments). 
  For intersecting line segments, the influence  of intersections grows weaker as $k$ increases and the complexity of the diagram remains $O(k(n-k))$ for $k\geq n/2$.
  The case of a planar straight line graph required to augment the definition of an order-$k$ diagram to include non-disjoint sites. 
  The diagram can be constructed in $O(k^2n\log{n})$ time for non-crossing segments using the standard  iterative construction. 
  We conjecture that it can also be computed in $O(k^2n+n\log{n})$ time, as in the case of points.
  This requires a linear time algorithm to compute the order-$(k{+}1)$ subdivision ($\mathcal{V}_1(F)$) within an order-$k$ face $F$. 
  We showed that $\mathcal{V}_1(F)$  is a tree structure, however,  due to the presence of disconnected regions in the order-$k$ diagram, the region of a segment in $\mathcal{V}_1(F)$ may 
consist of multiple disjoint faces.
  These issues create complications to the linear construction of $\mathcal{V}_1(F)$, which we plan to investigate in future research.
  The iterative approach is efficient  for small values of $k$, and it is valuable to applications where lower orders are required to be computed in any case.
  In future research, we also plan to investigate algorithmic techniques appropriate for larger values of~$k$.

  \section{Acknowledgements}
  The authors are grateful to Gill Barequet for reading through an earlier draft and 
  providing useful comments.

  This work was supported in part by the Swiss National Science Foundation grant 200021-127137  and the ESF EUROCORES program EuroGIGA / VORONOI, SNF 20GG21-134355.

  \end{document}